\documentclass[a4paper,11pt]{article}
\usepackage[margin=1in]{geometry}

\usepackage{tabularx}
\usepackage{epsfig}
\usepackage{xfrac}
\newcommand{\vccc}{\textsc{Max $k$-VC}\xspace}
\newcommand{\kddfree}{{$K_{d,d}$-free}\xspace}
\newcommand{\cR}{\mathcal{R}}

\newcommand{\hidestuff}[1]{}
\newcommand{\cl}{\mathsf{cl}}
\newcommand{\rank}{\mathsf{rank}}

\usepackage{hyperref}
\hypersetup{
    colorlinks=true, 
    linktoc=all,     
    linkcolor=blue,
    citecolor=red,  
}






\usepackage{tcolorbox}
\usepackage{tikz}

 

\title{Satisfiability to Coverage in Presence of Fairness, Matroid, and Global Constraints}

\author{
	Tanmay Inamdar\thanks{
		Indian Institute of Technology Jodhpur, Jodhpur, India.}
	\and
	Pallavi Jain\addtocounter{footnote}{-1}\footnotemark{}
	\and
	Daniel Lokshtanov\footnote{University of California Santa Barbara, United States}
	\and
	Abhishek Sahu\footnote{National Institute of Science Education and Research (NISER), Bhubaneswar, India.}
	\and
	Saket Saurabh\footnote{Insitute of Mathematical Sciences, Chennai, India, and University of Bergen, Bergen, Norway.}
	\and
	Anannya Upasana\footnote{Institute of Mathematical Sciences, Chennai, India.
	\\S.~Saurabh acknowledges support from the European Research Council (ERC) under the European Union’s Horizon 2020 research and innovation programme (grant agreement No. 819416) Swarnajayanti Fellowship (No. DST/SJF/MSA01/2017-18). T.~Inamdar acknowledges support from the European Research Council (ERC) under the European Union’s Horizon 2020 research and innovation programme (grant agreement No. 819416). P.~Jain acknowledges support from SERB-SUPRA grant number S/SERB/PJ/20220047 and IITJ Seed Grant grant I/SEED/PJ/20210119.} 
}

\date{}

\pagenumbering{gobble}


\date{}

\usepackage{amsmath,array,bm}
\usepackage{amsmath,amsfonts}
\usepackage{amsthm}
\usepackage{soul}
\usepackage{amssymb,latexsym}
\usepackage[ruled]{algorithm}
\usepackage{subcaption}
\usepackage{algorithmicx}
\usepackage[]{algpseudocode}
\usepackage{bbm}
\usepackage{mathrsfs}
\usepackage{graphicx}
\usepackage{caption}
\usepackage{enumitem}
\usepackage{eufrak}
\usepackage{float}
\usepackage{thmtools}
\usepackage{hyperref}
\usepackage{mathtools}
\usepackage{cleveref}
\usepackage{bussproofs}
\usepackage[square,sort,comma,numbers]{natbib}
\usepackage{tikz}
\usepackage{stmaryrd}
\usepackage{xspace}
\usepackage{xcolor}
\usepackage{marginnote}
\usepackage{tipa}
\usepackage{mdframed}
\usepackage[export]{adjustbox}[2011/08/13]
\usepackage{multirow}
\usepackage{relsize}
\usepackage{hhline}
\usepackage{todonotes}
 \usepackage{tcolorbox} 
\usepackage{nicefrac}

\crefname{invar}{invariant}{invariants}
\crefname{ineq}{inequality}{inequalities}
\crefname{constr}{constraint}{constraints}
\crefname{tbl}{table}{tables}
\crefname{lem}{lemma}{lemmata}
\crefname{lemma}{lemma}{lemmata}
\crefname{cond}{condition}{conditions}

\newcommand{\red}[1]{{\color{red} #1}}
\newcommand{\blue}[1]{{\color{blue} #1}}
\newcommand{\lr}[1]{\left( #1\right)}
\newcommand{\LR}[1]{\left\{ #1\right\}}

\newcommand{\Oh}{\mathcal{O}}

\newcommand{\cU}{\mathcal{U}}

\newcommand{\cH}{\mathcal{H}}
\newcommand{\cI}{\mathcal{I}}
\newcommand{\cF}{\mathcal{F}}

\newcommand{\bluecomment}[1]{\Comment{ \blue{\small\texttt{#1}}}}

\newcommand{\fs}{\text{\large $\mathfrak{s}$}}

\newcommand{\ttt}{\widetilde{t}}

\newcommand{\FPT}{\textsf{\upshape FPT}\xspace}
\newcommand{\FPTAS}{\textsf{\upshape FPT-AS}\xspace}

\newcommand{\OPT}{\textsf{OPT}}

\newcommand{\cnf}{\textsf{CNF}}
\newcommand{\var}[1]{{\sf var}(#1)}
\newcommand{\cla}[1]{{\sf cla}(#1)}

\newcommand{\pbdsfull}{\textsc{Partition Color Constrained Dominating Set}\xspace}
\newcommand{\pbds}{\textsc{PCCDS}\xspace}

\newcommand{\pmbds}{\textsc{PMCCDS}\xspace}

\newcommand{\bucketing}{\textsc{Bucketing}\xspace}
\newcommand{\bucketingsmall}{\textsc{Bucketing(small)}\xspace}
\newcommand{\bucketinglarge}{\textsc{Bucketing(large)}\xspace}

\newcommand{\cB}{\mathcal{B}}

\newtheorem{theorem}{Theorem}[section]
\newtheorem{lemma}[theorem]{Lemma}
\newtheorem{claim}[theorem]{Claim}
\newtheorem{corollary}[theorem]{Corollary}
\newtheorem{definition}[theorem]{Definition}
\newtheorem{observation}[theorem]{Observation}
\newtheorem{proposition}[theorem]{Proposition}

\newtheorem{remark}[theorem]{Remark}

\newcommand{\mmaxcov}{\textsc{M-MaxCov}\xspace}
\newcommand{\mmaxsat}{\textsc{M-MaxSAT}\xspace}
\newcommand{\fmaxcov}{\textsc{F-MaxCov}\xspace}
\newcommand{\fmaxsat}{\textsc{F-MaxSAT}\xspace}
\newcommand{\mfmaxcov}{\textsc{(M, F)-MaxCov}\xspace}
\newcommand{\mfmaxsat}{\textsc{(M, F)-MaxSAT}\xspace}

\newcommand{\OO}{{\mathcal O}}
\newcommand{\satcc}{\textsc{CC-MaxSat}\xspace}

\newcommand{\maxcov}{\textup{\textsc{Maximum Coverage}}\xspace}

\newcommand{\psc}{\textup{\textsc{Partial Set Cover}}\xspace}
\newcommand{\partsc}{\textup{\textsc{Partition Maximum Coverage}}\xspace}
\newcommand{\yin}{\textup{\textsf{Yes}}-instance\xspace}
\newcommand{\nin}{\textup{\textsf{No}}-instance\xspace}

\newcommand{\cM}{\mathcal{M}}

\newcommand{\rep}[1]{\displaystyle\subseteq^{#1}_{\text{rep}}}

\begin{document}
\maketitle

\begin{abstract}
In the {\sc MaxSAT} with Cardinality Constraint problem ({\sc CC-MaxSAT}), we are given a CNF-formula $\Phi$, and a positive integer $k$, and the goal is to find an assignment $\beta$ with at most $k$ variables set to true (also called a weight $k$-assignment) such that the number of clauses satisfied by $\beta$ is maximized. \maxcov  can be seen as a special case of \satcc, where the formula $\Phi$ is monotone, i.e., does not contain any negative literals. \satcc and \maxcov are extremely well-studied problems in the approximation algorithms as well as parameterized complexity literature.

Our first conceptual contribution is that \satcc and \maxcov  are equivalent to each other in the context of 
FPT-Approximation parameterized by $k$ (here, the approximation is in terms of number of clauses satisfied/elements covered). In particular, we give a randomized reduction from  \satcc to \maxcov running in time $\Oh(1/\epsilon)^{k} \cdot (m+n)^{\Oh(1)}$ that preserves the approximation guarantee up to a factor of $1-\epsilon$. Furthermore, this reduction also works in the presence of ``fairness'' constraints on the satisfied clauses, as well as matroid constraints on the set of variables that are assigned $\mathsf{true}$. Here, the ``fairness'' constraints are modeled by partitioning the clauses of the formula $\Phi$ into $r$ different colors, and the goal is to find an assignment that satisfies at least $t_j$ clauses of each color $1 \le j \le r$.

Armed with this reduction, we focus on designing FPT-Approximation schemes (\FPTAS{}es) for \maxcov and its generalizations. Our algorithms are based on a novel combination of a variety of ideas, including a carefully designed probability distribution that exploits sparse coverage functions.
These algorithms substantially generalize the results in Jain et al.~[SODA 2023] for  \satcc and \maxcov  for $K_{d,d}$-free set systems (i.e., no $d$ sets share $d$ elements), as well as a recent \FPTAS for {\sc Matroid Constrained Maximum Coverage} by Sellier~[ESA 2023] for frequency-$d$ set systems.
\end{abstract}


%
\pagenumbering{arabic}


\section{Introduction} \label{sec:intro}



Two problems that have gained considerable attention from the perspective of Parameterized Approximation~\cite{DBLP:journals/algorithms/FeldmannSLM20} are  the classical {\sc MaxSAT} with cardinality constraint (\satcc) problem and its monotone version,   the \maxcov  problem. In the  \satcc problem, we are given a CNF-formula $\Phi$ over $m$ clauses and $n$ variables, and a positive integer $k$, and the objective is to find a weight $k$ assignment that maximizes the number of satisfied clauses. We use $\var{\Phi}$ and $\cla{\Phi}$ to denote the set of variables and clauses in $\Phi$,  respectively. An {\em assignment}  to a  CNF-formula $\Phi$ is a function $\beta :  \var{\Phi}\rightarrow \{0,1\}$. The {\em weight} of an assignment $\beta$ is the number of variables that have been assigned $1$. 

The  classical  \maxcov  problem is a special case of the \satcc problem.  Indeed, it is a monotone variant of \satcc, where negated literals are not allowed. 
An input to the \maxcov problem consists of a family of $m$ sets, $\cal F$, over a universe $U$ of size $n$, and an integer $k$, and the goal is to find a subfamily ${\cal F}' \subseteq {\cal F}$ of size $k$ such that the number of elements {\em covered} (belongs to some set in ${\cal F}'$)  by ${\cal F}'$ is maximized. Observe that when the goal is to cover every element in $U$, the \maxcov problem corresponds to {\sc Set Cover}.  
A natural question that has guided research on these problems is whether \satcc or \maxcov  admits an algorithm with running time $f(k)n^{\OO(1)}$? That is, whether \satcc or \maxcov  is fixed parameter tractable (\FPT) with solution size $k$? Unfortunately, these problems are W[2]-hard~\cite{DBLP:books/sp/CyganFKLMPPS15}. That is, we do not expect these problems to  admit an algorithm with running time $f(k)n^{\OO(1)}$. This negative result sets the platform for studying these problems from the viewpoint of  Parameterized Approximation~\cite{DBLP:journals/algorithms/FeldmannSLM20}. It is well known that both  \satcc and \maxcov admit a polynomial time $(1-\frac{1}{e})$-approximation algorithm~\cite{DBLP:journals/algorithmica/Sviridenko01}, which is in fact optimal.~\cite{DBLP:conf/stoc/Feige96}. So, in the realm of  Parameterized Approximation, we ask  does  there exist an $\epsilon >0$, such that \satcc or \maxcov admits an approximation algorithm with factor $(1-\frac{1}{e} +\epsilon)$ and runs in time 
$f(k,\epsilon)n^{\OO(1)}$. While there has been a lot of work on   \maxcov~\cite{DBLP:conf/soda/0001KPSS0U23,DBLP:journals/corr/abs-1810-03792,DBLP:journals/jair/SkowronF17,DBLP:journals/tcs/HuangS23,DBLP:conf/esa/Sellier23}, Jain et al.~\cite{DBLP:conf/soda/0001KPSS0U23} studied \satcc and designed a standalone algorithm for the problem. Our first result, a bit of a surprise to us,  shows that in the world of Parameterized Approximation \satcc and  \maxcov are {\em ``equivalent". }

\begin{theorem}[Informal]
	\label{thmintro:reduction}
	Let $\epsilon >0$. There is a polynomial time randomized algorithm that given an instance $(\Phi, k)$  of \satcc   produces an instance $(U, {\cal F}, k)$ of  \maxcov such that the following holds with probability $\frac{1}{2}(\frac{\epsilon}{2})^{k}$. Given a 
	$(1-\epsilon) {\sf OPT_{cov}}$ solution to $(U, {\cal F}, k)$ we can obtain a $(1-\epsilon) {\sf OPT_{sat}}$ solution to 
	$(\Phi, k)$ in polynomial time. Here, ${\sf OPT_{cov}}$ (${\sf OPT_{sat}}$) denotes the value of the maximum number of covered elements (satisfied clauses) by a $k$ sized family of subsets (weight $k$ assignment).  
\end{theorem}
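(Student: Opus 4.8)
The plan is to convert a CNF formula $\Phi$ into a monotone set system by ``guessing'' the behaviour of the optimal assignment on the variables it sets to true. Let $\beta^*$ be an optimal weight-$k$ assignment for $(\Phi, k)$, and let $T^* = \beta^{*-1}(1)$, so $|T^*| \le k$. The key structural observation is that once we know $T^*$, a clause is satisfied by $\beta^*$ if and only if it either contains a positive literal $x$ with $x \in T^*$, or contains a negative literal $\neg y$ with $y \notin T^*$. The second kind of clause is ``automatically'' satisfied by $\beta^*$ no matter what — so the real difficulty is the clauses satisfied only through their positive literals in $T^*$.

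First I would set up the reduction as follows. We do not know $T^*$, but we will build the \maxcov instance so that, with the claimed probability, a $k$-subfamily corresponding to $T^*$ is feasible and has the right coverage value. Concretely: for each variable $x$, create a set $S_x$ consisting of those clauses that $x$ satisfies \emph{positively}, i.e.\ clauses in which $x$ appears as a positive literal. We would like the universe to be $\cla{\Phi}$, but a clause containing a negative literal should be treated carefully: such a clause is satisfied by $\beta^*$ precisely when at least one of its negated variables is outside $T^*$. The trick (this is where randomization enters) is to \emph{randomly delete} each variable with probability $1-\tfrac{\epsilon}{2}$ (equivalently, keep each variable independently with probability $\tfrac{\epsilon}{2}$), so that with probability exactly $(\tfrac{\epsilon}{2})^{|T^*|} \ge (\tfrac{\epsilon}{2})^{k}$ we keep every variable of $T^*$ and delete at least... — more precisely, we want to additionally ensure, for clauses with negative literals, that at least one negated variable survives the deletion to certify the clause, or handle such clauses by a separate gadget. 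I would handle a clause $C$ with a negative literal as follows: if \emph{some} negated variable of $C$ is deleted, then $C$ becomes permanently satisfiable ``for free'' and we can place it in a set $S_x$ for \emph{every} surviving positive-literal variable $x$ of $C$, or simply make it pre-covered; this is exactly the event that happens with good probability when $T^*$ survives and a deleted variable happens to lie in $C \setminus (\text{positive literals})$. The factor $\tfrac12$ absorbs the probability that the restricted formula still certifies enough clauses; one sets up the distribution so that conditioned on $T^*$ surviving, the expected coverage of the family $\{S_x : x \in T^*\}$ is at least ${\sf OPT_{sat}}$, and a concentration / Markov-type argument on the complementary slack gives the constant $\tfrac12$.

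Then I would argue correctness in two directions. For soundness: given any $k$-subfamily $\cF' = \{S_{x_1}, \dots, S_{x_k}\}$ of the produced \maxcov instance covering at least $(1-\epsilon){\sf OPT_{cov}}$ elements, define $\beta$ to set $x_1, \dots, x_k$ to true and everything else to false; every covered clause is then satisfied by $\beta$ (positively, or because it was pre-covered via a deleted negated variable, in which case $\beta$ sets that variable false and satisfies it), so $\beta$ satisfies at least $(1-\epsilon){\sf OPT_{cov}} \ge (1-\epsilon){\sf OPT_{sat}}$ clauses, using ${\sf OPT_{cov}} \ge {\sf OPT_{sat}}$ which holds on the event that $T^*$ survived. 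For completeness (showing ${\sf OPT_{cov}} \ge {\sf OPT_{sat}}$ with the stated probability): condition on the event $\mathcal{A}$ that all of $T^*$ survives and, for every clause satisfied by $\beta^*$ only through negated literals, at least one of its negated variables is deleted — on $\mathcal{A}$ the family $\{S_x : x \in T^*\}$ covers every clause satisfied by $\beta^*$. Bounding $\Pr[\mathcal{A}]$ from below by $\tfrac12 (\tfrac{\epsilon}{2})^{k}$ is the crux: $\Pr[T^* \text{ survives}] = (\tfrac{\epsilon}{2})^{|T^*|}$, and conditioned on that, each negatively-satisfied clause independently fails to be certified with probability at most $(\tfrac{\epsilon}{2})^{(\text{\# its negated vars})} \le \tfrac{\epsilon}{2}$ per variable — here one must be slightly more careful, since a clause could have all its negated variables small in number; the cleanest fix is to only delete variables \emph{not} in some fixed maximal independent-ish structure, or to note that $\beta^*$ itself already sets those variables false so we only need deletion for variables that $\beta^*$ sets true, i.e.\ we condition differently. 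I expect the main obstacle to be exactly this: getting the probability bookkeeping to simultaneously (i) keep all of $T^*$, (ii) certify all negatively-satisfied clauses, and (iii) not lose more than a constant factor — and in the full version this is presumably where the careful choice of deletion probability $1 - \tfrac{\epsilon}{2}$ (rather than something cruder) does the work, together with the observation that clauses needing certification only involve variables \emph{outside} $T^*$, whose deletion is ``free''. The extension to fairness (colored clauses, thresholds $t_j$) and matroid constraints on the true-set is then immediate, since the reduction acts identically on clauses/variables and the bijection $T \leftrightarrow \{S_x : x \in T\}$ transports both constraint types verbatim.
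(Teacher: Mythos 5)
Your plan is essentially the paper's proof: set each variable to true independently with probability $p=\epsilon/2$, let the good event $\mathcal G$ be that $T^*=\beta^{*-1}(1)$ is contained in the random true-set (probability $p^{|T^*|}\ge(\epsilon/2)^k$), discard all clauses that the random assignment $\Psi$ satisfies via a negative literal, strip the remaining clauses of negative literals and of variables $\Psi$ sets to $0$, and read off a monotone formula, i.e.\ a \maxcov instance. The ``main obstacle'' you flag at the end is actually dissolved by the observation you make in the same breath, and you should not need a per-clause product over negated variables or any separate gadget. Concretely: a clause $C$ satisfied negatively by $\beta^*$ contains some negated variable $y$ with $\beta^*(y)=0$, hence $y\notin T^*$, so conditioning on $\mathcal G$ leaves $y$ unconstrained and $\Pr[\Psi(y)=0\mid\mathcal G]=1-p$; therefore each such clause fails to be negatively satisfied by $\Psi$ with probability at most $p$, the expected number of ``lost'' clauses is at most $p\,|N^*|$, and a single application of Markov's inequality gives the extra factor $\tfrac12$. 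The paper also does not mark these clauses as pre-covered as you suggest; it removes them from the universe and accounts for them additively when converting a $(1-\epsilon)$-approximate cover on the residual instance back to an assignment for $\Phi$, which is slightly cleaner because the $(1-\epsilon)$ factor then does not touch the ``free'' (negatively satisfied) clauses. Finally, the extensions to fairness and matroids are not quite automatic as you claim: for $r$ colors the paper sets $p=\epsilon/(2r)$ and union-bounds the Markov step over the $r$ colors, so the success probability degrades to $\tfrac12(\epsilon/2r)^k$; the matroid case does carry over verbatim since the reduction induces a bijection between surviving true-variables and chosen sets.
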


Theorem~\ref{thmintro:reduction} allows us to focus on \maxcov, rather than \satcc, at the expense of $\epsilon^{-\Oh(k)}$ in the running time. Further, there is no assumption on the input formulas in Theorem~\ref{thmintro:reduction}.  This reduction immediately implies faster algorithms for \satcc by utilizing the known good algorithms for \maxcov~\cite{DBLP:conf/soda/0001KPSS0U23,DBLP:journals/corr/abs-1810-03792,DBLP:journals/jair/SkowronF17,DBLP:journals/tcs/HuangS23,DBLP:conf/esa/Sellier23}. The \maxcov problem has been generalized in several directions by adding either fairness constraints or asking our solution to be an independent set of a matroid. In what follows, we take a closer look at progresses on \maxcov and its generalizations and then design algorithms that generalize and unify all the known results for \satcc and \maxcov.  

\subsection{Tractability Boundaries for \maxcov}
Cohen-Addad et al.~\cite{DBLP:conf/icalp/Cohen-AddadG0LL19} studied \maxcov and showed that there is no $\epsilon >0$, such that \maxcov admits an approximation algorithm with factor $(1-\frac{1}{e} +\epsilon)$ and runs in time $f(k,\epsilon) (m+n)^{\OO(1)}$ \footnote{Throughout the paper, the approximation factor will refer to the number of elements covered/number of satisfied clauses, unless explicitly stated otherwise}. Later, this was also studied by Manurangsi~\cite{DBLP:journals/corr/abs-1810-03792}, who obtained the following  strengthening over~\cite{DBLP:conf/icalp/Cohen-AddadG0LL19}: for any constant $\epsilon >0$ and any function $h$, assuming Gap-ETH, no $h(k)(n+m)^{o(k)}$ time algorithm can approximate \maxcov with $n$ elements and $m$ sets to within a factor $(1-\frac{1}{e} +\epsilon)$, even  with  a  promise  that  there 
exist $k$ sets that fully cover the whole universe.  This negative result sets the contour for possible positive results. 
In particular, if we hope for an \FPT algorithm that improves over a factor $(1-\frac{1}{e})$ then we must assume some additional structure on the input families. This automatically leads to the families wherein each set has bounded size, or each element appears in bounded sets which was considered earlier. 

Skowron and  Faliszewski~\cite{DBLP:journals/jair/SkowronF17}  showed that, if we are working on set families, such that each element in $U$ appears in at most $p$ sets, then there exists an algorithm, that given an $\epsilon >0$, runs in time $(\frac{p}{\epsilon})^{\OO(k)}n^{\OO(1)}$ and returns a subfamily ${\cal F}'$ of size $k$ that is a $(1-\epsilon)$-approximation. These kind of \FPT-approximation algorithms are called \FPT-approximation Schemes (\FPTAS{}es). For $p=2$, Manurangsi \cite{DBLP:journals/corr/abs-1810-03792} independently obtained a similar result.
Jain et al.~\cite{DBLP:conf/soda/0001KPSS0U23}  generalized these two settings by looking at $K_{d,d}$-free set systems (i.e., no $d$ sets share $d$ elements).  They also considered  $K_{d,d}$-free formulas (that is, the clause-variable incidence bipartite graph of the formula excludes $K_{d,d}$ as an induced subgraph). They showed that for every $\epsilon>0$, there exists an algorithm for $K_{d,d}$-free formulas with approximation ratio $(1-\epsilon)$ and running in time  $2^{\mathcal{O}((\frac{dk}{\epsilon})^d)}  (n+m)^{\mathcal{O}(1)}$. For,  \maxcov on  $K_{d,d}$-free set families, they obtain an \FPTAS with running time $(\frac{dk}{\epsilon})^{\mathcal{O}(dk)}n^{\mathcal{O}(1)}$. Using these results together with Theorem~\ref{thmintro:reduction} we get the following. 

\begin{corollary}
	\label{introcor:satcc}
	Let $\epsilon >0$. Then, \satcc admits a randomized  \FPTAS with running time $(\frac{dk}{\epsilon})^{\mathcal{O}(dk)}n^{\mathcal{O}(1)}$ on $K_{d,d}$-free formulas. Furthermore, if the size of clauses is bounded by $p$ or every variable appears in at most $p$ clauses then \satcc admits  randomized \FPTAS with running time $(\frac{p}{\epsilon})^{\OO(k)}n^{\OO(1)}$. Both results hold with constant probability. 
\end{corollary}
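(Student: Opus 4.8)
The plan is to chain Theorem~\ref{thmintro:reduction} with the known \FPTAS{}es for \maxcov on restricted set systems, using the reduction essentially as a black box; the two points needing attention are that the structural restriction on $\Phi$ must be inherited by the \maxcov instance that the reduction outputs, and that the (tiny) success probability of the reduction has to be boosted.

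Given $\epsilon>0$ and a \kddfree instance $(\Phi,k)$ of \satcc, first run the polynomial-time reduction of Theorem~\ref{thmintro:reduction} with parameter $\epsilon$ to obtain a \maxcov instance $(U,\cF,k)$; with probability $p_0:=\tfrac12(\tfrac{\epsilon}{2})^k$ this instance is ``good'', meaning any size-$k$ subfamily covering at least $(1-\epsilon){\sf OPT_{cov}}$ elements can be turned, in polynomial time, into a weight-$k$ assignment of $\Phi$ satisfying at least $(1-\epsilon){\sf OPT_{sat}}$ clauses. Granting for the moment that $(U,\cF)$ is $K_{d',d'}$-free with $d'=\Oh(d)$ (and, for the ``furthermore'' part, has set size $\Oh(p)$, resp.\ element frequency $\Oh(p)$), run on $(U,\cF,k)$ the $(1-\epsilon)$-approximation \FPTAS of Jain et al.~\cite{DBLP:conf/soda/0001KPSS0U23} for \kddfree set systems, whose running time $(\tfrac{d'k}{\epsilon})^{\Oh(d'k)}(n+m)^{\Oh(1)}$ equals $(\tfrac{dk}{\epsilon})^{\Oh(dk)}(n+m)^{\Oh(1)}$ (resp.\ the \FPTAS of Skowron and Faliszewski~\cite{DBLP:journals/jair/SkowronF17} / Manurangsi~\cite{DBLP:journals/corr/abs-1810-03792}, of running time $(\tfrac{p}{\epsilon})^{\Oh(k)}(n+m)^{\Oh(1)}$), and pull the resulting subfamily back through the reduction to a weight-$k$ assignment $\beta$, whose number of satisfied clauses we can compute exactly. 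Repeating this whole procedure $\Theta(1/p_0)=\Theta((2/\epsilon)^k)$ times independently and returning the $\beta$ that satisfies the most clauses: with constant probability at least one run is ``good'' and hence outputs a genuine $(1-\epsilon)$-approximation. The $(2/\epsilon)^k$ overhead is absorbed into $(\tfrac{dk}{\epsilon})^{\Oh(dk)}$ and $(\tfrac{p}{\epsilon})^{\Oh(k)}$ respectively, giving the stated running times. (If the \maxcov subroutine is itself randomized with constant success probability, absorb that probability into $p_0$; this changes the repetition count by only a constant factor.)

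The step I expect to be the main obstacle is the structural inheritance claim, which forces one to open up the construction behind Theorem~\ref{thmintro:reduction} rather than treat it purely as a black box: one must check that the elements of $U$ are the clauses of $\Phi$ plus a bounded number of gadget elements per clause (or per colour class), that the sets of $\cF$ are the variables of $\Phi$ plus a bounded number of gadget sets, and that no gadget can create an induced $K_{d',d'}$ in the incidence structure of $(U,\cF)$ — or a set/element of size/frequency exceeding $\Oh(p)$ — without a $K_{d,d}$ (resp.\ a clause of size $>p$ or a variable of frequency $>p$) already being present in $\Phi$. The only other bookkeeping point is that the reduction may inflate $|U|$ and $|\cF|$ by a $\mathrm{poly}(1/\epsilon)$ factor, which is harmless because the \FPTAS running times are polynomial in the instance size while $1/\epsilon$ is already a parameter. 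With the inheritance lemma established, the chaining described above is routine and proves the corollary.
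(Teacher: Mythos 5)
Your high-level chaining — run the reduction, apply a known \maxcov \FPTAS, pull the solution back, repeat $\Theta\bigl((2/\epsilon)^k\bigr)$ times to boost the $\tfrac12(\epsilon/2)^k$ success probability — matches the paper's intended argument. Where your proposal goes astray is in the mental model of the reduction itself, which leads you to treat the ``structural inheritance'' step as a real obstacle requiring a careful gadget analysis and a slack $d' = \Oh(d)$. In the actual construction (Algorithm~\ref{alg:red}) there are no gadgets and no inflation at all: the reduction only \emph{deletes} clauses (those satisfied negatively by the random assignment $\Psi$) and \emph{prunes} the surviving clauses by removing negative literals and variables that $\Psi$ sets to $0$. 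The output universe $\cU$ is a subset of $\cla{\Phi}$, each set $f_v \in \cF$ corresponds to a surviving variable $v$, and $f_v$ contains exactly the surviving clauses in which $v$ occurs (now necessarily positively). Consequently the set-system incidence graph of $(\cU,\cF)$ is an honest subgraph of the clause–variable incidence graph of $\Phi$: fewer vertices, fewer edges. This is precisely what the paper records after Theorem~\ref{thm:red-max-cov}: ``if the variable–clause incidence graph of the input formula belongs to a subgraph-closed family $\mathcal{H}$, then the incidence graph of the resulting instance of \fmaxcov will also belong to $\mathcal{H}$.''

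From that one observation, all three inheritance claims you worried about become immediate and lossless: $K_{d,d}$-freeness is subgraph-closed, so you get $K_{d,d}$-freeness (not merely $K_{d',d'}$-freeness for some $d' = \Oh(d)$); clause size and variable frequency are vertex degrees in that bipartite graph, so they can only decrease. There is also no $\mathrm{poly}(1/\epsilon)$ blow-up of $|\cU|$ or $|\cF|$ to worry about — the reduction strictly shrinks the instance. So the lemma you flagged as the ``main obstacle'' is in fact a one-line observation, and the rest of your argument (plugging in the \FPTAS of Jain et al.\ for $K_{d,d}$-free set systems, or the Skowron–Faliszewski / Manurangsi \FPTAS for bounded frequency or bounded set size, absorbing the $(2/\epsilon)^k$ repetitions into the respective exponents, and recovering the assignment from the returned subfamily by setting the corresponding variables to $1$) goes through exactly as you sketched. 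In short: correct structure, but you overcomplicated the key lemma — the reduction is deletion-only, so the structural restriction is inherited for free.
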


Corollary~\ref{introcor:satcc} follows by utilizing Theorem~\ref{thmintro:reduction} and repurposing the known results about \maxcov (\cite{DBLP:conf/soda/0001KPSS0U23,DBLP:journals/ipl/Blaser03,DBLP:journals/jair/SkowronF17,DBLP:journals/corr/abs-1810-03792}). We will return to the case of $K_{d,d}$-free set systems later. Apart from extending the classes of set families where \maxcov admits   \FPTAS{}es, the study on the \maxcov problem has been extended in many directions.

\subsubsection{Matroid Constraints}
Note that \maxcov is a special case of submodular function maximization subject to a cardinality constraint. In the latter problem, we are given (an oracle access to) a submodular function $f: 2^{V} \to \mathbb{R}_{\ge 0}$ \footnote{$f: 2^V \to \mathbb{R}$ is submodular if it satisfies $f(A) + f(B) \ge f(A \cup B) + f(A \cap B)$ for all $A, B \subseteq V$}, and the goal is to find a subset $U \subseteq V$ that maximizes $f(U)$ over all subsets of size at most $k$. Indeed, coverage functions are submodular and monotone (i.e., adding more sets cannot decrease the number of elements covered). There has been a plethora of work on monotone submodular maximization subject to cardinality constraints, starting from Wolsey \cite{DBLP:journals/combinatorica/Wolsey82}. In a further generalization, we are interested in monotone submodular maximization subject to a matroid constraint -- in this setting, we are given a matroid $\cM = (U, \cI)$ \footnote{Recall that a matroid is a pair $\cM = (U, \cI)$, where $U$ is the ground set, and $\cI$ is a family of subsets of $U$ satisfying the following three axioms: (i) $\emptyset \in \cI$, (ii) If $A \in \cI$, then $B \in \cI$ for all subsets $B \subseteq A$, and (iii) for any $A, B \in \cI$ with $|B| > |A|$, then there exists an element $e \in B \setminus A$ such that $A \cup \LR{e} \in \cI$.} via an \emph{independence oracle}, i.e., an algorithm that answers queries of the form ``Is $P \in \cI$?'' for any $P \subseteq U$ in one step, and we want to find an independent set $S \in I$ that maximizes $f(S)$. Note here that a uniform matroid of rank $k$ \footnote{Rank of a matroid is equal to the maximum size of any independent set in the matroid.} exactly captures the cardinality constraint. Calinescu et al. \cite{CalinescuCPV11} gave an optimal $(1-1/e)$-approximation. 

More recently, Huang and Sellier \cite{DBLP:journals/tcs/HuangS23} and Sellier \cite{DBLP:conf/esa/Sellier23} studied the problem of maximizing a coverage function subject to a matroid constraint, called \textsc{Matroid Constrained Maximum Coverage}. In this problem, which we call \mmaxcov (M for ``matroid'' constraint), we are given a set system $(U, \cF)$ and a matroid $\cM = (\cF, I)$ of rank $k$, and the goal is to find a subset $\cF' \subseteq \cF$ such that $\cF' \in I$ and $\cF'$ maximizes the number of elements covered. Note that \mmaxcov is a generalization of \maxcov. In the latter paper, Sellier \cite{DBLP:conf/esa/Sellier23} designed an \FPTAS for \mmaxcov, running in time $(d/\epsilon)^{\Oh(k)} \cdot (m+n)^{\Oh(1)}$ for frequency-$d$ set systems. Note that this result generalizes that of \cite{DBLP:journals/jair/SkowronF17, DBLP:journals/corr/abs-1810-03792} from a uniform matroid consraint to an arbitrary matroid constraint of rank $k$. 

Analogous to \mmaxcov, one can define a matroid constrained version of \satcc, called \mmaxsat. In this problem, we are given a CNF-SAT formula $\Phi$ and a matroid $\cM$ of rank $k$ on the set of variables. The goal is to find an assignment that satisfies the maximum number of clauses, with the restriction that, the set of variables assigned $1$ must be an independent set in $\cM$. Note that \mmaxsat generalizes \mmaxcov as well as \satcc. We obtain the following result for \mmaxsat, by combining the results on a variant of \Cref{thmintro:reduction} with the corresponding result on \mmaxcov. 

\begin{theorem} \label{thm:intro-matroid-sat}
	There exists an \FPTAS for \mmaxsat parameterized by $k, d$, and $\epsilon$, on $d$-CNF formulas, where $k$ denotes the rank of the given matroid.
\end{theorem}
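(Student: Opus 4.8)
The plan is to obtain Theorem~\ref{thm:intro-matroid-sat} by composing two black boxes: the matroid-constrained version of the randomized reduction promised alongside \Cref{thmintro:reduction} (which converts an instance of \mmaxsat into an instance of \mmaxcov, losing only a $(1-\epsilon)$ factor and succeeding with probability at least $\tfrac12(\tfrac{\epsilon}{2})^{k}$), and Sellier's \FPTAS for \mmaxcov on frequency-$d$ set systems, which runs in time $(d/\epsilon)^{\Oh(k)}(m+n)^{\Oh(1)}$. The only genuinely new piece of work is checking that on $d$-CNF inputs the intermediate set system is frequency-$d$, so that Sellier's algorithm is actually applicable; everything else is bookkeeping about how the $(1-\epsilon)$ losses compose and how to boost the success probability.

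\textbf{Step 1: the reduction preserves frequency $d$.} Given an instance $(\Phi,\cM,k)$ of \mmaxsat with $\Phi$ a $d$-CNF formula, first note that a variable occurring only in negative literals is set to \ffalse in some optimal assignment (the all-false assignment is always feasible since $\emptyset$ is independent), so we may restrict $\cM$ to the variables occurring positively. Now apply the matroid-constrained variant of \Cref{thmintro:reduction}: it ``guesses'' the set $D\subseteq\cla{\Phi}$ of clauses that an optimal weight-$k$ assignment satisfies only through negative literals of false variables, deletes them, and outputs the \maxcov instance with universe $U=\cla{\Phi}\setminus D$, family $\cF=\{S_x : x\in\var{\Phi}\}$ where $S_x$ is the set of surviving clauses in which $x$ appears positively, and the matroid $\cM$ transported to $\cF$ via $x\leftrightarrow S_x$ (truncated to rank $k$). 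Since $\Phi$ is a $d$-CNF, every clause contains at most $d$ literals, hence at most $d$ positive literals, so every element of $U$ lies in at most $d$ sets of $\cF$; and the deletion of $D$ only removes elements, which cannot increase any frequency. Hence $(U,\cF)$ is a frequency-$d$ set system and $\cM$ is a rank-$k$ matroid on $\cF$.

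\textbf{Step 2: solve and translate back.} Run Sellier's \FPTAS on $(U,\cF,\cM,k)$ with accuracy parameter $\epsilon'=\epsilon/3$, obtaining in time $(d/\epsilon)^{\Oh(k)}(m+n)^{\Oh(1)}$ an independent family $\cF'=\{S_x : x\in T\}$, $|T|\le k$, covering at least $(1-\epsilon')\,\mathsf{OPT}_{\mathrm{cov}}$ elements. Output the assignment that sets exactly the variables of $T$ to \ttrue; it is feasible for \mmaxsat because $T$ is independent in $\cM$, and it satisfies at least (number of covered elements) $+\,|D|$ clauses. By the guarantee of the matroid variant of \Cref{thmintro:reduction}, conditioned on the reduction succeeding, this quantity is at least $(1-\epsilon)\,\mathsf{OPT}_{\mathrm{sat}}$ once $\epsilon'$ is chosen as a suitable rescaling of $\epsilon$ so that the reduction's loss and Sellier's loss compose into a single $(1-\epsilon)$ factor. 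Finally, boost the probability: a single run succeeds with probability at least $\tfrac12(\tfrac{\epsilon}{2})^k$, so repeating the whole procedure $\Oh\big((2/\epsilon)^k\log(1/\delta)\big)$ times and returning the best assignment found drives the failure probability below $\delta$ while keeping the running time of the form $f(k,d,\epsilon)\cdot(m+n)^{\Oh(1)}$.

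\textbf{Main obstacle.} The delicate point is entirely in Step~1: verifying that the matroid variant of \Cref{thmintro:reduction} transports the matroid constraint on the true variables to a matroid constraint on the chosen sets \emph{without} disturbing the frequency-$d$ property. Concretely, one must ensure that any auxiliary sets or elements the reduction introduces (e.g., to record the deleted clauses $D$, or to pad the matroid to rank exactly $k$) are realized by element deletions and by matroid truncation / free direct-sum extension only, both of which keep the set system frequency-$d$ and the matroid of rank $k$. The remaining care is the routine one of rescaling the two accuracy parameters so their losses multiply to at most $\epsilon$.
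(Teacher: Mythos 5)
Your proposal takes the same route as the paper: compose the matroid-constrained variant of the randomized reduction (the paper records this in the Remark at the end of Section~\ref{sec:reduction}, noting the reduction from \fmaxsat to \fmaxcov carries over in the presence of a matroid constraint on the variables assigned $1$) with Sellier's \FPTAS for \mmaxcov on frequency-$d$ set systems, after checking that on $d$-CNF input the reduced set system has frequency at most $d$. The paper's own justification of Theorem~\ref{thm:intro-matroid-sat} is essentially this one-line combination plus that Remark, so your write-up is actually more explicit than the source; the one cosmetic imprecision is that the reduction does not deterministically ``guess'' $D$, but samples a random assignment $\Psi$ (each variable true with probability $\epsilon/2$) and deletes the clauses $\Psi$ satisfies negatively, which---with probability at least $\tfrac12(\epsilon/2)^k$---extends an optimal assignment and captures its negative satisfactions up to a $(1-\epsilon)$ factor, giving the intended effect.
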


\subsubsection{Fairness or Multiple Coverage Constraints}\label{subsec:fairness}

%
%

Now we consider an orthogonal generalization of \maxcov. Note that an optimal solution for \maxcov may leave many elements uncovered. However, such a solution may be deemed \emph{unfair} if the elements are divided into multiple colors (representing, say, people of different demographic groups), and the set uncovered elements are biased against a specific color. To address these constraints, the following generalization of \maxcov, which we call \fmaxcov (F stands for ``fair''), has been studied in the literature. Here, we are given a set system $(U, \cF)$, a coloring function $\chi : U \rightarrow [r]$, a coverage requirement function $t: [r] \rightarrow \mathbb{N}$, and an integer $k$; and the goal is to find a subset $\cF' \subseteq \cF$ of size at most $k$ such that, for each $i \in [r]$, the union of elements in $\cF'$ is at least $t(i)$ (or $t_i$). 

Since \fmaxcov is a generalisation of \maxcov, it inherits all the lower bounds known for \maxcov. Furthermore, we can mimic the algorithm for \maxcov (\psc) parameterized by $t$ (where you want to cover at least $t$ elements with $k$ sets)~\cite{DBLP:journals/ipl/Blaser03} to obtain an algorithm for \partsc parameterized by $\sum_{j\in [r]}t_j$. However, the problem is  {\sf NP}-hard even when $t_j \leq 1$,  $j\in [r]$, via a simple reduction from {\sc Set Cover}.

\fmaxcov has been studied under multiple names in the approximation algorithms literature; however much of the focus has been on approximating the \emph{size} of the solution, rather than the coverage. Notable exception include Chekuri et al.\ \cite{DBLP:journals/jco/ChekuriIQVZ22} who gave a ``bicriteria'' approximation, that outputs a solution of size at most $\Oh(\nicefrac{\log r}{\epsilon})$ times the optimal size, and covers at least $(1-\nicefrac{1}{e}-\epsilon)$ fraction of the required coverage of each color. Very recently, Bandyapadhyay et al. \cite{BandyapadhyayFM23} recently designed an \FPTAS for \fmaxcov for the set systems of frequency $2$, running in time $2^{\Oh(\frac{r k^2\log k}{\epsilon})} \cdot (m+n)^{\Oh(1)}$. 
We obtain the following result on \fmaxcov.

\begin{theorem} \label{thm:intro-fmaxcov}
	There exists a randomized \FPTAS for \fmaxcov running in time $\lr{dr \lr{\frac{\log k}{\epsilon}}^r}^{\Oh(k)} \cdot (m+n)^{\Oh(1)}$, on set systems with frequency bounded by $d$. 
\end{theorem}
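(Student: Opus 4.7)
The plan is to combine a multi-dimensional bucketing of the color-wise marginal coverage profile of the optimum with an iterative randomized procedure that picks sets one by one, using the frequency-$d$ sparsity to bound the per-step branching factor. The structure extends that of Bandyapadhyay et al.\ \cite{BandyapadhyayFM23} (who handled $d=2$) while incorporating the ``carefully designed probability distribution'' advertised in the introduction.

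\textbf{Bucketing the per-color marginal contributions.} Fix any feasible optimum, order its sets as $(S_1^\star, \ldots, S_k^\star)$, and for $j \in [r]$ write $U_j = \chi^{-1}(j)$. Let $m_{i,j} = |(S_i^\star \setminus \bigcup_{i'<i} S_{i'}^\star) \cap U_j|$ be the marginal contribution of $S_i^\star$ to color $j$, so that feasibility gives $\sum_i m_{i,j} \ge t_j$ for every $j$. At a multiplicative loss of only $(1-\epsilon)$ in each color's coverage we can (i) zero out every $m_{i,j} < \epsilon t_j/(2k)$, losing at most $\epsilon t_j/2$ per color in total, and (ii) round each remaining $m_{i,j}$ down to the nearest power of $(1+\epsilon/4)$, so that each nonzero bucket index lies in $\{0, 1, \ldots, B\}$ with $B = \Oh(\log(k/\epsilon)/\epsilon)$. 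This bucketed profile lives in a space of cardinality $(B+2)^{rk} = (\log k/\epsilon)^{\Oh(rk)}$, which we enumerate.

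\textbf{Iterative construction matching a guessed profile.} For each guessed profile $(b_{i,j})$ we attempt to build $(S_1', \ldots, S_k')$ iteratively: at iteration $i$, given $S_1', \ldots, S_{i-1}'$, we must pick $S_i' \in \cF$ whose marginal contribution to each color $j$ matches the target implied by $b_{i,j}$ up to a slack of $(1+\Oh(\epsilon))$. The main step is to guess a ``witness color'' $j^\star \in [r]$ with $b_{i,j^\star} \neq 0$ (at least one exists whenever $S_i^\star$ contributes anything new), contributing a factor of $r$; then sample an element $e \in U_{j^\star} \setminus \bigcup_{i'<i} S_{i'}'$ from a carefully chosen, sparsity-aware distribution; and finally branch over the at most $d$ sets of $\cF$ that contain $e$ (by the frequency-$d$ assumption) as candidates for $S_i'$. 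Thus each iteration contributes a branching factor of at most $dr$.

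\textbf{Main obstacle and conclusion.} The crux of the proof is designing the sampling distribution of Step~2 so that, with probability $\Omega(1/(dr))$ up to factors already absorbed by the profile enumeration, the sampled element is one that $S_i^\star$ itself newly covers. A uniform distribution over $U_{j^\star}$ is insufficient: it succeeds with probability only $m_{i,j^\star}/|U_{j^\star}|$, which can be negligible when $U_{j^\star}$ is much larger than the target marginal. Instead, following the ``sparse coverage'' philosophy from the introduction, we weight each candidate $e$ by the inverse of the number of ``still alive'' sets containing it, combined with bucket-aware corrections across the other colors, so that the probability mass concentrates on elements that could plausibly be newly covered by a set matching the remaining bucket targets. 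Chaining the per-iteration success probabilities over the $k$ iterations gives success probability $(dr)^{-\Oh(k)}$, which is amplified by independent repetitions. Combining the $(\log k/\epsilon)^{\Oh(rk)}$ profile enumeration with $(dr)^{\Oh(k)}$ repetitions per profile, and an $\Oh(1)$ verification cost per trial, yields the stated running time $\left(dr(\log k/\epsilon)^r\right)^{\Oh(k)} \cdot (m+n)^{\Oh(1)}$.
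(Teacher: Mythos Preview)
Your outline has a genuine gap at exactly the point you yourself flag as ``the crux'': the element-sampling distribution is not actually specified, and the sketch you offer (``weight each candidate $e$ by the inverse of the number of still-alive sets containing it'') does not yield the claimed $\Omega(1/(dr))$ success probability. Consider $r=1$, $d=2$, and an instance where $|U_1|=N$, each element lies in exactly two sets, and the optimum consists of $k$ sets each contributing $t/k$ new elements (so $t_1=t\ll N$). At step $i$ the set you want to hit has size $t/k$, while the residual color-$1$ universe still has size $N-\Theta(t)$. Since every element has frequency exactly $2$, inverse-degree weights are all within a factor of two of uniform, and the probability of hitting an element newly covered by $S_i^\star$ is $\Theta(t/(kN))$, which is not bounded below by any function of $d,r,k,\epsilon$ alone. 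No ``bucket-aware correction across the other colors'' helps here, since there is only one color. A second, related issue is that your inductive step compares against marginals of $S_i^\star$ relative to the \emph{optimal} prefix $S_1^\star,\ldots,S_{i-1}^\star$, whereas you have only constructed $S_1',\ldots,S_{i-1}'$; the two residual universes need not coincide, so even if you did hit $S_i^\star$, its marginal profile in your residual instance need not match $b_{i,\cdot}$.

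The paper takes a rather different route that sidesteps both issues. It does \emph{not} sample elements at all; it samples sets. The bucketing is over the absolute color-degree profile $(d_1(v),\ldots,d_r(v))$ of each set $v$ in the \emph{current residual} instance (not over marginal contributions of an ordered optimum), giving at most $L=(\Oh(\log k)/\epsilon^2)^r$ bags. One bag is guessed uniformly; an \emph{arbitrary} set $v$ is taken from that bag; and then a set $u$ is sampled with $p(v)=\tfrac12$ and $p(w)\propto \sum_j |N_j(w)\cap N_j(v)|/|N_j(v)|$ for $w\neq v$. The frequency-$d$ bound makes $\sum_w p(w)\le 1$. The analysis is a dichotomy: if the optimum $S$ has large total overlap with $N(v)$ in some color, then $\sum_{w\in S}p(w)\ge \epsilon/(2rd)$ and we sample a genuine optimal set; if the overlap is small in every color, then $v$ is a $(1-\Oh(\epsilon))$-substitute for the set of $S$ that shares $v$'s bag, and we take $v$ (probability $\tfrac12$). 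This gives per-step success probability $\epsilon/(2rdL)$ and hence the stated bound after $k$ steps. The key conceptual difference from your plan is that the ``anchor'' $v$ is a concrete set whose neighborhood we can use to define the distribution on \emph{sets}, rather than trying to locate a single element of an unknown optimal set inside a possibly huge universe.
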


Note that this result generalizes the result of \cite{BandyapadhyayFM23} to frequency-$d$ set systems, and in the case of $d=2$, our running time is faster than that of \cite{BandyapadhyayFM23} (albeit our algorithm is randomized). 

One can also define \emph{fair} version of \satcc in an analogous way, which we call \fmaxsat. In this problem, we are given a CNF-formula $\Phi$, a coloring function $\chi : \cla{\Phi} \rightarrow [r]$, a coverage demand function $t: [r] \rightarrow \mathbb{N}$, and an integer $k$. 
The goal is to find a weight-$k$ assignment that satisfies at least $t(j)$ (also denoted as $t_j$) clauses of each color $j \in [r]$. By combining \Cref{thm:intro-fmaxcov} with a slightly more general version of the reduction theorem (\Cref{thmintro:reduction}) also yields \FPTAS for \fmaxsat with a similar running time.

\subsection{Our New Problem: Combining Matroid and Fairness Constraints} \label{subsec:newprob}

As discussed in the previous subsections, \maxcov has been generalized in two orthogonal directions, namely, matroid constraints on the sets chosen in the solution, and fairness constraints on the elements covered by the solution. Although the corresponding variants of \satcc have not been studied in the literature, we mentioned that our techniques readily imply \FPTAS{}es for these problems for many ``sparse'' formulas. Given this, the following natural question arises.

\begin{tcolorbox}[colback=white!5!white,colframe=gray!75!black]
	Can we find good approximations for the variants of \satcc (resp. \maxcov) that combines the two orthogonal generalizations, namely, matroid constraint on the variables assigned $1$, and fairness constraints on the satisfied clauses (resp.\ matroid constraint on the sets chosen in the solution, and fairness constraints on the elements covered)?
\end{tcolorbox}

In the following, we formally define the common generalization of \mmaxsat and \fmaxsat, which we call \mfmaxsat. 

\begin{tcolorbox}[colback=white!5!white,colframe=gray!75!black]
	\mfmaxsat
	\\\textbf{Input.} A CNF-SAT formula $\Phi$ where the clauses $\cla{\Phi}$ of $\Phi$ are partitioned into $r$ colors. Each color $j \in [r]$ has an associated demand $t_j$. Additionally, we are provided the independence oracle to a matroid $\cM = (\var{\Phi}, I)$ of rank $k$.  
	\\\textbf{Question.} Does there exist an assignment $\Psi: \var{\Phi} \to \LR{0, 1}$, such that
	\begin{itemize}
		\setlength{\itemsep}{-2pt}
		\item The number of clauses satisfied by $\Psi$ of color $j$ is at least $t_j$, for each $j \in [r]$,
		\item The set of variables assigned $1$ must be independent in $\cM$, i.e., $\Psi^{-1}(1) \in I$.
	\end{itemize}
\end{tcolorbox}
In the special case where the CNF-SAT formula is monotone (i.e., does not contain negated literals), we obtain \mfmaxcov, which generalizes all the variants of \maxcov discussed earlier. We obtain the following result for \mfmaxcov.

\begin{theorem} \label{thm:intro-mfmaxcov}
	There exists a randomized \FPTAS for \mfmaxcov on set systems with maximum frequency $d$, that runs in time $\lr{\frac{d \log k}{\epsilon}}^{\Oh(kr)} \cdot (m+n)^{\Oh(1)}$ and returns a $(1-\epsilon)$-approximation with at least a constant probability.
\end{theorem}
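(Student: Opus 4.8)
The plan is to extend the techniques behind \Cref{thm:intro-fmaxcov} for \fmaxcov by replacing the handling of the cardinality constraint ``pick at most $k$ sets'' with a matroid-aware final step, while keeping the combinatorial explosion indexed by the $k$ solution ``slots'' and the $r$ colours. Fix throughout an (unknown) optimal solution $\cF^\star = \{S_1, \dots, S_\ell\} \in I$ with $\ell \le k$ covering at least $t_j$ elements of colour $j$ for every $j$, and fix a greedy-style ordering $S_1, \dots, S_\ell$ (each $S_i$ chosen to maximise its marginal coverage over $S_1,\dots,S_{i-1}$). The goal is to output some $\cF' \in I$, $|\cF'| \le k$, covering at least $(1-\epsilon)t_j$ elements of each colour $j$ (and to report infeasibility when even this relaxation is impossible).

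\textbf{Step 1: randomised refinement exploiting frequency $d$.} Using the ``carefully designed probability distribution'' that the abstract advertises --- the sparse-coverage sampling of Jain et al.\ and Sellier, adapted to the multi-colour setting --- I would sample a refinement of the universe so that, conditioned on a good event, the coverage function becomes essentially \emph{additive along the ordering}: for each slot $i$, the set of colour-$j$ elements newly covered by $S_i$ beyond $S_1,\dots,S_{i-1}$ can be certified by a bounded-size family of ``witness'' elements, and since each element lies in at most $d$ sets, a witness pins the occupant of a slot down to one of $\le d$ candidate sets. The analysis must show that conditioning on this event keeps $\cF^\star$ alive up to a $(1-\epsilon)$-loss per colour \emph{and} makes each $S_i$'s per-colour marginal ``slot-localised'' --- it depends only on the witness-types of earlier slots, not on their exact sets. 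This is where the frequency bound is indispensable, and it is the step I expect to be the main obstacle; the good event will have probability $(d\log k/\epsilon)^{-\Oh(kr)}$, to be boosted to a constant later by independent repetition.

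\textbf{Step 2: guessing a profile.} Conditioned on a good refinement, enumerate all \emph{profiles} of $\cF^\star$: for each slot $i\in[k]$ and colour $j\in[r]$, a geometric bucket (a power of $(1+\Theta(\epsilon))$) lower-bounding $S_i$'s marginal coverage in colour $j$, plus, per slot, a guess of which of the $\le d$ candidate sets through a (randomly sampled, hence poly-time) witness is the occupant. Marginal contributions below $\epsilon\,\OPT_j/\mathrm{poly}(k)$ are irrelevant --- summed over the $k$ slots they cost at most $\epsilon t_j$ --- so only $\Oh(\log k/\epsilon)$ buckets per (slot, colour) pair are relevant, and the witness guesses cost another $d^{\Oh(k)}$. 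Over all slots and colours this yields $(d\log k/\epsilon)^{\Oh(kr)}$ profiles, and we discard any profile whose per-colour bucket sums fall below $(1-\epsilon)t_j$, so that feasibility of the next step already certifies the coverage guarantee.

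\textbf{Step 3: matroid feasibility.} For a fixed surviving profile, the remaining task is purely combinatorial: pick, for each of the $\le k$ slots, a set from the polynomial-size pool of sets compatible with that slot's buckets and witness-types, so that the chosen sets form an independent set of $\cM$. Since the refinement has made contributions additive and localised, compatibility of a set with a slot depends only on the earlier slots' witness-types, which the profile already fixes; hence this is exactly a search for a common independent set of $\cM$ and a partition matroid with one part per slot --- solvable by matroid intersection, or equivalently by computing $\Oh(k)$-representative families slot by slot (the rank being $k$), in $(m+n)^{\Oh(1)}$ time per profile. Correctness of this step is routine matroid theory. Multiplying the per-profile time by the number of profiles, and repeating $(d\log k/\epsilon)^{\Oh(kr)}$ times to amplify the success probability of Step~1 to a constant, gives the claimed running time; the crux of the argument remains the refinement-and-localisation analysis of Step~1.
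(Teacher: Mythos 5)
Your plan takes a genuinely different route from the paper, but the route has a genuine gap at its most important step.

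The paper proves \Cref{thm:intro-mfmaxcov} (via \Cref{thm:d-hs-matroid}) with a recursive randomized-branching algorithm: at each of the $k$ levels it (i) applies \bucketing to the current residual instance to partition $A$ by approximate per-colour degrees, (ii) guesses a bag $A(\mathbf{v})$ meeting the unknown optimum, (iii) replaces the bag by a $(k-1)$-representative set $R(A(\mathbf{v}))$ of size $\le k$ computed from the matroid oracle (\Cref{lem:oraclerepset}), (iv) samples a pivot $v$ from this representative set and then a branching vertex $u$ from the distribution $p(\cdot)$ driven by the frequency-$d$ bound, and (v) contracts $\cM$ on $u$ and recurses. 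The induction in \Cref{lem:random-induction}, adapted to the matroid case, is where all the real work lives: it shows that with probability at least $(\Theta(\epsilon)/(kdrL))^k$ the chain of pivots either hits the optimum or can be swapped into it while preserving both approximate per-colour coverage and independence.

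Your proposal replaces this per-level interleaving with a ``guess the whole profile, then solve a clean combinatorial problem'' plan. Steps~2 and~3 would be fine \emph{if} Step~1 could deliver its promise, and Step~3's observation that the residual task is a matroid intersection with a partition matroid over the $k$ slots (solvable in polynomial time, or via representative families) is a nice way to package the feasibility search. But Step~1 is not a proof; it is exactly the crux, and you acknowledge this yourself. The claim that one can sample a ``refinement'' under which (a) per-slot marginal coverage becomes additive and slot-localised, (b) $\cF^\star$ survives up to a $(1-\epsilon)$ loss per colour, and (c) the good event has probability $(d\log k/\epsilon)^{-\Oh(kr)}$, is a strong structural statement about coverage functions that does not follow from the frequency bound alone. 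Marginal contributions are inherently order- and identity-dependent; the whole reason the paper branches recursively with bucketing and a carefully weighted distribution is precisely to avoid needing such an up-front additivity property. The ``witness pins a slot to $\le d$ candidates'' idea also needs to explain how one samples witnesses for an unknown optimum with the claimed probability and how a witness for slot $i$ can simultaneously certify the marginal for \emph{every} colour $j$; nothing in the sketch does that, and the multi-colour setting is where the easy single-colour colour-coding intuition breaks. Until Step~1 is made precise and shown to hold, the argument is incomplete, and it is not a small gap: it is the theorem.

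Concretely, the missing lemma is of the form: ``there is a polynomial-time random experiment whose outcome $\omega$, with probability $(d\log k/\epsilon)^{-\Oh(kr)}$, determines for each slot $i$ and colour $j$ a quantity $c_{ij}(\omega)$ and a candidate pool $P_i(\omega)$ of size $n^{\Oh(1)}$ such that (1) $\sum_i c_{ij} \ge (1-\epsilon) t_j$ for every $j$, (2) $\cF^\star$ can be reindexed so $S_i \in P_i$ and the marginal coverage of $S_i$ in colour $j$, given any choice of sets from $P_1,\dots,P_{i-1}$, is at least $c_{ij}$, and (3) the pools are compatible with $\cM$-independence via matroid intersection.'' Item~(2) is the additivity/localisation claim, and I do not see how to establish it. If you want a correct proof, the paper's approach of contracting the matroid after each step and doing per-colour bucketing on the \emph{residual} instance sidesteps the need for (2) entirely, at the cost of a more intricate inductive argument.
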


Finally, by reducing \mfmaxsat on $d$-CNF formulas to \mfmaxcov with frequency $d$ set systems, using the randomized reduction, and then using the results of \Cref{thm:intro-mfmaxcov}, we obtain our most general result, as follows.

\begin{theorem} \label{thm:intro-general}
	There exists a randomized \FPTAS for \mfmaxsat on $d$-CNF formulas, that runs in time $\lr{\frac{d \log k}{\epsilon}}^{\Oh(kr)} \cdot (m+n)^{\Oh(1)}$ and returns a $(1-\epsilon)$-approximation with at least a constant probability.
\end{theorem}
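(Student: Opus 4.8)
The plan is to combine the randomized reduction from the satisfiability-style problems to the coverage-style problems with the \FPTAS of \Cref{thm:intro-mfmaxcov}. Concretely, I would first set up the common generalization of the matroid-only and fairness-only variants of \Cref{thmintro:reduction}: a randomized polynomial-time algorithm that, given an instance of \mfmaxsat on a $d$-CNF formula $\Phi$ with colour classes $\cla{\Phi} = C_1 \uplus \cdots \uplus C_r$, demands $t_1, \dots, t_r$, and a rank-$k$ matroid $\cM$ on $\var{\Phi}$, outputs an instance of \mfmaxcov. The universe of the coverage instance is (a subset of) $\cla{\Phi}$; each variable $x$ becomes the set of clauses that still contain $x$ positively after the reduction's monotonisation step (which handles negated literals, e.g.\ by declaring certain clauses ``free'' and dropping them from the universe); the colour of a clause is inherited; the demand of colour $j$ is $t_j$ minus the number of ``free'' clauses in $C_j$; and the matroid on the sets is exactly $\cM$, now viewed on variables-as-sets, still of rank $k$. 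Since $\Phi$ is a $d$-CNF formula, every clause has at most $d$ literals and hence lies in at most $d$ of the sets, so the derived set system has frequency at most $d$ and at most $r$ colours. The two existing variants of the reduction are ``orthogonal'' --- the matroid part acts only on $\var{\Phi}$ and the colour/``free-clause'' bookkeeping acts only on $\cla{\Phi}$ --- so combining them is routine, and one argues, exactly as in the proofs of \Cref{thm:intro-matroid-sat} and of the \FPTAS for \fmaxsat, that with probability $(\Omega(\epsilon))^{k}$ the following holds: an optimal \mfmaxsat assignment maps to a \mfmaxcov solution of value at least $(1-\epsilon)$ times its own, and conversely any feasible solution of the coverage instance maps back, in polynomial time, to an assignment satisfying at least $(1-\epsilon)t_j$ clauses of each colour $j$.

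Given this, the algorithm is: run the reduction; run the \FPTAS of \Cref{thm:intro-mfmaxcov} on the resulting frequency-$d$ instance of \mfmaxcov (with at most $r$ colours and matroid rank $k$), obtaining a $(1-\epsilon)$-approximate coverage solution with constant probability in time $\lr{\tfrac{d\log k}{\epsilon}}^{\Oh(kr)}(m+n)^{\Oh(1)}$; translate it back to an assignment of $\Phi$ and verify it satisfies at least $(1-\epsilon)t_j$ clauses of each colour $j$; repeat the whole process until a feasible assignment is found. Chaining the two approximation guarantees yields a $(1-\epsilon)^{2}$-approximation, and rescaling $\epsilon \leftarrow \epsilon/3$ gives the claimed $(1-\epsilon)$ bound. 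Because the reduction succeeds with probability $(\Omega(\epsilon))^{k}$ and the coverage \FPTAS succeeds with constant probability, $\epsilon^{-\Oh(k)}$ repetitions make the overall success probability constant, and the total running time is $\epsilon^{-\Oh(k)} \cdot \lr{\tfrac{d\log k}{\epsilon}}^{\Oh(kr)}(m+n)^{\Oh(1)} = \lr{\tfrac{d\log k}{\epsilon}}^{\Oh(kr)}(m+n)^{\Oh(1)}$, as desired.

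I expect the only real work to be in the first paragraph: verifying that the randomized monotonisation reduction of \Cref{thmintro:reduction} can be run \emph{simultaneously} in the presence of the matroid constraint on $\var{\Phi}$ and the colour demands on $\cla{\Phi}$ without hurting the $\epsilon^{-\Oh(k)}$ success probability. The delicate points are: (i) the back-translation --- a feasible coverage solution must yield a genuinely feasible assignment, so the reduction's treatment of negated literals (e.g.\ ``betting'' that some negated variable of a clause is false and forbidding it from the solution) has to be compatible with the matroid, which may itself already forbid some variables; and (ii) the per-colour demand accounting for ``free'' clauses, so that a $(1-\epsilon)$-approximate coverage solution genuinely meets $(1-\epsilon)t_j$ of each colour. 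Both the matroid-only reduction (behind \Cref{thm:intro-matroid-sat}) and the fairness-only reduction (behind the \FPTAS for \fmaxsat) already resolve these issues in isolation, and since they act on disjoint parts of the instance, merging them introduces no new dependencies and the probability analysis is just the product of the two, which is again $(\Omega(\epsilon))^{k}$.
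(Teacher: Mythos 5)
Your proposal follows exactly the paper's approach: combine the randomized approximation-preserving reduction from \mfmaxsat to \mfmaxcov (Section 4, Algorithm 1 together with the remark that matroid constraints carry over) with the \FPTAS of \Cref{thm:intro-mfmaxcov}, then repeat to boost the success probability. One small inaccuracy: the paper's reduction succeeds with probability $(\epsilon/2r)^k$, not $(\Omega(\epsilon))^k$ --- the extra $r$ comes from the union bound over colours in \Cref{clm:satneg} and the choice $p=\epsilon/(2r)$ --- so you in fact need $(r/\epsilon)^{\Oh(k)}$ repetitions rather than $\epsilon^{-\Oh(k)}$; this does not affect the stated running time since $(r/\epsilon)^{\Oh(k)}$ is absorbed into $\lr{\tfrac{d\log k}{\epsilon}}^{\Oh(kr)}$, but the probability bookkeeping should be stated with the $r$-dependence.
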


We give a summary of how the various problems are related to each other, and a comparison of our results with the literature in \Cref{fig:hierarchy}.

\begin{figure}
	\centering
	\includegraphics[scale=0.65]{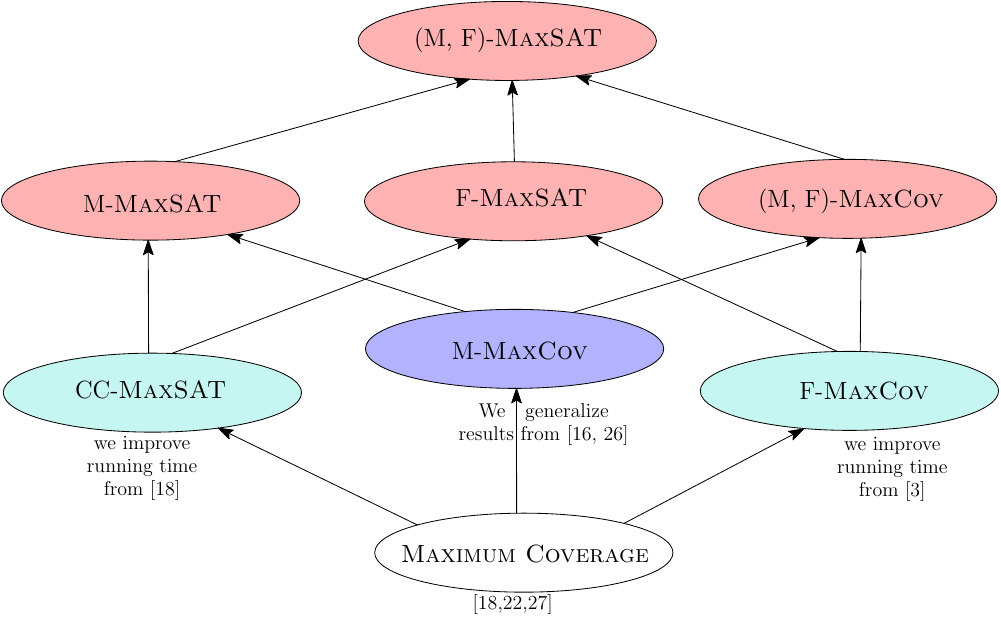}
	\caption{If there is an arrow of the form $A\rightarrow B$, then problem $B$ generalizes problem $A$. \FPTAS{}es for the problems in {\color{red}red} bubbles are not known in the literature, and we study in this paper. For all the other problems \FPTAS{}es are known in the literature for some cases. This paper improves the results in {\color{cyan}cyan} and {\color{blue}blue}.} \label{fig:hierarchy}
\end{figure}

\subsection{Related Results}
\vccc or {\sc Partial Vertex Cover} has been extensively studied in Parameterized Complexity. In this problem we are given a graph and the task is to select a subset of $k$ vertices covering as many of the edges as possible. The problem is known to be approximable within $0.929$ and is hard to approximate  within $0.944$, assuming UGC~\cite{DBLP:journals/corr/abs-1810-03792}. 
\vccc is known to be W[1]-hard~\cite{GuoNW07}, parameterized by $k$, but admits \FPT algorithms on planar graphs, graphs of bounded degeneracy,  \kddfree graphs, and bipartite graphs, parameterized by $k$~\cite{AminiFS11,FominLRS11,KoanaKNS22}. Indeed, it is among the first problems to admit \FPTAS~\cite{DBLP:journals/cj/Marx08,DBLP:journals/corr/abs-1810-03792,DBLP:journals/jair/SkowronF17}. It is also known to have ``lossy kernels''~\cite{DBLP:journals/cj/Marx08,LokshtanovPRS17}, a lossy version of classical kernelization.

Bera et al. \cite{DBLP:journals/tcs/BeraG0R14} considered the special case of \textsc{Partition Vertex Cover}, where the set of edges of a graph are divided into $r$ colors, and we want to find a subset of vertices that covers at least a certain number of edges from each color class. For this problem, they gave a polynomial-time $\Oh(\log r)$-approximation algorithm. Hung and Kao \cite{DBLP:journals/algorithmica/HungK22} generalized this to \fmaxcov, and gave a $\Oh(d \log r)$-approximation, where each element of the universe is contained in at most $d$ sets (i.e., $d$ is the maximum \emph{frequency}). Bandyapadhyay et al. \cite{DBLP:journals/corr/abs-2308-15842} studied this problem under the name of \textsc{Fair Covering}, and designed a $\Oh(d)$-approximation, but their running time is \textsf{XP} in the number of colors. Chekuri et al. \cite{DBLP:journals/jco/ChekuriIQVZ22} designed a general framework for \fmaxcov, yielding tight approximation guarantees for a variety of set systems satisfying certain property; in particular, they improve the approximation guarantee for frequency-$d$ set systems to $\Oh(d + \log r)$, which is tight in polynomial time. 


\section{Overview of Our Results and Techniques}
\subsection{Reduction from \satcc to \maxcov: An overview of \Cref{thmintro:reduction}} \label{subsec:overview-satcov}
This theorem is essentially a randomized \emph{approximation-preserving reduction} from \satcc to \maxcov. Given an instance $\cI= (\Phi, k)$ of \satcc, we first compute a random assignment $\Psi$ that assigns a variable independently to be $1$ with probability $p = \epsilon/2$ and $0$ with probability $1-p$. Let $V^*$ be the set of at most $k$ variables set to be $1$ by an optimal assignment $\Psi^*$. It is straightforward to see that, the probability that all the variables in $V^*$ are set to be $1$ by the random assignment $\Psi$ is $p^k$ -- we say that this is the good event $\mathcal{G}$. Now, consider a clause that is satisfied negatively by $\Psi^*$, i.e., a clause $C$ that contains a negative literal $\neg x$ and $\Psi^*(x) = 0$. It is also easy to see that, conditioned on the good event $\mathcal{G}$, the probability that such a clause $C$ is also satisfied negatively by $\Psi$ is at least $1-p$. Thus, the expected number of clauses that are satisfied negatively by $\Psi$, conditioned on $\mathcal{G}$, is at least $1-p$ times the number of clauses satisfied negatively by $\Psi^*$. Markov's inequality implies that, with probability at least $1/2$, the actual number of such clauses is close to its expected value. Thus, conditioned on $\mathcal{G}$, and the previous event, we can focus on the positively satisfied clauses (note that the probability that both of these events occur is at least $1/2 \cdot (\epsilon/2)^k$. To this end, we can eliminate all the negatively satisfied clauses, and we can also prune the remaining clauses by eliminating any negative literals and the variables that are set to $0$ by $\Psi$. Thus, all the remaining clauses only contain positive literals, which can be seen as an instance $\cI'$ of \maxcov. Furthermore, conditioned on $\mathcal{G}$, the variables set to $1$ by $\Phi^*$ correspond to a family $\cF^*$ of size $k$, and the elements covered by $\cF^*$ correspond to the set of clauses satisfied only positively by $\Phi^*$. Thus, if we find a $(1-\epsilon)$-approximate solution to $\cI'$, and set the corresponding variables to $1$, and the rest of the variables to $0$, then we get a weight-$k$ assignment that satisfies at least $(1-\epsilon) \cdot \OPT_{sat}$ clauses. Note that this reduction, combined with the algorithm of \cite{DBLP:conf/soda/0001KPSS0U23} gives the proof of \Cref{introcor:satcc}. 

Furthermore, this reduction is robust enough that it can accommodate the fairness constraints on the clauses, as defined above. To be precise, one can give a similar reduction from $\mathcal{C}$-\textsc{MaxSAT} to $\mathcal{C}$-\textsc{MaxCov}, where $\mathcal{C} \in \LR{\textsc{M, F, (M, F)}}$ -- note that when we have fairness constraints, the success probability now becomes $(r/\epsilon)^{\Oh(k)}$.
Essentially, these reductions translate a constraint on the variables set to $1$ (for \satcc and variants), to the corresponding family of sets (for \maxcov and variants). 
Thus, at the expense of a multiplicative $(r/\epsilon)^{\Oh(k)}$ factor in the running time, we can focus on \maxcov and its variants, which is what we do in this section, as well as in the paper. As a warm-up, we start in \Cref{subsec:overview-branching} with the vanilla \maxcov on frequency-$d$ set systems (where our algorithms \emph{do not} improve over the known algorithms in the literature), and give a complete formal proof. Then, we will gradually introduce the ideas required to handle fairness (\Cref{subsec:overview-bucketing}) and matroid (\Cref{subsec:overview-matroid}) constraints -- first separately, and then together. Finally, in \Cref{subsec:overview-generalizations}, we briefly discuss the ideas required to these results to $K_{d,d}$-free set systems and multiple matroid constraints. 

\subsection{Deterministic and Randomized Branching using a Largest Set} \label{subsec:overview-branching}
To introduce our ideas in a clean and gradual way, we start with the simplest setting of \maxcov where the maximum frequency of the elements is bounded by $d$. Recall that we are given an instance $(U, \cF, k)$ and the goal is to find a sub-family of $\cF$ of size $k$ that covers the maximum number of elements. For any sub-family $\cR \subseteq \cF$, let $U(\cR)$ denote the subset of elements covered by $\cR$, and $\OPT_k(\cR)$ denote the maximum number of elements that can be covered by a subset of $\cR$ of size $k$. Further, for a set $S \in \cF$, we denote by $\cF - S$, the family obtained by removing $S$, as well as the elements of $S$ from each of the remaining sets. Our approach is inspired by the approaches of Skowron and Faliszewski~\cite{DBLP:journals/jair/SkowronF17} and Manurangsi~\cite{DBLP:journals/corr/abs-1810-03792} who show that $\Oh(kd/\epsilon)$ sets of the largest size is guaranteed to contain a $(1-\epsilon)$-approximate solution. This naturally begs the question, ``\emph{why not start by adding the largest set into the solution?}'' (in a sense, the following presentation is closer in spirit to Jain et al.~\cite{DBLP:conf/soda/0001KPSS0U23}.) Let us inspect this question more closely. Let $L$ be a largest set in $\cF$. By looking at the contribution of coverage of each set in an optimal solution, say $\Oh$, we can easily see that $|L| \ge \frac{\OPT_k(\cF)}{k}$. We say that a set $S \in \cF$ is \emph{heavy} w.r.t. $L$ if $|L \cap S| \ge \frac{\epsilon |L|}{k}$ (note that $L$ is heavy w.r.t. itself). However, since the frequency of each element is bounded by $d$, each element in $L$ can appear in at most $d$ (in fact, $d-1$) sets $L \cap S$ for different $R \in \cF$. This implies that at most $\frac{kd}{\epsilon}$ sets in $\cF$ are heavy w.r.t. $L$.

\smallskip 
{\bf Algorithm.} Our algorithm simply branches on the sets in $\cH(L)$, which is the family of heavy sets w.r.t. $L$. Specifically, in the branch corresponding to a heavy set $S \in \cH(L)$, we include it in the solution, and recursively call the algorithm on the residual instance $(U \setminus S, \cF - S, k-1)$. If any of the sets in $\Oh$ is heavy w.r.t. $L$, then in the branch corresponding to such a set yields an approximate solution via induction. The main idea is that, if no set in $\Oh$ is heavy w.r.t. $L$, then the branch corresponding to $L$ yields a good solution. This is justified as the effect of any of the $k-1$ sets in $\Oh$ is too small. For the sake of clarity, we formally analyze this algorithm below via induction.

We want to show that, for a given input $(U', \cF', k')$ the recursive algorithm returns a family $\cR \subseteq \cF'$ of size $k'$ such that $|U'(\cR)| \ge (1-\epsilon) \cdot \OPT_{k'}(\cF')$. The base case for $k' = 0$ is trivial, since the algorithm returns an empty set. Suppose that the claim is true for all inputs with budget $k-1$, and we want to prove it for $(U, \cF, k)$. Let $\Oh$ denote an optimal solution of size $k$ with $|\OPT_k(\cF)| = |U(\Oh)|$. 

\smallskip
\noindent
{\bf Approximation Ratio in the Easy Case.} If $\Oh \cap \mathcal{H}(L) \neq \emptyset$, then there exists a branch corresponding to a set $S \in \Oh \cap \mathcal{H}(L)$. This is the \emph{easy case} (for the analysis). In this case, $\OPT_{k-1}({\cF} - S) = \OPT_{k}({\cF})-|S|$, and hence the approximation ratio is: 

\begin{eqnarray*}
	\frac{|S| + (1-\epsilon) \OPT_{k-1}({\cF} \setminus S) }{\OPT_{k}({\cF})}  &= &  \frac{|S| + (1-\epsilon) (\OPT_{k}({\cF})-|S|) }{\OPT_{k}({\cF})} \\
	&\geq &  \frac{(1-\epsilon) (\OPT_{k}({\cF})}{\OPT_{k}({\cF})} = (1-\epsilon) 
\end{eqnarray*}

\smallskip
\noindent
{\bf Approximation Ratio in the Hard Case.} The more hard case (for analysis) is when $\Oh \cap \mathcal{H}(S) = \emptyset$. In this case, we argue that the branch that includes the element $L$ is good enough. As in the easy case, we first lower bound the value of $\OPT_{k-1}({\cF} - L)$. By counting the unique contributions to the solution, there exists a \emph{light} set $S_l \in \Oh$ such that for $\Oh' = \Oh \setminus \LR{S_l}$, it holds $|U'(\Oh')| \ge \frac{k-1}{k} \cdot \OPT_k(\cF)$. Because no set in $\Oh$ is heavy w.r.t. $L$, it follows that for each $R \in \Oh'$, it holds that $|R \cap L| < \frac{\epsilon |L|}{k}$, and therefore by counting it holds that $|U(\Oh') \cap L| < \epsilon \cdot |L|$. Therefore, 
\begin{eqnarray*}
	\OPT_{k-1}({\cF} \setminus L)   \geq |U'(\Oh') \setminus L| \geq |U(\Oh')| - |U(\Oh') \cap L| \geq  \frac{k-1}{k} \cdot \OPT_k(\cF) - \epsilon \cdot |L|.
\end{eqnarray*}
Therefore, the approximation ratio of the branch that includes $L$ is as follows.

\begin{eqnarray*}
	\frac{|L| + (1-\epsilon) \OPT_{k-1}({\cF} - L) }{\OPT_{k}({\cF})}  
	&= &  \frac{|L| + (1-\epsilon) \left(\frac{k-1}{k} \cdot \OPT_k(\cF) - \epsilon \cdot |L|\right) }{\OPT_{k}({\cF})} \\
	& \geq & \frac{|L| + (1-\epsilon) \left(\frac{k-1}{k} \cdot \OPT_k(\cF)\right) - \epsilon  |L| }{\OPT_{k}({\cF})} \\
	& = &  \frac{(1-\epsilon) |L| + (1-\epsilon) \left(\frac{k-1}{k} \cdot \OPT_k(\cF)\right)  }{\OPT_{k}({\cF})} \\
	&\geq &  \frac{(1-\epsilon) (\OPT_{k}({\cF})}{\OPT_{k}({\cF})} = (1-\epsilon) 
\end{eqnarray*}
The second last inequality holds from the fact that $|L| \ge \frac{\OPT_k(\cF)}{k}$. 

This leads to a deterministic $(1-\epsilon)$-approximation algorithm with running time $((\frac{kd}{\epsilon})^k) \cdot (n+m)^{\Oh(1)}$.

\paragraph{Insight into the probabilistic branching.}
A closer inspection of the analysis reveals that the reason $L$ may not a good choice is that the sets of $\Oh$ \emph{together} cover more than a certain threshold fraction of elements covered by $L$. We utilize this idea through a smoothening process that captures the effect of the size of the intersection of a set $S$ with $L$ in a more nuanced manner. Let us define the weight $h_L(S)$ of a set $S \in \cF \setminus \LR{L}$ as $h_L(S) = \frac{|S \cap L|}{|L|}$. Our algorithm now instead does ``randomized branching'', i.e., it samples one set to be included in the solution according to some probability distribution, and then continues recursively. Note that the single run of the algorithm finishes in polynomial time. The probability distribution used by the algorithm is as follows: the set $L$ is sampled with probability $1/2$, and any other set $S \in \cF \setminus \LR{L}$ is sampled with probability proportional to its weight $h(S)$ (the constant of proportionality is chosen such that this is a valid probability distribution, that is, the probabilities sum up to $1$). In particular, observe that $\sum_{S \in \cF \setminus \LR{L}}h_L(S) \leq  \frac{d |L|}{|L|}=d$. Thus, the probability of selecting $S$ is at least $\frac{h_L(S)}{2d}$. Note that due to the way the weights $h_L(S)$ are defined, the sets with a large intersection with $L$ have a greater chance of being sampled, as compared to the sets with a small intersection with $L$.

We will show that the algorithm returns a $(1 -\epsilon)$-approximate solution  with probability at least $(\frac{\epsilon}{2d})^k$, and runs in polynomial time. This implies that by repeating the algorithm $(\frac{2d}{\epsilon})^k$ times, we obtain a $(1-\epsilon)$-approximation with probability at least a positive constant. This leads to a randomized algorithm with running time $\Oh^*((\frac{2d}{\epsilon})^k)$.

The proof is again by induction. We want to show that, for any input $(U', \cF', k')$, the algorithm returns a solution $\cR \subseteq \cF'$ of size $k'$ such that $|U'(\cR)| \ge (1-\epsilon) \cdot \OPT_{k'}(\cF')$, with probability at least $(\frac{\epsilon}{2d})^{k'}$. We reuse much of the notation from the previous analysis. Let $\Oh$ be an optimal solution of size $k$. First, the case when $L \in \Oh$, since our algorithm samples and includes $L$ in the solution with probability $1/2$. Then, conditioned on this event (that is, $L$ being sampled), the approximation ratio analysis proceeds similarly to the \emph{easy case} of the previous analysis. By induction, the recursive algorithm returns a $(1-\epsilon)$-approximate solution with probability at least $(\frac{\epsilon}{2d})^{k-1}$. Thus, overall, the algorithm returns a $(1-\epsilon)$-approximation with probability at least $\frac{1}{2}(\frac{\epsilon}{2d})^{k-1} \geq (\frac{\epsilon}{2d})^k$.

Now suppose $L \not\in \Oh$. As before, let $S_l \in \Oh$ be a light set as defined earlier, and $\Oh' = \Oh \setminus \LR{S_l}$.  We analyze by considering the following two cases: either (i) $|U(\Oh') \cap L| \le \epsilon \cdot |L|$, or (ii) $|U(\Oh') \cap L| > \epsilon \cdot |L|$. 

In case (i), we are effectively in the same situation as the \emph{hard case} of the previous analysis -- as before, the algorithm samples $L$ with probability at least $1/2$, and as argued in the \emph{hard case}, conditioned on the previous event, the branch corresponding to $L$ returns a $(1-\epsilon)$-approximate solution, but now with probability at least $(\frac{\epsilon}{2d})^{k-1}$ by induction. Therefore, we obtain an $(1-\epsilon)$-approximate solution with probability at least $\frac{1}{2}(\frac{\epsilon}{2d})^{k-1} \geq (\frac{\epsilon}{2d})^k$. 

In case (ii), we have that $|U(\Oh') \cap L| > \epsilon \cdot |L|$. Notice that, 
\begin{eqnarray*}
	\sum_{S\in \Oh'} |S\cap L|  > \epsilon. 
\end{eqnarray*}
This implies that 
\begin{eqnarray*}
	\sum_{S\in \Oh'} h_L(S)= \sum_{S\in \Oh'} \frac{|S\cap L|}{|L|} \geq |U(\Oh') \cap L| > \epsilon |L|. 
\end{eqnarray*}

Therefore, the total weight of the sets in $\Oh'$ is at least $\epsilon$. Therefore, the probability that the algorithm samples a set from $\Oh'$ is at least $\frac{\epsilon}{2d}$. Conditioned on this event, the approximation ratio analysis now proceeds as in the \emph{easy case}, and the algorithm returns a $(1-\epsilon)$-approximate solution with probability at least $\frac{\epsilon}{2d}(\frac{\epsilon}{2d})^{k-1} = (\frac{\epsilon}{2d})^k$.

\medspace


\subsection{Handling fairness constraints via the Bucketing trick} \label{subsec:overview-bucketing}

The aforementioned idea of prioritizing the largest set $L$, or sets that are heavy w.r.t. it, fails to generalize when we have multiple coverage constraints in \fmaxcov. This is simply because there is no notion of ``the largest set'' even when we want to cover elements of \emph{two} different colors, each with different coverage requirements. To handle such multiple coverage constraints, our idea is to use multidimensional-knapsack-style bucketing technique to group the sets of $\cF$ into \emph{approximate equivalence classes}, called \emph{bags} for short. At a high level, all the vertices belonging to a particular bag contain \emph{approximately equal} (i.e., within a factor of $(1+\epsilon)$) number of elements of \emph{all} of the $r$ colors. Thus, in isolation, any two sets $L_1$ and $L_2$ belonging to the same bag are interchangeable, since we tolerate an $\epsilon$-factor loss in the coverage. Since the total number of bags can be shown to be $\lr{\frac{\log k}{\epsilon}}^{\Oh(rk)}$, and hence we can ``guess'' a bag ${\cal B}$ containing a set from a solution $\Oh$. However, due to different amount of overlap with an optimal solution $\Oh$, two sets $L_1, L_2 \in {\cal B}$ may not be interchangeable w.r.t. $\Oh$. That is, if $L_1 \in \Oh$, then $\Oh \setminus \LR{L_1} \cup \LR{L_2}$ may not be a good solution. However, assuming that we have correctly guessed a bag that intersects with $\Oh$, we can then select a set $L \in {\cal B}$, and define the heavy sets (for deterministic algorithm) or weights $h_L(\cdot)$ (for the randomized algorithm) w.r.t. $L$. Note that, since we have multiple coverage constraints, we cannot simply look at the total size of the intersection $|S \cap L|$. Instead, we need to tweak the notion of \emph{heaviness} that takes into account the number of elements of each color in the intersection $S \cap L$. To summarize, we need two additional ideas to handle multiple colors in \fmaxcov: (1) ``guessing'' over buckets, and (2) a suitable generalization of the notion of heaviness. Modulo this, the rest of the analysis is again similar to the \emph{easy} and \emph{hard} cases as before. Using these ideas, we can prove the first part of \Cref{thm:intro-fmaxcov}.

\subsection{Handling Matroid Constraints.} \label{subsec:overview-matroid}

First, we consider \mmaxcov (note that this is an orthogonal generalization of \maxcov, without multiple coverage constraints), where the solution is required to be an independent set in the given matroid $\cM= (\cF, I)$ of rank $k$. We assume that we are given an \emph{oracle access} to $\cM$ in the form of an algorithm that answers the queries of the form ``Is $\cR$ an independent set?'' for any subset $\cR \subseteq \cF$. Let us revisit the initial deterministic \FPTAS for \maxcov and try to generalize it to \mmaxcov. Recall that this algorithm branches on each set $S \in \cH(L)$, where $L$ is a largest set. The analysis of easy case goes through even in presence of the matroid constraint, since we branch on a set from an optimal solution $\Oh$. However, in the hard case, our analysis replaces a $S_l \in \Oh$ with $L$, and argues that the branch corresponding to $L$ returns a $(1-\epsilon)$-approximate solution. However, this does not work for \mmaxcov, since $(\Oh \setminus \LR{S_l}) \cup \LR{L}$ may not be an independent set in $\cM$. To summarize, although $L$ handles the coverage constraints (approximately), it may fail to handle the matroid constraint. In fact, it may just so happen that $L$ is not a good set \emph{at all}, in the sense that, for \emph{any} set $S \in \Oh$, $(\Oh \setminus \LR{S}) \cup \LR{L}$ is not independent in $\cM$, which is crucial for the induction to go through.

To solve this issue, we resort to the bucketing idea as in the fair coverage case (thus, the subsequent arguments generalize to $\mfmaxcov$ in a straightforward manner; although let us stick to the special case of \mmaxcov for now). Indeed, branching w.r.t. the largest set $L$ is an overkill -- it suffices to pin down a bag $\mathcal{B}$ containing a set in $\Oh$ (it does not even have to be the largest set), by ``guessing'' from $\Oh\lr{\frac{\log k}{\epsilon}}$ bags. However, we again cannot select an arbitrary set $S \in \cB$ and define heavy sets w.r.t. $S$, precisely due to the matroid compatibility issues mentioned earlier. Therefore, we resort to the idea of \emph{representative sets} from matroid theory \cite{DBLP:conf/acid/Marx06} \footnote{Although this is a powerful hammer in its full generality---which we do use to handle multiple matroid constraints---our specialized setting lets us use much simpler arguments to handle single matroid constraint in \mmaxcov/\mfmaxcov.} Assuming our guess for $\cB$ is correct, there exists some $S \in \cB \cap \Oh$. However, we cannot further ``guess'' $S$, since the size of the bag may be too large. Instead, we compute a inclusion-wise maximal independent set $\cB' \subseteq \cB$. Note that the size of $\cB'$ is at most $k$, and it can be computed using polynomially many queries to the independence oracle. However, it may very well happen that $S \not\in \cB'$. Nevertheless, using matroid properties, we can argue that, there exists a set $S' \in \cB'$, such that $(\Oh \setminus \LR{S}) \cup \LR{S'}$ is an independent set. Thus, $\cB'$ is a \emph{representative set} of $\cB$. Furthermore, since both $S$ and $S'$ come from the same bag, they cover approximately the same number of elements. Thus, our modified deterministic algorithm works as follows. First, it computes maximal independent set $\cB' \subseteq \cB$, and for each $S' \in \cB'$, it computes the heavy family $\cH(S')$. Then, it branches over all sets in $\bigcup_{S' \in \cB'} \cH(S')$. If one of the branches corresponds to branching on a set from $\Oh$, then the analysis is similar to the easy case. Otherwise, we know that $(\Oh \setminus \LR{S}) \cup \LR{S'}$ is an $(1-\epsilon)$-approximate solution that is also an independent set. Therefore, the branch corresponding to $S'$ yields the required $(1-\epsilon)$-approximation. We can improve the running time via doing a randomized branching in two steps: first we pick a set $S'' \in \cB'$ uniformly at random, and then we perform the probabilistic branching using the weights $h_{S''}(\cdot)$. 

Both deterministic and randomized variants incur a further multiplicative overhead of $(\frac{k \log k}{\epsilon})^k$ due to first guessing a bag, and then computing a representative set $\cB' \subseteq \cB$ of the bag, and thus do not improve over the results of Sellier~\cite{DBLP:conf/esa/Sellier23} in terms of running time for \mmaxcov. However, this idea naturally generalizes to \mfmaxcov, with the appropriate modifications in bucketing (as mentioned in the previous paragraph) to handle the multiple coverage requirements of different colors. This leads to the proof of \Cref{thm:intro-mfmaxcov}. 




\subsection{Further Extensions} \label{subsec:overview-generalizations}

The ideas mentioned in the previous subsections can be extended to even more general settings in a couple of ways. First, we describe how to extend the ideas from frequency-$d$ set systems for \maxcov to $K_{d,d}$-free set systems, i.e., set system $(U, \cF)$, where no $d$ sets in $\cF$ contain $d$ elements of $U$ in common. Note that frequency-$d$ set systems are $K_{d+1,d+1}$-free. Then, we describe how the linear algebraic toolkit of \emph{representative sets} can be used to handle multiple (linear) matroid constraints.

\paragraph{$K_{d,d}$-free Set Systems.} Next, we consider $K_{d,d}$-free set systems $(U, \cF)$, where no $d$ sets of $\cF$ contain $d$ common elements of $U$. To design the \FPTAS on $K_{d,d}$-free set systems, we combine the bucketing idea along with the combinatorial properties of $K_{d,d}$-free graphs to bound the number of heavy neighbors of a set. To this end, however, we need to modify the precise definition of \emph{heaviness}  (as in Jain et al.~\cite{DBLP:conf/soda/0001KPSS0U23}). This leads to a somewhat cumbersome branching algorithm that handles colors differently based on their coverage requirement. For colors with small coverage requirement, we highlight the covered vertices using the standard technique of \emph{label coding}\footnote{The technique is better known as \emph{color coding}. However, this creates an unfortunate clash of terminology -- these \emph{colors} have nothing to do with the original colors corresponding to coverage constraints. Bandyapadhyay et al.~\cite{BandyapadhyayFM23} instead use the term ``label coding'', and we also adopt the same terminology}. Now, vertices in a bag cover the elements with the same label and for colors with high coverage requirements, the sizes of the sets in the same bag are ``almost'' the same. Then, we pick an \emph{arbitrary} set $L$ from the bag ${\cal B}$, and we branch on (suitably defined) \emph{heavy} sets w.r.t. $S$. Since the number of heavy sets is bounded by a function of $k, d$, and $\epsilon$, this leads to a deterministic version of \Cref{thm:intro-fmaxcov} to $K_{d,d}$-free set systems. Note that since frequency-$d$ set systems is a special case of this, this implies a deterministic \FPTAS in this case; however with a much worse running time compared to \Cref{thm:intro-fmaxcov}.
This leads to the proof of \Cref{thm:intro-mfmaxcov}.

\paragraph{Handling Multiple Matroid Constraints.}
Our results on \mmaxcov and \mfmaxcov can be generalized to handle multiple matroid constraints on the solution, in the case when the matroids are \emph{linear} or representable\footnote{A matroid $\cM = (E, \cI)$ is representable over a field $\mathbb{F}$ if there exists a matrix $M$ such that there exists a bijection between $E$ and the columns on $M$ with the property that, a subset $E' \subseteq E$ is independent in $\cM$ iff the corresponding set of columns are linearly independent over $\mathbb{F}$.}. In this more general problem, we are given $q$ linear matroids $\cM_1, \cM_2, \ldots, \cM_q$, where $\cM_i = (\cF, I_i)$, each of rank at most $k$, and the solution $S$ is required to be independent in all $q$ matroids, i.e., $S \in \bigcap_{i \in [q]} I_i$. In this case, we can use linear algebraic tools (\cite{DBLP:conf/acid/Marx06,fomin2014efficient}) to compute a representative set of size $qk$ instead of $k$, and the computation requires $2^{\Oh(qk)} \cdot n^{\Oh(1)}$ time. Thus, \FPTAS{}es for this problem now have a factor of $q$ in the exponent. 

Note that our \FPTAS improves upon the polynomial-time approximation guarantee of $1-1/e$ of Calinescu et al.~\cite{CalinescuCPV11} for monotone submodular maximization subject to a matroid constraint, in the special case of $K_{d,d}$-free coverage functions. To the best of our knowledge, this is the largest class of monotone submodular functions and matroid constraints for which the lower bound of $1-1/e$ can be overcome, even in \FPT time. Further, the analogous results to \mfmaxsat generalize these results to maximization of non-monotone/non-submodular functions.

\section{Preliminaries}

For a positive integer $q$, let $[q] \coloneqq \LR{1, 2, \ldots, q}$. 

\paragraph{Equivalent reforumation in terms of dominating set in the incidence graph.}

For the ease of exposition, we recast \fmaxcov to the following problem, and work with this formulation in the rest of the paper.

\begin{tcolorbox}[colback=white!5!white,colframe=gray!75!black]
	\pbdsfull (\pbds)
	\\\textbf{Input:} An instance $\cI = (G, r, f, t, k)$, where
	\begin{itemize}
		\item $G = (A \uplus B, E)$ is a bipartite graph with bipartition $(A, B)$,
		\item $f: B \to [r]$ is a surjective function, and for each $j \in [r]$, we say that $B_j \coloneqq \LR{ \fs \in B: f(\fs) = j }$ is a set of vertices of \emph{color} $j$,
		\item For each color $j \in [r]$, a \emph{coverage requirement} $t_j \ge 0$, and
		\item a non-negative integer $k$.
	\end{itemize}
	\textbf{Question:} Does there exist a subset $S \subseteq A$, such that (1) $|S| \le k$, and (2) for each $j \in [r]$, $|N_j(S)| \ge t(j)$?
	Here, $N_j(S) \subseteq B_j$ is the set of vertices of color $j \in [r]$ that are adjacent to at least one vertex in $S$. 
\end{tcolorbox}

Note that \pbds is the ``multiple coverage'' version of the well-studied problem {\sc Red Blue Dominating Set}. We avoid ``Red Blue'' here just to avoid confusion with our color classes. Observe that finding a subfamily of size $k$ that covers at least $t_j$ elements of the color $j$ is equivalent to finding a set $S\subseteq A$ of size $k$ such that $N_j(S)$ (neighbors of $S$ that are colored $j$) is at least $t_j$ in the incidence graph.

%


\medskip\noindent\textbf{Convention.} For a graph $H$ with bipartition $A' \uplus B'$, we refer to the $A'$ (resp.\ $B'$) as the \emph{left} (resp.\ \emph{right}) side of the bipartition $A' \uplus B'$. We will consistently use standard letters such as $u, v, w$ to refer to vertices on the left, and fraktur letters such $\text{\large $\mathfrak{s, p, q}$}$ to refer to the vertices on the right. We consistently use index $j$ to refer to a color from the range $[r]$, and may often write ``for a color $j$'' instead of ``for a color $j \in [r]$''. Finally, for a coverage requirement function $t$ (resp.\ variations such as $t', \ttt$), and a color $j$, we shorten $t(j)$ to $t_j$ (resp.\ $t'_j, \ttt_j$).

\medskip\noindent\textbf{Notation.} Let $H$ be an induced subgraph of $G$. We define some terminology w.r.t. the graph $H = G[A' \uplus B']$, where $A' \subseteq A$, and $B' \subseteq B$. For a vertex $v \in A'$, and a color $j$, let $N^H_j(v) \subseteq B' \cap B_j$ denote the set of neighbors of $v$ of color $j$. More formally, let $N^H_j(u) \coloneqq \{ \fs \in B' : f(\fs) = j \text{ and } (u, \fs) \in E(H) \}$. Furthermore, for a subset $S \subseteq A$, let $N_j(S) \coloneqq \bigcup_{v \in S} N_j(v)$. We also define $d^H_j(v) \coloneqq |N^H_j(v)|$. We call $d^H_j(v)$ as $j$-degree of $v$ in the graph $H$. We may omit the superscript from these notations when $H = G$.  Finally, for a vertex $v \in A'$, we define $H \bbslash v$ as the graph obtained by deleting $N^H[v]$, i.e., $v$ and all of its neighbors in $B'$.  

Consider an instance $\cI = (G, [r], f, t, k)$ of \pbds. For a vertex $\fs \in B$, $f_{\setminus \fs}$ is a restriction of $f$ to the set $B \setminus \{\fs\}$. 


\noindent \textbf{Matroids and representative sets:} A matroid $\mathcal{M}=(U,I)$, consists of a finite universe $U$ and a family $I$ of sets over
$U$ that satisfies the following three properties:
\begin{enumerate}
	\item $\emptyset \in I$,
	
	\item if $A \in I$ and $B \subseteq A$, then $B \in I$,
	
	\item  if $A \in I$ and $B \in I$ and $|A| < |B|$ then there is an element $b \in B \setminus A$
	such that $A \cup \{b\} \in I$.
	
\end{enumerate}

Each set $S \in I$ is called an independent set and an inclusion-wise maximal independent set is known as a basis of the matroid. The rank of a set $P \subseteq U$ is defined as the size of the largest subset of $P$ that is independent, and is denoted by $\rank(P)$. Note that $\rank(P) \le \rank(Q)$ if $P \subseteq Q$.
From the third property of matroids, it is easy to observe that every inclusion-wise maximal independent set has the same size; this size is referred to as the \emph{rank} of the matroid. For any $P \subseteq U$, define the closure of $P$, $\cl(P) \coloneqq \LR{x \in U: \rank(P \cup \LR{x}) = \rank(P)}$. Note that $\cl(P) \subseteq \cl(Q)$ if $P \subseteq Q$. 

A matroid is linear (representable) if it can be defined using linear independence, that is for any such matroid $\mathcal{M}=(U,I)$, one can assign to every $e \in U$ a vector $v_e$ over some field (where the different elements of the universe should all be over the same field $\mathbf{F}$ and have the same dimension) such that a set $S \subseteq U$ is in $I$  if and only if the set $\{v_e: e \in  S\}$ forms a linearly independent set of vectors. 

For a matroid $\cM = (U, I)$ and an element $u \in U$, the matroid obtained by contracting $u$ is represented by $\cM' = \cM/u$, where $\cM' = (U \setminus \LR{u}, I')$, where $I'= \LR{ S \subseteq U \setminus \LR{u} : S \cup \LR{u} \in I }$. Let $\cM = (U, I)$ be a matroid of rank $k$. Then, for any $0 \le k' \le k$, one can define a truncated version of the matroid as follows: $\cM_{k'} = (U, I_{k'})$, where $I_{k'} = \LR{S \in I: |S| \le k'}$. It is easy to verify that the contraction as well as truncation operation results in a matroid. Furthermore, given a linear representation of a matroid $\cM$, a linear representation of the matroid resulting from contraction/truncation can be computed in (randomized) polynomial time \cite{DBLP:conf/acid/Marx06,DBLP:conf/icalp/LokshtanovMPS15}. Alternatively, given an oracle access to the original matroid $\cM$, one can simulate the oracle access to the contracted/truncated matroid with at most a linear overhead.

Next, we state the following crucial definition of representative families. 

\begin{definition}[\cite{fomin2014efficient,DBLP:books/sp/CyganFKLMPPS15}] \label{repsetdefn}
	Let $\mathcal{M}$ be a matroid and $\mathcal{A}$ be a family of sets of size $p$
	in $\mathcal{M}$. A subfamily $\mathcal{A}' \subseteq \mathcal{A}$ is said to $q$-represent $\mathcal{A}$ if for every set $B$ of size
	$q$ such that there is an $A \in \mathcal{A}$ such that $A\cup B$ is an independent set, there is an $A' \in \mathcal{A}'$  such that $A'\cup B$ is an independent set. If  $\mathcal{A}'$
	$q$-represents $\mathcal{A}$, we write $\mathcal{A}' \subseteq^q_{rep} \mathcal{A}$.
\end{definition}

We call a family of sets of size $p$ as $p$-family.

\begin{proposition}[\cite{fomin2014efficient,DBLP:books/sp/CyganFKLMPPS15}] \label{prop:repset} There is an algorithm that, given a matrix $M$ over a field $GF(s)$, representing a matroid $\mathcal{M} = (U, F)$ of rank $k$, a $p$-family ${\cal A}$ of independent sets in $\mathcal{M}$, and an integer $q$ such that $p + q = k$, computes a $q$-representative family ${\cal A}'\subseteq ^q_{rep} \mathcal{A}$ of size at most
	$\binom{p+q}{p}$ using at most $\Oh(|{\cal A}|(\binom{p+q}{p})p^{\omega}+(\binom{p+q}{p})^{\omega-1})$
	operations over $GF(s)$.
\end{proposition}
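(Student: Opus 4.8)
The plan is to prove Proposition~\ref{prop:repset} via the classical exterior-algebra (Pl\"ucker-coordinate) framework of Lov\'asz and Marx, together with the algorithmic speed-up of \cite{fomin2014efficient}. First I set things up. Since $\cM$ has rank $k$ and is given by a matrix $M$ over $GF(s)$, every element $e \in U$ corresponds to a column vector $v_e \in GF(s)^k$. To each independent $p$-set $A = \LR{e_1, \dots, e_p} \in \cA$ I associate the wedge product $\omega_A := v_{e_1} \wedge \cdots \wedge v_{e_p}$, an element of $\bigwedge^p GF(s)^k$; after fixing the standard basis indexed by $p$-subsets of $[k]$, the vector $\omega_A$ consists of the $p\times p$ minors (Pl\"ucker coordinates) of the $k \times p$ matrix with columns $v_{e_1}, \dots, v_{e_p}$, so it lives in a space of dimension $\binom{k}{p} = \binom{p+q}{p}$. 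The algebraic fact I would use is: for a $q$-set $B$, the union $A \cup B$ is independent in $\cM$ if and only if $\omega_A \wedge \omega_B \neq 0$ in $\bigwedge^k GF(s)^k \cong GF(s)$, where $\omega_B$ is the wedge of the vectors of $B$ --- this is exactly the statement that $p+q = k$ vectors are linearly independent iff their wedge is nonzero, and it requires no field extension.

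Next I reduce the $q$-representativeness requirement to a linear-span condition. I claim that any subfamily $\cA' \subseteq \cA$ with $\operatorname{span}\LR{\omega_{A'} : A' \in \cA'} = \operatorname{span}\LR{\omega_A : A \in \cA}$ is $q$-representative. Indeed, fix a $q$-set $B$ and $A \in \cA$ with $A \cup B$ independent, and consider the linear map $\psi_B : x \mapsto x \wedge \omega_B$; by the fact above, $\psi_B(\omega_A) \neq 0$. Writing $\omega_A$ as a linear combination of $\LR{\omega_{A'}}_{A' \in \cA'}$ and applying $\psi_B$, some term must satisfy $\psi_B(\omega_{A'}) = \omega_{A'} \wedge \omega_B \neq 0$, i.e.\ $A' \cup B$ is independent. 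Hence it suffices to output a maximal linearly independent subfamily $\cA'$ of $\LR{\omega_A}_{A \in \cA}$ (remembering, for each chosen basis vector, the set $A$ that produced it); such a subfamily has size at most $\binom{k}{p} = \binom{p+q}{p}$, which is precisely the claimed bound on $|\cA'|$.

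The algorithm then has two phases. In the first, compute $\omega_A$ for every $A \in \cA$: this vector has $\binom{p+q}{p}$ coordinates, each a $p\times p$ determinant computable in $\Oh(p^{\omega})$ field operations, giving $\Oh\!\left(|\cA|\binom{p+q}{p}p^{\omega}\right)$ operations over $GF(s)$ in total --- this is the first term of the claimed count. In the second phase, extract a maximal linearly independent subset of these $|\cA|$ vectors in $GF(s)^{\binom{p+q}{p}}$ by maintaining a row-echelon basis and feeding in the vectors in batches of size $\Oh(\binom{p+q}{p})$, reducing each stacked $\Oh(\binom{p+q}{p}) \times \binom{p+q}{p}$ block with fast Gaussian elimination; a careful batched application of fast matrix multiplication as in \cite{fomin2014efficient} makes the amortized cost of this phase $\Oh\!\left(\binom{p+q}{p}^{\omega-1}\right)$ beyond the time already spent forming the $\omega_A$'s, yielding the second term of the bound.

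The main obstacle is, first, establishing the exterior-algebra dictionary cleanly --- in particular the ``if and only if'' characterizing independence of $A \cup B$ by non-vanishing of $\omega_A \wedge \omega_B$, together with the observation (crucial for the size bound) that representativeness is preserved when $\LR{\omega_A}$ is replaced by any spanning subfamily --- and second, arranging the rank computation so that the running time genuinely comes out as $\Oh\!\left(|\cA|\binom{p+q}{p}p^{\omega} + \binom{p+q}{p}^{\omega-1}\right)$ rather than the naive $\Oh\!\left(|\cA|\binom{p+q}{p}^{\omega}\right)$; the latter refinement is exactly the contribution of \cite{fomin2014efficient}, which I would invoke for the precise operation count.
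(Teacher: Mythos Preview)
The paper does not give its own proof of this proposition; it is stated as a black-box result cited from \cite{fomin2014efficient,DBLP:books/sp/CyganFKLMPPS15}, so there is nothing to compare against. Your sketch is the standard exterior-algebra argument from those references and is correct in outline; the only part you should be careful to spell out if writing it in full is the batched rank computation that yields the additive $\binom{p+q}{p}^{\omega-1}$ term rather than a multiplicative $|\cA|$ factor, which you already flag as the contribution of \cite{fomin2014efficient}.
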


In the following lemma, we show that $\mathcal{A}' \subseteq^1_{rep} \mathcal{A}$ can be computed in polynomial time using oracle access to $\cM$. 

\begin{lemma} \label{lem:oraclerepset}
	Let $\cM= (U, I)$ be a matroid given via oracle access, and let $\mathcal{A}$ be a $1$-family of subsets of $U$. Then, $\mathcal{A}' \subseteq^{k-1}_{rep} \mathcal{A}$ can be computed in polynomial time.
\end{lemma}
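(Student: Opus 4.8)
The plan is to sidestep the linear-algebraic machinery of \Cref{prop:repset} entirely: when $p = 1$, a $(k-1)$-representative family of a $1$-family is nothing more than a \emph{maximal independent subfamily} of $\mathcal{A}$, and such a subfamily is computable greedily with $\Oh(|\mathcal{A}|)$ calls to the independence oracle. Concretely, I would first discard duplicates and any singleton $\LR{a}$ with $\LR{a} \notin I$ (a loop of $\cM$), since a loop is never part of an independent set and hence never witnesses the hypothesis of \Cref{repsetdefn}. Writing the surviving family as $\LR{\LR{a_1}, \ldots, \LR{a_m}}$ and $Z \coloneqq \LR{a_1, \ldots, a_m} \subseteq U$, I would scan $a_1, \ldots, a_m$ once, maintaining a set $Z'$ kept independent: start with $Z' = \emptyset$, and append $a_i$ to the current $Z'$ exactly when $Z' \cup \LR{a_i} \in I$. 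This produces, in polynomial time, an inclusion-wise maximal independent subset $Z' \subseteq Z$ with $|Z'| = \rank(Z) \le k = \binom{k}{1}$; the output is $\mathcal{A}' \coloneqq \LR{\LR{z} : z \in Z'}$.

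The correctness rests on one identity, which I expect to be the technical heart of the argument: $\cl(Z') = \cl(Z)$. The inclusion $\cl(Z') \subseteq \cl(Z)$ is just monotonicity of $\cl$; conversely, for every $a_i \in Z$ we have $\rank(Z' \cup \LR{a_i}) \le \rank(Z) = \rank(Z')$ (the inequality because $Z' \cup \LR{a_i} \subseteq Z$, the equality because $Z'$ is a maximal independent subset of $Z$), so $a_i \in \cl(Z')$; hence $Z \subseteq \cl(Z')$ and thus $\cl(Z) \subseteq \cl(\cl(Z')) = \cl(Z')$. Given this, the representative property follows immediately. Take any $B \subseteq U$ with $|B| = k-1$ and suppose some $\LR{a} \in \mathcal{A}$ has $\LR{a} \cup B$ independent; then $B$ is independent and $a \notin \cl(B)$. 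If we had $Z' \subseteq \cl(B)$, then $\cl(Z') \subseteq \cl(\cl(B)) = \cl(B)$, so $a \in Z \subseteq \cl(Z) = \cl(Z') \subseteq \cl(B)$, contradicting $a \notin \cl(B)$. Hence some $a' \in Z'$ lies outside $\cl(B)$, so $B \cup \LR{a'}$ is independent; since $\LR{a'} \in \mathcal{A}'$, this is exactly what \Cref{repsetdefn} demands. Therefore $\mathcal{A}' \subseteq^{k-1}_{rep} \mathcal{A}$.

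The only point requiring care is the degenerate possibility $a \in B$ permitted by the most literal reading of \Cref{repsetdefn}: following the standard convention for representative families \cite{fomin2014efficient,DBLP:books/sp/CyganFKLMPPS15} — and matching every use of the lemma in this paper, where $B$ plays the role of $\Oh \setminus \LR{S}$ and the witness is $\LR{S} \in \mathcal{A}$, so disjointness is automatic — the witness singleton is disjoint from $B$ and this case does not arise. Beyond that, the remaining checks are routine: that the greedy scan indeed yields a maximal independent subset of $Z$ (a consequence of downward-closure of $I$, so a dependence discovered at stage $i$ persists), and that the overall running time is polynomial, being dominated by the $\Oh(|\mathcal{A}|)$ oracle queries plus bookkeeping.
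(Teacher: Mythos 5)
Your proposal is correct and follows the same approach as the paper: both compute a greedy inclusion-wise maximal independent subset of the underlying ground elements and show that this subfamily is $(k-1)$-representative. The paper's correctness argument is phrased as a direct rank chain ($u \in \cl(R)$ gives $\rank(R \cup B) = \rank(R \cup B \cup \{u\}) \ge \rank(X) = k$, so $B$ extends by an element of $R$), whereas you establish $\cl(Z') = \cl(Z)$ as a sub-lemma and then argue by contradiction that $Z' \not\subseteq \cl(B)$; these are equivalent matroid-theoretic reformulations of the same spanning observation, and your extra remarks on discarding loops and on the disjointness convention in \Cref{repsetdefn} are implicit in the paper's proof (which takes $B = X \setminus \{u\}$ with $u \in X$, so disjointness is built in).
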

\begin{proof}
	Let $A = \LR{x : \LR{x} \in \mathcal{A}}$ be the set of underlying elements corresponding to the sets of $\mathcal{A}$. We compute an inclusion-wise maximal subset $R \subseteq A$ that is independent in $\cM$. Note that this subset can be computed using $O(|U|k)$ queries, where $k = \rank(\cM)$. We claim that $\mathcal{A}' = \LR{\LR{y}: y \in R}$ is a $1$-representative set of $\mathcal{A}$.
	
	Consider any $u \in A$ and $X \subseteq U$ of size $k$ such that $u \in X$ and  $X \in \cI$. We will show that there exists some $u' \notin R$ such that $X \setminus \LR{u} \cup \LR{u'} \in \cI$. Since $R$ is an inclusion-wise maximal independent subset of $A$, it follows that $u \in \cl(R) \subseteq \cl(R \cup (X \setminus \LR{u}))$. This means that $\rank(R \cup (X \setminus \LR{u})) = \rank(R \cup (X \setminus \LR{u}) \cup \LR{u}) = \rank(R \cup X)$. However, $\rank(R \cup X) \ge \rank(X)$. Therefore, we obtain that $\rank(R \cup (X \setminus \LR{u})) \ge \rank(X)$. This implies that there exists some $u' \in R$ such that $X \setminus \LR{u} \cup \LR{u'} \in \cI$. 
\end{proof}

\section{Reduction from \mfmaxsat to \mfmaxcov} 
\label{sec:reduction}

In this section, we begin with a polynomial time approximate preserving randomized reduction from \fmaxsat to \fmaxcov. The success probability of the reduction is $\OO((\epsilon/r)^{k})$.  
Recall that in the \fmaxsat problem, given a \textsf{CNF}-formula $\Phi$ with $\chi : \cla{\Phi} \rightarrow [r]$, a coverage demand function $t : [r] \rightarrow \mathbb{N}$ and an integer $k$, the goal is to find an assignment of weight at most $k$ that satisfies at least $t(i)$ (also denoted as $t_i$) clauses of color class $i$ (an assignment $\Psi$ satisfying these properties is called   {\em  optimal weight $k$ assignment}). 

We begin with some basic definitions.
Let $\Phi$ be a \textsf{CNF}-formula. By $\var{\Phi}$ and $\cla{\Phi}$, we denote the set of variables and clauses in the formula $\Phi$, respectively. An assignment to a \cnf-formula $\Phi$ is a function $\Psi : \var{\Phi} \rightarrow \{0, 1\}$. The weight of an assignment is the number of variables that have been assigned $1$. By $T(\Psi)$ and $F(\Psi)$, we denote the set of variables assigned $1$ and $0$ by the assignment $\Psi$, respectively.
For a clause $c \in \cla{\Phi}$, $\var{c}$ is the set of variables that occur in the clause $c$ as a positive or negative literal. Similarly, for a set of clauses $C \in \cla{\Phi}$, $\var{C}$ is the set of variables that occur as a positive or negative literal in a clause $c \in C$. 

Our reduction (\Cref{alg:red}) takes an instance $(\Phi, \chi, t, k)$ of \fmaxsat as input.  It constructs a random assignment $\Psi$ by setting each variable to $1$ with probability $p$ and $0$ with probability $1 - p$. It constructs a new formula by first removing the set of clauses that are satisfied negatively by $\Psi$, followed by removing negative literals from the remaining clauses. It reduces the formula to an instance $(\cU, \cF, \chi', t', k')$ of \fmaxcov as described in Step~\ref{red-max-cove} of \Cref{alg:red}. Clearly, this reduction takes polynomial time. 

	
	
	\begin{algorithm}[t]
		\caption{Reduction Algorithm$(\cI= (\Phi, \chi, t, k)$ of \fmaxsat $)$} \label{alg:red}
		\begin{algorithmic}[1]
			\Statex 
			\State Construct a random assignment $\Psi$ as follows. For each variable $x \in \var{\Phi}$, independently set $\Psi(x)$ to $1$ with probability $p$ and $0$ with probability $1 - p$. \bluecomment{We will later set $p = \frac{ \epsilon}{2r}$.}
			\State Construct a new formula $\Phi'$ as follows:
			\begin{itemize} 
				\item Let $N \subseteq \cla{\Phi}$ be the set of clauses that are satisfied negatively by $\Psi$. Then, $\cla{\Phi'} = \cla{\Phi} \setminus N$.
				\item For each $c \in \cla{\Phi'}$, remove all the variables in $c$ that occur either as a negative literal or set to $0$ by $\Psi$.
				\item For each $c \in \cla{\Phi'}$, add $\var{c}$ to $\var{\Phi'}$.
			\end{itemize}
			\State Construct an instance ${\cal J}_{\Psi}=(\cU, \cF, \chi', t', k')$ of \fmaxcov as follows: \label{red-max-cove}
			\begin{itemize}
				\item Set $\cU = \cla{\Phi'}$.
				\item For each $v \in \var{\Phi'}$, add a set $f_v$ to $\cF$ where $f_v = \{c \in \cla{\Phi'} \colon v \in \var{c}\}$. 
				\item For each $c \in \cla{\Phi'}$, if the corresponding element in $\cU$ is $e_c$, set $\chi'(e_c) = \chi(c)$.
				\item Set $t'(i) = t(i) - \frac{|N \cap \chi^{-1}(i)|}{1-\epsilon}$ for each $i \in [r]$. 
				\item Set $k' = k$.
			\end{itemize}
		\end{algorithmic}
	\end{algorithm}
	
	Next, we prove the correctness of our reduction. 
	For a \yin $\cI$ of \fmaxsat, let $\Psi^\star$ be an optimal weight $k$ assignment. 
	Let $N^\star$ be the set of clauses satisfied negatively by $\Psi^\star$, i.e., every clause in $N^\star$ contains a negative literal that is set to $1$, and let $P^\star$ be the set of clauses satisfied only positively by $\Psi^\star$, i.e., every clause in $P^\star$ contains a positive literal that is set to $1$ and no negative literal in this clause is set to $1$.
	By $N^\star_{i}$, we mean the set of clauses in color class $i$ satisfied negatively by $\Psi^\star$ and by $P^\star_{i}$, we mean the set of clauses in color class $i$ satisfied only positively by $\Psi^\star$.
	We call a random assignment, constructed in \Cref{alg:red}, \emph{good} if 
	each variable in $T(\Psi^\star)$ (positive variables under $\Psi^\star$) is assigned $1$ by $\Psi$, i.e., $T(\Psi^\star) \subseteq T(\Psi)$, which occurs with probability at least $p^k$. For a good assignment $\Psi$, let $N_{i}$ denote the set of clauses in color class $i$ satisfied negatively by $\Psi$ and $P_{i}$ denote the set of clauses in color class $i$ satisfied only positively by $\Psi$. We say that an event $\mathcal{G}$ is \emph{good} if a good assignment $\Psi$ is generated in \Cref{alg:red}.   We begin with the following claim. 
	
	\begin{claim} \label{clm:satneg}
		Given a \yin $\cI$ of \fmaxsat, with probability at least $1/2$, a good assignment $\Psi$ satisfies at least $(1 - \epsilon)|N_i^\star|$ clauses negatively, for each $i\in [r]$.
	\end{claim}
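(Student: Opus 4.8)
The plan is to fix a color $i \in [r]$ and a good assignment $\Psi$ (so we are conditioning on the good event $\mathcal{G}$, i.e., $T(\Psi^\star) \subseteq T(\Psi)$), and to bound from below the number of clauses in color class $i$ that $\Psi$ satisfies negatively. First I would argue that every clause $c \in N^\star_i$ is satisfied negatively by $\Psi$ with probability at least $1-p$, conditioned on $\mathcal{G}$. Indeed, such a clause contains some negative literal $\neg x$ with $\Psi^\star(x) = 0$; since $x \notin T(\Psi^\star)$, the event $\mathcal{G}$ does not force $\Psi(x)$ to any value, so $\Psi(x) = 0$ still holds with probability $1-p$ (here one should be slightly careful and observe that conditioning on $\mathcal{G}$ only fixes the values of variables in $T(\Psi^\star)$ to $1$ and leaves every other variable independently $0$ with probability $1-p$; $x$ is one of these free variables). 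When $\Psi(x) = 0$, the literal $\neg x$ is satisfied, hence $c$ is satisfied negatively by $\Psi$. Thus $\Pr[c \in N_i \mid \mathcal{G}] \ge 1-p$.

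By linearity of expectation, $\mathbb{E}[|N_i| \mid \mathcal{G}] \ge (1-p)\,|N^\star_i|$. The natural next step is a concentration / averaging argument. Since we want a lower bound on $|N_i|$ and a one-sided bound, I would apply Markov's inequality to the complementary count $X_i := |N^\star_i \setminus N_i|$, the number of color-$i$ clauses that $\Psi^\star$ satisfies negatively but $\Psi$ does not. We have $\mathbb{E}[X_i \mid \mathcal{G}] \le p\,|N^\star_i|$. With $p = \epsilon/(2r)$ and a union bound over the $r$ colors, Markov gives that with probability at least $1/2$ (conditioned on $\mathcal{G}$), simultaneously for all $i$ we have $X_i \le 2r \cdot p \,|N^\star_i| = \epsilon\,|N^\star_i|$ — more precisely, $\Pr[X_i > \epsilon |N^\star_i|] \le \frac{\mathbb{E}[X_i \mid \mathcal{G}]}{\epsilon|N^\star_i|} \le \frac{p}{\epsilon} = \frac{1}{2r}$, and a union bound over $i \in [r]$ leaves failure probability at most $1/2$. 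On the complement event, $|N_i| = |N^\star_i| - X_i \ge (1-\epsilon)|N^\star_i|$ for every $i$, which is exactly the claim.

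The main obstacle, and the step I would be most careful about, is the conditioning on $\mathcal{G}$: I need that conditioned on $\mathcal{G}$, the variables outside $T(\Psi^\star)$ are still independently $1$ with probability $p$, so that the per-clause bound $\Pr[c \in N_i \mid \mathcal{G}] \ge 1-p$ is legitimate and so that linearity/Markov can be applied under the conditional measure. This holds because $\mathcal{G}$ is precisely the event $\bigwedge_{x \in T(\Psi^\star)} \{\Psi(x) = 1\}$, an event depending only on the coordinates in $T(\Psi^\star)$, and the coordinates are mutually independent; hence conditioning fixes exactly those coordinates and leaves the rest with their original independent marginals. A secondary subtlety is the degenerate case $|N^\star_i| = 0$, where the inequality $|N_i| \ge (1-\epsilon)|N^\star_i| = 0$ holds trivially and that color contributes nothing to the union bound; I would state the argument so this case is subsumed. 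Everything else is routine: linearity of expectation, one application of Markov to $\sum_i X_i$ (or a union bound of Markov per color), and arithmetic with $p = \epsilon/(2r)$.
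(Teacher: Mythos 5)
Your proposal is correct and follows essentially the same route as the paper's proof: a per-clause conditional bound $\Pr[c \in N_i \mid \mathcal{G}] \ge 1-p$, linearity of expectation on the count of missed clauses, Markov's inequality to bound the per-color failure probability by $1/(2r)$ with $p = \epsilon/(2r)$, and a union bound over colors. Your version is slightly more explicit than the paper's about why conditioning on $\mathcal{G}$ leaves the marginals of variables outside $T(\Psi^\star)$ untouched, and about the degenerate case $|N^\star_i| = 0$, but the decomposition, the random variable being bounded, and the arithmetic are the same.
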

	
	\begin{proof}
		Let $(\Phi, \chi, t, k)$ be a \yin of \fmaxsat. Let $\Psi$ be a good assignment, which occurs with probability at least $p^k$. We show that $\Psi$ satisfies at least $(1 - \epsilon)|N^\star|$ clauses negatively, with probability at least $1/2$. 
		Let $X_i$ be the number of clauses in $N^\star_{i}$ that are satisfied negatively by $\Psi$. We define an indicator random variable $x_j$, for each $j \in [|N^\star_{i}|]$, as follows.
		\[
		x_j = \begin{cases} 
			1 & \text{clause } c_j \in N^\star_{i} \text{ is satisfied negatively by } \Psi \\
			0 & \text{otherwise} 
		\end{cases}
		\]
		\[
		{\sf Pr}(x_j | \mathcal{G}) = {\sf Pr}(\text{clause } c_j \in N^\star_{i} \text{ is satisfied negatively by } \Psi | \mathcal{G}) \geq (1 - p)
		\]
		\[
		\mathbb{E}[X_i | \mathcal{G}] = \sum_{j \in [|N^\star_{i}|]} x_j \times {\sf Pr}(x_j | \mathcal{G}) \geq (1 - p)|N^\star_{i}|
		\]
		Let $Y_i = |N^\star_{i}| - |N_{i}|$, where $N_{i}$ is the set of clauses satisfied negatively by $\Psi$. Note that,
		\[
		Y_i = |N^\star_{i}| - |N_{i}| \leq |N^\star_{i} \setminus N_{i}| = |N^\star_{i}| - X_i
		\]
		Thus,
		\[
		\mathbb{E}[Y_i | \mathcal{G}] \leq |N^\star_{i}| - \mathbb{E}[X_i | \mathcal{G}] \leq p|N^\star_{i}|
		\]
		Since $Y_i \geq 0$, we can use Markov's inequality and get
		\[
		{\sf Pr}(Y_i \geq 2rp|N^\star_{i}| | \mathcal{G}) \leq \frac{\mathbb{E}[Y_i]}{2rp|N^\star_{i}|} \leq \frac{1}{2r}
		\]
		Since $Y_i = |N^\star_{i}| - |N_{i}|$, we get
		\[
		{\sf Pr}(|N_{i}| \leq (1 - 2rp)|N^\star_{i}| | \mathcal{G}) \leq \frac{1}{2r}
		\]
		By union bound,
		\[
		{\sf Pr} \lr{ \exists i \in [r], |N_{i}| \leq (1 - 2rp)|N^\star_{i}| | \mathcal{G} } \leq \sum_{i \in [r]} {\sf Pr}(|N_{i}| \leq (1 - 2rp)|N^\star_{i}| | \mathcal{G}) \leq \frac{1}{2}
		\]
		This implies that
		\[
		{\sf Pr} \lr{ \forall i \in [r], |N_{i}| > (1 - 2rp)|N^\star_{i}| | \mathcal{G} } \geq \frac{1}{2}
		\]
		Setting $p = \frac{ \epsilon}{2r}$ gives us the required result, i.e., with probability at least $1/2$, for all colors $i \in [r]$, $|N_{i}| > (1 - \epsilon)|N^\star_{i}|$.
		%
	\end{proof}
	
	\begin{lemma}\label{lem:red-fwd}
		If $\cI= (\Phi, \chi, t, k)$  is a yes-instance of \fmaxsat, then with probability at least $(\frac{\epsilon}{2r})^k$, 
		the reduced instance ${\cal J}_\Psi=(\cU, \cF, \chi', t', k')$ is a yes-instance of  \fmaxcov.
	\end{lemma}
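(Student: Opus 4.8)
The plan is to condition on the good event $\mathcal{G}$ (that $\Psi$ agrees with $\Psi^\star$ on $T(\Psi^\star)$, which happens with probability at least $p^k = (\epsilon/2r)^k$) together with the event of \Cref{clm:satneg} (that $|N_i| \ge (1-\epsilon)|N_i^\star|$ for every color $i$, which happens with conditional probability at least $1/2$). The product of these probabilities is at least $\tfrac12 (\epsilon/2r)^k$, which I would absorb into the claimed $(\epsilon/2r)^k$ bound after noting (as the paper's informal statement does) that one can just rescale constants, or simply carry the extra $\tfrac12$; I'd check which convention \Cref{alg:red} is using, but the essential content is that the success probability is $(\epsilon/2r)^{\Oh(k)}$. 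So from now on assume both events hold.

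Under $\mathcal{G}$, I claim the family $\cF^\star \coloneqq \LR{ f_v : v \in T(\Psi^\star) }$ is a feasible solution to ${\cal J}_\Psi = (\cU, \cF, \chi', t', k')$. First, $|T(\Psi^\star)| \le k = k'$, and every $v \in T(\Psi^\star)$ survives into $\var{\Phi'}$ because $\Psi(v) = 1$ under $\mathcal{G}$ — so I should double-check that a variable set to $1$ is not pruned in Step~2 of \Cref{alg:red} (it is pruned only if it occurs as a negative literal or is set to $0$; but a positive occurrence of $v$ in a surviving clause keeps $v$ alive, and I need that each $v \in T(\Psi^\star)$ has at least one such positive occurrence — if not, $f_v$ is just empty and can be dropped, still fine). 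The key combinatorial identity is: for a clause $c \notin N$ (i.e., not satisfied negatively by $\Psi$), $c$ is covered by $\cF^\star$ in ${\cal J}_\Psi$ if and only if $c$ contains a positive literal $v$ with $\Psi^\star(v) = 1$, i.e., if and only if $c \in P^\star$ up to the set $N$. More precisely I want: for each color $i$, $N_i(\cF^\star)$ — the elements of color $i$ covered by $\cF^\star$ — equals (or contains) $P_i^\star \setminus N$, and I'd argue $|P_i^\star \setminus N| \ge |P_i^\star| - |N_i|$... wait, that's the wrong direction; let me instead directly count: the clauses of color $i$ covered by $\cF^\star$ in $\cU = \cla{\Phi'}$ are exactly those surviving clauses of color $i$ containing a positive literal set to $1$ by $\Psi^\star$. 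Every clause in $P_i^\star$ is satisfied by $\Psi^\star$ via a positive literal; conditioned on $\mathcal{G}$ this literal is also set to $1$ by $\Psi$, and a clause in $P_i^\star$ is not satisfied negatively by $\Psi^\star$ hence (need to argue) not removed — actually $P_i^\star$ clauses could still be in $N$ if $\Psi$ sets some \emph{other} negative literal to $1$; those get removed. So the count is $|P_i^\star| - |P_i^\star \cap N|$.

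Now I assemble the arithmetic. Since $\Psi^\star$ is feasible for \fmaxsat, the number of clauses of color $i$ it satisfies is $|N_i^\star| + |P_i^\star| \ge t_i$. Clauses of color $i$ removed into $N$ number $|N \cap \chi^{-1}(i)| = |N_i|$ (here $N_i$ as defined in the claim is the set of color-$i$ clauses satisfied negatively by $\Psi$, and these are exactly $N \cap \chi^{-1}(i)$). By \Cref{clm:satneg}, $|N_i| \ge (1-\epsilon)|N_i^\star|$, equivalently $|N_i^\star| \le \tfrac{|N_i|}{1-\epsilon}$. The coverage $\cF^\star$ achieves for color $i$ is at least $|P_i^\star| - |P_i^\star \cap N| \ge |P_i^\star| - (|N_i| - \text{(negatively-satisfied-by-$\Psi$ clauses already counted)})$ — here I need to be careful that $N = $ all clauses satisfied negatively by $\Psi$, and a clause in $P_i^\star$ lands in $N$ only if $\Psi$ satisfies it negatively, so $|P_i^\star \cap N| \le |N_i| - |N_i^\star \cap N_i|$; under $\mathcal{G}$ every clause negatively satisfied by $\Psi^\star$ is also negatively satisfied by $\Psi$ (the same negative literal set to $1$), so $N_i^\star \subseteq N_i$ and $|N_i^\star \cap N_i| = |N_i^\star|$, giving $|P_i^\star \cap N| \le |N_i| - |N_i^\star|$. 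Therefore coverage $\ge |P_i^\star| - |N_i| + |N_i^\star| \ge (t_i - |N_i^\star|) - |N_i| + |N_i^\star| = t_i - |N_i|$. Finally $t'_i = t_i - \tfrac{|N \cap \chi^{-1}(i)|}{1-\epsilon} = t_i - \tfrac{|N_i|}{1-\epsilon} \le t_i - |N_i| \le $ the coverage achieved, so $\cF^\star$ meets every demand $t'_i$, proving ${\cal J}_\Psi$ is a yes-instance.

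The main obstacle I anticipate is bookkeeping around the set $N$: a clause in $P_i^\star$ may be deleted because $\Psi$ happens to set \emph{some other} negative literal true, and pinning down $|P_i^\star \cap N| \le |N_i| - |N_i^\star|$ requires the observation that under $\mathcal{G}$ the deletions ``charged'' to color $i$ split cleanly as $N_i^\star$ (deletions also made by $\Psi^\star$) plus at most $|N_i| - |N_i^\star|$ extra. Getting the direction of every inequality right — particularly that $t'_i \le t_i - |N_i|$ uses $|N_i|/(1-\epsilon) \ge |N_i|$ — is the delicate part, but it is pure arithmetic once the structural containment $N_i^\star \subseteq N_i$ and the coverage identity are established. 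Everything else (polynomial running time, validity of the construction) is routine and already asserted before the algorithm box.
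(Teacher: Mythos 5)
Your overall strategy is the same as the paper's: condition on the good event $\mathcal{G}$ (probability $(\epsilon/2r)^k$) and the event of \Cref{clm:satneg} (conditional probability $\ge 1/2$), then exhibit an explicit feasible family for $\mathcal{J}_\Psi$ built from the sets $f_v$ with $v$ in the true side of $\Psi^\star$. Your observation that the stated probability should carry an extra factor of $\frac12$ is a fair catch; the paper's proof silently invokes \Cref{clm:satneg} without surfacing that factor in the lemma statement.

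However, there is a genuine error in your containment argument. You assert that under $\mathcal{G}$, ``every clause negatively satisfied by $\Psi^\star$ is also negatively satisfied by $\Psi$, so $N_i^\star \subseteq N_i$.'' This is the wrong direction. The good event is $T(\Psi^\star) \subseteq T(\Psi)$, equivalently $F(\Psi) \subseteq F(\Psi^\star)$: variables set to $0$ by $\Psi$ are also set to $0$ by $\Psi^\star$, but not conversely. A clause $c \in N^\star$ has some $\neg y$ with $\Psi^\star(y)=0$, yet $\Psi(y)$ may well be $1$ under $\mathcal{G}$, so $c$ need not lie in $N$. The correct containment is $N_i \subseteq N_i^\star$ (and in particular $|N_i| \le |N_i^\star|$, which is consistent with \Cref{clm:satneg} giving a \emph{lower} bound on $|N_i|$). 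This false containment propagates into the step ``coverage $\ge t_i - |N_i|$'', which is not a valid bound: coverage equals $|P_i^\star|$, which is only guaranteed to be $\ge t_i - |N_i^\star|$, and since $|N_i| \le |N_i^\star|$ one can have $|P_i^\star| < t_i - |N_i|$.

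The cleaner route, which the paper takes, is to observe that $\mathcal{G}$ directly implies $P^\star \subseteq P$: for any $c \in P^\star$, every negative literal $\neg y$ of $c$ has $\Psi^\star(y)=1$, hence $\Psi(y)=1$, so $c$ cannot be satisfied negatively by $\Psi$ and is not removed; and the positive literal in $c$ that is true under $\Psi^\star$ is also true under $\Psi$, so it survives the pruning. Thus $P_i^\star \cap N = \emptyset$ outright, coverage of color $i$ by your $\cF^\star$ is at least $|P_i^\star|$, and the arithmetic is simply $|P_i^\star| \ge t_i - |N_i^\star| \ge t_i - \frac{|N_i|}{1-\epsilon} = t'_i$, using only \Cref{clm:satneg} for the middle inequality. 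Your proof would be correct if you replaced the false containment step with this direct observation; as written it relies on a claim that is false under $\mathcal{G}$.
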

	
	\begin{proof}
		Let $\cI$ be a \yin of \fmaxsat and let $\Psi^\star$ be an optimal weight $k$ assignment. Further, let $N^\star$ be the set of clauses satisfied negatively by $\Psi^\star$, i.e., every clause in $N^\star$ contains a negative literal that is set to $1$, and let $P^\star$ be the set of clauses satisfied only positively by $\Psi^\star$, i.e., every clause in $P^\star$ contains a positive literal that is set to $1$ and no negative literal in this clause is set to $1$. 
		Then, there exists a set $V_{P^\star} \subseteq \var{P^\star}$ of size at most $k$ that satisfies all the clauses in $P^\star$ positively, i.e., for each clause in $P^\star$, there is a variable in $V_{P^\star}$ that occurs as a positive literal in it and is assigned $1$ under $\Psi^\star$. 
		Let $\Psi$ be a good assignment which is generated with probability at least $(\frac{\epsilon}{2r})^k$. 
		Since $\Psi$ is a good assignment, $T(\Psi^\star)\subseteq T(\Psi)$ and $F(\Psi)\subseteq F(\Psi^\star)$. 
		Hence, $P^\star \subseteq P$. Thus, $P^\star \subseteq \cU$ and for each variable in $\var{P^\star}$, we have a set in the family $\cF$. 
		Let $Z=\{f_v\in \cF \colon v\in V_{P^\star}\}$. We claim that $Z$ is a solution to ${\cal J}_\Psi$. Clearly, $Z$ covers at least $|P_i^\star|$ elements in $\cU$, for each $i\in [r]$. We claim that $t_i'\leq |P_i^\star|$.  Since $\Psi^\star$ is a solution to $\cI$, it satisfies $t_i$ clauses for each $i\in [r]$. Since $t_i=|P_i^\star|+|N_i^\star|$, due to \Cref{clm:satneg}, we know that $t_i\leq |P_i^\star|+\frac{|N_i|}{1-\epsilon}$. Thus,  $|P_i^\star| \geq t_i - \frac{|N_i|}{1-\epsilon}$. Since $t'_i= t_i - \frac{|N_i|}{1-\epsilon}$, $t_i'\leq |P_i^\star|$. This completes the proof. 
	\end{proof}
	\begin{lemma}\label{lem:red-rev}
		Assume that  $\cI= (\Phi, \chi, t, k)$  is a yes-instance of \fmaxsat. If there exists  $(1-\epsilon)$-approximate solution for ${\cal J}_\Psi=(\cU, \cF, \chi', t', k')$, where $\Psi$ is a good assignment, then there exists  $(1-\epsilon)$-approximate solution for  $\cI$ with probability at least $\frac{1}{2}$. 
	\end{lemma}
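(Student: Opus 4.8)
The plan is to take the promised $(1-\epsilon)$-approximate solution $\cF''$ of ${\cal J}_\Psi$ and lift it, in the obvious way, to a weight-$k$ assignment $\Psi''$ of $\Phi$; the payoff will be that the clauses $\Psi''$ satisfies split into two \emph{disjoint} groups — the clauses of $N$ that the reduction discarded, and the clauses of $\cla{\Phi'}$ whose elements $\cF''$ covers — and that their sizes add up, colour by colour, to exactly the required demand $(1-\epsilon)t_i$.

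Concretely, the given $\cF'' \subseteq \cF$ satisfies $|\cF''| \le k' = k$ and covers at least $(1-\epsilon)t'_i$ elements of colour $i$ for each $i \in [r]$. Every member of $\cF$ has the form $f_v$ for some $v \in \var{\Phi'}$, so I would pick one such variable per member of $\cF''$ to obtain $V'' \subseteq \var{\Phi'}$ with $|V''| \le k$, and set $\Psi''(v) = 1$ iff $v \in V''$. This immediately gives weight at most $k$; the work is in counting satisfied clauses per colour. First, $\Psi''$ satisfies \emph{every} clause of $N$, and negatively: if $c \in N$ then $c$ contains a negative literal $\neg v$ with $\Psi(v)=0$, so $v \in F(\Psi)$, so by the construction of $\Phi'$ in \Cref{alg:red} the variable $v$ was deleted from all clauses, hence $v \notin \var{\Phi'} \supseteq V''$, hence $\Psi''(v)=0$ and $\neg v$ is still true. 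Thus $\Psi''$ satisfies at least $|N_i| = |N \cap \chi^{-1}(i)|$ clauses of colour $i$ coming from $N$. Second, any clause $c \in \cla{\Phi'}$ whose element $\cF''$ covers contains some $v \in V''$ as a positive literal of $\Phi$, and $\Psi''(v)=1$, so $c$ is satisfied by $\Psi''$; there are at least $(1-\epsilon)t'_i$ of these of colour $i$. Since $\cla{\Phi'} = \cla{\Phi}\setminus N$, these two families are disjoint, so for each colour $i$ the assignment $\Psi''$ satisfies at least $|N_i| + (1-\epsilon)t'_i$ clauses of colour $i$. Plugging in $t'_i = t_i - \tfrac{|N_i|}{1-\epsilon}$ from \Cref{alg:red} yields exactly $(1-\epsilon)t_i$, so $\Psi''$ is a $(1-\epsilon)$-approximate solution of $\cI$.

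The lifting step above is deterministic, so the ``probability $\ge \tfrac12$'' in the statement is really the slack of \Cref{clm:satneg}: the hypothesis that ${\cal J}_\Psi$ admits a $(1-\epsilon)$-approximate solution is only guaranteed — given that $\Psi$ is good — with probability at least $\tfrac12$, via the argument of \Cref{lem:red-fwd}, which uses \Cref{clm:satneg} to ensure the demands $t'$ are set small enough. I do not expect any genuinely hard step here; the proof is bookkeeping. The one place that needs care is the assertion that $\Psi''$ keeps all of $N$ satisfied: a priori, discarding all but $\le k$ of the (very many) variables that $\Psi$ sets to $1$ could wreck negatively-satisfied clauses, but it cannot, because such a clause is witnessed by a literal $\neg v$ with $\Psi(v)=0$ and the lift only ever turns ones into zeros, never the reverse, so every witness survives. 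The other mild subtlety is that the demand arithmetic must close on the nose, with no further $(1-\epsilon)$ loss, which works precisely because $N$ and $\cla{\Phi'}$ partition the clause set, making the two coverage contributions additive.
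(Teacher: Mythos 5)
Your proof is correct and follows essentially the same route as the paper: lift the $(1-\epsilon)$-approximate coverage solution to a weight-$k$ assignment, observe that the lift preserves every negatively-satisfied clause in $N$ (because the lift only ever turns $1$'s into $0$'s, never the reverse, so the witnesses $\neg v$ with $\Psi(v)=0$ survive), note that the $N$-clauses and the covered $\cla{\Phi'}$-clauses are disjoint, and check that $|N_i| + (1-\epsilon)t'_i = (1-\epsilon)t_i$ colour by colour. If anything your bookkeeping is slightly tighter than the paper's own: you charge the full $|N_i|$ (all of $N$ is preserved), which is exactly what closes the arithmetic on the nose, whereas the paper's write-up detours through \Cref{clm:satneg} to first assert the weaker bound $(1-\epsilon)|N_i^\star|$ before switching to $|N_i|$ in the final line — a step that, as your remark about the ``probability $\ge 1/2$'' being external to the deterministic lifting correctly suggests, isn't actually load-bearing here.
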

	
	\begin{proof}
		Let $\Psi^\star$ be an optimal assignment. Due to \Cref{clm:satneg},  $\Psi$ satisfies at least $(1 - \epsilon)|N_i^\star|$ clauses negatively, for each $i\in [r]$, with probability at least $1/2$. Let $S$ be a $(1-\epsilon)$-approximate solution to ${\cal J}_\Psi$. We construct an assignment $\sigma$ as follows: if $f_x\in S$, then $\sigma(x)=1$, otherwise $0$. We claim that $\sigma$ is a $(1-\epsilon)$-approximate solution to $\cI$. Due to the construction of ${\cal J}_\Psi$, note that $\cF$ does not contain a set corresponding to the variable that is set to $0$ by $\Psi$. Thus, if $\Psi(x)=0$, then $\sigma(x)=0$. Hence, $\sigma$ satisfies at least $(1 - \epsilon)|N_i^\star|$ clauses negatively, for each $i\in [r]$, with probability at least $1/2$. Next, we argue that $\sigma$ satisfies at least $(1 - \epsilon)|P_i^\star|$ clauses only positively, for each $i\in [r]$. Since $S$ is a $(1-\epsilon)$-approximate solution to ${\cal J}_\Psi$, for each $i\in [r]$, $S$ covers at least $(1-\epsilon)t_i'$ elements. 
		Recall that $\cU$ contains an element corresponding to each clause in $\cup_{i\in [r]}P_i$. Thus, $\sigma$ satisfies at least $(1-\epsilon)t_i'$ clauses only positively for each $i\in [r]$. Recall that $t_i'=t_i-\frac{|N_i|}{1-\epsilon}$. Thus, $|P_i|+|N_i| \geq (1-\epsilon)(t_i-\frac{|N_i|}{1-\epsilon})+|N_i| = (1-\epsilon)t_i$. Hence, $\sigma$ is a factor $(1-\epsilon)$-approximate solution for $\cI$. 
	\end{proof}
	Due to Lemma~\ref{lem:red-fwd} and ~\ref{lem:red-rev}, we have the following result. 
	
	\begin{theorem}\label{thm:red-max-cov}
		There exists a polynomial time randomized algorithm that given a \yin $\cI$ of \fmaxsat generates a \yin ${\cal J}$ of \fmaxcov with probability at least $(\frac{\epsilon}{2r})^k$. Furthermore, given a factor $(1-\epsilon)$-approximate solution of ${\cal J}$, it can be extended to a $(1-\epsilon)$-approximate solution of ${\cal I}$ with probability at least $1/2$. 
	\end{theorem}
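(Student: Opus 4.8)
The plan is to take Algorithm~\ref{alg:red} itself as the claimed randomized reduction and to assemble Theorem~\ref{thm:red-max-cov} from the two lemmas already established. First I would observe that every step of Algorithm~\ref{alg:red} --- sampling $\Psi$ with $p = \epsilon/(2r)$, deleting the set $N$ of negatively satisfied clauses, stripping negative literals and variables set to $0$ from the survivors, and building the monotone instance ${\cal J}_\Psi = (\cU, \cF, \chi', t', k')$ --- is implementable in time polynomial in $|\Phi|$, so the overall algorithm runs in polynomial time. It then remains only to package the two probabilistic guarantees.

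For the first assertion (a yes-instance $\cI$ of \fmaxsat\ is mapped to a yes-instance of \fmaxcov\ with probability at least $(\epsilon/2r)^k$), I would simply invoke Lemma~\ref{lem:red-fwd}: fixing an optimal weight-$k$ assignment $\Psi^\star$, the good event $\mathcal{G} = \{T(\Psi^\star) \subseteq T(\Psi)\}$ occurs with probability at least $p^k = (\epsilon/2r)^k$, and conditioned on $\mathcal{G}$ the explicit family $Z = \{f_v : v \in V_{P^\star}\}$ witnesses feasibility of ${\cal J}_\Psi$, where the demand bound $t'_i \le |P_i^\star|$ is supplied by Claim~\ref{clm:satneg} together with the identity $t_i = |P_i^\star| + |N_i^\star|$.

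For the second assertion (any $(1-\epsilon)$-approximate solution of ${\cal J}_\Psi$ lifts back to a $(1-\epsilon)$-approximate solution of $\cI$ with probability at least $\tfrac{1}{2}$), I would invoke Lemma~\ref{lem:red-rev}: conditioned on $\Psi$ being good, Claim~\ref{clm:satneg} guarantees, with probability at least $\tfrac{1}{2}$, that $\Psi$ already satisfies at least $(1-\epsilon)|N_i^\star|$ clauses negatively in each color $i$; then, reading $\sigma(x) = 1$ exactly when $f_x \in S$ from a $(1-\epsilon)$-approximate $S \subseteq \cF$, the fact that $\cF$ contains no set for a variable zeroed by $\Psi$ keeps the negatively satisfied clauses intact, and the identity $t'_i = t_i - |N_i|/(1-\epsilon)$ converts the bound of at least $(1-\epsilon)t'_i$ covered elements per color into a bound of at least $(1-\epsilon)t_i$ satisfied clauses per color.

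The one point I would be careful about is the conditioning bookkeeping: both lemmas live under the event $\mathcal{G}$ (and, on the negative side, under the $\tfrac{1}{2}$-probability event of Claim~\ref{clm:satneg}), so the theorem must be stated --- as it is --- in the two-stage form ``generates a yes-instance with probability $(\epsilon/2r)^k$; given an approximate solution, lifts it with probability $\tfrac{1}{2}$'' rather than as a single multiplied-out bound. Beyond getting this phrasing right, there is no real obstacle here: the entire combinatorial content sits in Lemmas~\ref{lem:red-fwd} and~\ref{lem:red-rev} and in Claim~\ref{clm:satneg}, and the proof of Theorem~\ref{thm:red-max-cov} is just their juxtaposition.
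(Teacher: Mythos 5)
Your proposal matches the paper's own proof, which is literally the one-line observation that the theorem follows from Lemmas~\ref{lem:red-fwd} and~\ref{lem:red-rev}. Your extra care about the conditioning bookkeeping and the two-stage phrasing is a correct and useful clarification of the same argument, not a departure from it.
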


	Note that if the variable-clause incidence graph of the input formula belongs to a subgraph closed family $\mathcal{H}$, then the incidence graph of the resulting instance of \fmaxcov will also belong to $\mathcal{H}$. Thus, due to \Cref{thm:red-max-cov} and Theorem 1.2 in \cite{DBLP:conf/soda/0001KPSS0U23}, we have the following result, which is an improvement over Theorem 1.1 in \cite{DBLP:conf/soda/0001KPSS0U23}. 
	
	\begin{theorem} \label{thm: corrkdd}
		There is a randomized algorithm that given a \yin $\cI$ of {\sc CC-Max-SAT}, where the variable-clause incidence graph is $K_{d,d}$-free, returns a factor $(1-\epsilon)$-approximate solution with probability at least $(1-\frac{1}{e})$, and runs in time $(\frac{rdk}{\epsilon})^{\OO(dk)}(n+m)^{\OO(1)}$. 
	\end{theorem}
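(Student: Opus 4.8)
The plan is to prove the statement by a straightforward composition: feed the input through the randomized reduction of \Cref{thm:red-max-cov} (from \fmaxsat to \fmaxcov), solve the resulting \fmaxcov instance with the known \FPTAS of Jain et al.\ for \kddfree set systems (Theorem~1.2 of \cite{DBLP:conf/soda/0001KPSS0U23}), lift the approximate solution back through \Cref{lem:red-rev}, and finally repeat the whole pipeline enough times to boost the success probability to $1-\tfrac1e$.

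Concretely, given a \yin $\cI=(\Phi,\chi,t,k)$, I would first run \Cref{alg:red} with $p=\tfrac{\epsilon}{2r}$ to obtain the \fmaxcov instance ${\cal J}_\Psi=(\cU,\cF,\chi',t',k)$. Combining \Cref{clm:satneg}, \Cref{lem:red-fwd}, and \Cref{lem:red-rev}, there is an event of probability at least $\tfrac12\lr{\tfrac{\epsilon}{2r}}^{k}$ on which both (i) ${\cal J}_\Psi$ is itself a \yin of \fmaxcov, and (ii) any $(1-\epsilon)$-approximate solution of ${\cal J}_\Psi$ can be converted in polynomial time into a $(1-\epsilon)$-approximate solution of $\cI$; call this the good event. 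Conditioned on it, it suffices to produce a $(1-\epsilon)$-approximate solution of ${\cal J}_\Psi$.

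To make the \FPTAS applicable I would invoke the structural observation recorded right after \Cref{thm:red-max-cov}: the incidence graph of ${\cal J}_\Psi$ arises from the variable--clause incidence graph of $\Phi$ by deleting vertices (the negatively-satisfied clauses collected in $N$, and the variables set to $0$ by $\Psi$) and deleting edges (the negative-literal incidences). In set-system language this only removes sets and elements, so the property ``no $d$ sets share $d$ elements'' is preserved and ${\cal J}_\Psi$ is again a \kddfree instance. I can therefore run the \FPTAS of \cite{DBLP:conf/soda/0001KPSS0U23} on ${\cal J}_\Psi$: on the good event it returns a $(1-\epsilon)$-approximate solution of ${\cal J}_\Psi$ in time $\lr{\tfrac{dk}{\epsilon}}^{\OO(dk)}(n+m)^{\OO(1)}$, which I then lift to $\cI$ through \Cref{lem:red-rev}.

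Finally, running the pipeline $N=2\lr{\tfrac{2r}{\epsilon}}^{k}$ times independently makes the probability that every run fails at most $\lr{1-\tfrac12\lr{\tfrac{\epsilon}{2r}}^{k}}^{N}\le e^{-1}$, so the algorithm succeeds with probability at least $1-\tfrac1e$, with total running time $N\cdot\lr{\tfrac{dk}{\epsilon}}^{\OO(dk)}(n+m)^{\OO(1)}=\lr{\tfrac{rdk}{\epsilon}}^{\OO(dk)}(n+m)^{\OO(1)}$, the claimed bound. I do not anticipate a real obstacle here: the theorem is essentially plug-and-play, and the only two points deserving a brief explicit check are that \kddfree instances form a subgraph-closed family (so the cited \FPTAS genuinely applies to ${\cal J}_\Psi$), and that the two probabilistic ingredients behind \Cref{thm:red-max-cov}---``$\Psi$ is good'' and the conclusion of \Cref{clm:satneg}---are composed correctly, so that the per-run success probability is exactly the $\tfrac12\lr{\tfrac{\epsilon}{2r}}^{k}$ against which $N$ is calibrated.
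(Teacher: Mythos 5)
Your proposal reconstructs exactly the paper's (very terse) proof: run \Cref{alg:red} to reduce to \fmaxcov, observe that $K_{d,d}$-freeness is preserved because the reduction only deletes vertices and edges of the incidence graph, apply Theorem~1.2 of \cite{DBLP:conf/soda/0001KPSS0U23} to the resulting instance, lift the solution back via \Cref{lem:red-rev}, and repeat $\lr{\frac{r}{\epsilon}}^{\Oh(k)}$ times to push the failure probability below $1/e$. Your probability accounting is in fact a bit more careful than the paper's: you correctly note that the event of \Cref{clm:satneg} (conditional probability $\ge 1/2$ given a good $\Psi$) is the single event underlying both \Cref{lem:red-fwd} and \Cref{lem:red-rev}, so the per-run success probability is $\frac12\lr{\frac{\epsilon}{2r}}^{k}$ rather than a product of two independent halves, and you calibrate $N$ against that. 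The running-time arithmetic $\lr{\frac{r}{\epsilon}}^{\Oh(k)}\cdot\lr{\frac{dk}{\epsilon}}^{\Oh(dk)}\le\lr{\frac{rdk}{\epsilon}}^{\Oh(dk)}$ is also right. This is the same argument the paper gives, just spelled out.
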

	
	The $(\frac{r}{\epsilon})^{\OO(k)}$ factor in the running time of the algorithm in Theorem~\ref{thm: corrkdd} comes by repeating the algorithm in Theorem~\ref{thm:red-max-cov}, followed by Theorem 1.2 in \cite{DBLP:conf/soda/0001KPSS0U23}, independently $(\frac{r}{\epsilon})^{\OO(k)}$ many times. This also boosts the success probabilty to at least $(1-\frac{1}{e})$. 
	
	\begin{remark}
		Note that the reduction from \fmaxsat to \fmaxcov also works in presence of matroid constraint(s) on the set of variables assigned $1$. Recall that in the former (resp. latter) problem, we are given a matroid $\cM$ on the set of variables (resp. sets), and the set of at most $k$ variables assigned $1$ (resp. at most $k$ sets chosen in the solution) is required to be an independent set in $\cM$. This follows from the fact that the randomized algorithm preserves the optimal independent set in the set cover instance with good probability. 
	\end{remark}

\section{An \textsf{FPT-AS} for \pbds with Bounded Frequency} 
\label{sec:bounded-degree}

We first design the \textsc{Bucketing} subroutine in \Cref{subsec:bucketing}. Then, in \Cref{subsec:fptas-freq-d}, we design and analyze the \FPTAS for \pbds  when each vertex in $B$ has degree at most $d$ (i.e., frequency at most $d$). 

\subsection{The \bucketing Subroutine.}\label{subsec:bucketing}
In this section, we design a subroutine, called \bucketing, which (or slight variations thereof) will be crucially used to design \FPTAS in the subsequent sections. As mentioned in the introduction, we divide the set of vertices of $A$ into a bounded number of equivalence classes depending on the size of their $j$-neighborhoods. Fix a color $j$. At a high level, two vertices $u$ and $v$ belonging to a particular equivalence class will have $j$-degrees that are \emph{approximately equal}. There are two exceptions to this: (A) For a color $j$, all vertices of degree at least $t_j$ are treated as equivalent as far as color $j$ is concerned (i.e., we still classify such vertices based on their $j'$-degrees for other colors $j'$) -- since any single vertex from such a class is sufficient to entirely take care of color $j$. (B) all vertices of $j$-degree less than $\frac{\epsilon t_j}{2k}$ are treated as equivalent as far as color $j$ is concerned, for the following reason. The difference between the $j$-degrees between such two vertices is at most $\frac{\epsilon t_j}{2k}$. Thus, even if we make a bad choice at most $k$ times, we only lose at most $\epsilon t_j$ coverage for the color $j$. Thus, the \emph{interesting range} of $j$-degrees is between $\left[\frac{\epsilon t_j}{2k}, t_j\right]$, which is sub-divided into intervals of range $(1+\epsilon)$. It is easy to see that the vertices are partitioned into $\Oh(\log_{1+\epsilon}k)$ classes for a particular color. We proceed in a similar manner for each color $j \in [r]$, and then we obtain our final set of equivalence classes, such that all the vertices in a particular class are \emph{equivalent} w.r.t.\ each color in the sense described above. 


Our algorithm for \pbds is recursive, and will use the \textsc{Bucketing} procedure as a subroutine. During the course of the recursive algorithm, we may modify the instance in a variety of ways -- delete a subset of vertices, restrict the coverage function $t$, decrement the value of $k$, or delete a subset of colors. However, in the \bucketing subroutine we require the \emph{original} value of $k$, which we will denote by $k^*$. Now, we formally state the procedure.

\medskip\noindent\bucketing procedure.
\\Let $\cI = (G = (A \uplus B, E), [r], f, t, k)$ be the \emph{current} instance of \pbds. Then, let $\lambda \coloneqq \lceil \log_{(1+\epsilon)}\frac{2k^*}{\epsilon} \rceil$. Fix a color $j \in [r]$. For every $1 \le \alpha \le \lambda$, we define $$A(j, \alpha) \coloneqq \LR{ v \in A : \frac{t_j}{(1+\epsilon)^{\alpha}} \le d^j(v) < \frac{t_j}{(1+\epsilon)^{\alpha-1}}  }.$$
We also define $A(j, 0) \coloneqq \LR{ v \in A : d_j(v) \ge t_j }$, and $A(j, \lambda+1) \coloneqq \LR{v \in A: d_j(v) < \frac{t_j}{(1+\epsilon)^{\lambda}} }$. 

Let $\mathbf{V} = \{0, 1, \ldots, \lambda, \lambda+1\}^r$, and consider an arbitrary vector $\mathbf{v} \in \mathbf{V}$. Let $\mathbf{v} = (\alpha_1, \alpha_2, \ldots, \alpha_r)$. Then, we define $A(\mathbf{v}) \coloneqq \bigcap_{j = 1}^r A(j, \alpha_j)$. We call any such $A(\mathbf{v})$ as a \emph{bag}. The \bucketing procedure first computes the set of bags as defined above, and returns only the set of non-empty bags, which form a partition of $A$. It is easy to see that the procedure can be implemented in polynomial time.

\begin{observation} \label{obs:buckets}
	The items 1-3 in the following hold for any color $j \in [r]$.
	\begin{enumerate}
		\item For any $v \in A(j, \lambda+1)$, $d_j(v) < \frac{\epsilon t_j}{2k^*}$.
		\item For any $v \in A(j, 0)$, $d_j(v) \ge t_j$.
		\item For any $1 \le \alpha \le \lambda$, and for any $u, v \in A(j, \alpha)$, it holds that $\frac{d_j(u)}{(1+\epsilon)} \le d_j(v) \le (1+\epsilon) \cdot d_j(u)$.
	\end{enumerate}
	The number of bags returned by \bucketing is  bounded by $ \prod_{j = 1}^r (\lambda+2) \le \lr{\frac{6 \log k^*}{\epsilon^2}}^r =: L$.
\end{observation}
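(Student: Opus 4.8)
The plan is to verify the four claims essentially by unwinding the definitions of the sets $A(j,\alpha)$ and the parameter $\lambda$. For item~1, a vertex $v \in A(j,\lambda+1)$ satisfies $d_j(v) < \frac{t_j}{(1+\epsilon)^{\lambda}}$ by definition, so it suffices to show $(1+\epsilon)^{\lambda} \ge \frac{2k^*}{\epsilon}$. This is immediate from the choice $\lambda = \lceil \log_{(1+\epsilon)} \frac{2k^*}{\epsilon} \rceil$: taking logarithms base $(1+\epsilon)$, we have $\lambda \ge \log_{(1+\epsilon)}\frac{2k^*}{\epsilon}$, hence $(1+\epsilon)^{\lambda} \ge \frac{2k^*}{\epsilon}$, and therefore $\frac{t_j}{(1+\epsilon)^{\lambda}} \le \frac{\epsilon t_j}{2k^*}$. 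Note I would replace the $k$ in the statement by $k^*$ as in the definition (the excerpt writes $2k$ in one place and $2k^*$ in the definition of $\lambda$; I would be consistent and use $k^*$ throughout, pointing out that this is the original budget). Item~2 is literally the definition of $A(j,0)$, so nothing is to prove.

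For item~3, let $u, v \in A(j,\alpha)$ with $1 \le \alpha \le \lambda$. By definition of the bucket, both $d_j(u)$ and $d_j(v)$ lie in the half-open interval $\bigl[\frac{t_j}{(1+\epsilon)^{\alpha}}, \frac{t_j}{(1+\epsilon)^{\alpha-1}}\bigr)$. The ratio of the right endpoint to the left endpoint of this interval is exactly $(1+\epsilon)$, so $\frac{d_j(u)}{d_j(v)} < (1+\epsilon)$ and symmetrically $\frac{d_j(v)}{d_j(u)} < (1+\epsilon)$, which rearranges to the claimed two-sided bound $\frac{d_j(u)}{1+\epsilon} \le d_j(v) \le (1+\epsilon)\, d_j(u)$ (with non-strict inequalities, which is fine since strict implies non-strict).

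For the count of bags: each coordinate $\alpha_j$ of a vector $\mathbf{v} \in \mathbf{V}$ ranges over $\{0, 1, \ldots, \lambda, \lambda+1\}$, a set of size $\lambda + 2$, so $|\mathbf{V}| = (\lambda+2)^r$, and the number of non-empty bags is at most this. It remains to bound $\lambda + 2 \le \frac{6 \log k^*}{\epsilon^2}$. Here I would use $\log_{(1+\epsilon)} x = \frac{\ln x}{\ln(1+\epsilon)}$ together with the standard estimate $\ln(1+\epsilon) \ge \frac{\epsilon}{2}$ for $\epsilon \in (0,1]$ (which follows from $\ln(1+\epsilon) \ge \epsilon - \epsilon^2/2 \ge \epsilon/2$), giving $\lambda \le \log_{(1+\epsilon)}\frac{2k^*}{\epsilon} + 1 \le \frac{2\ln(2k^*/\epsilon)}{\epsilon} + 1$, and then absorbing the additive constants and the $\ln(2/\epsilon)$ term into the stated bound — this is the only genuinely "computational" step, and it is routine constant-chasing assuming, say, $k^* \ge 2$ and $\epsilon \le 1$. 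The main (minor) obstacle is just making sure the crude constant $6$ and the $\epsilon^2$ in the denominator actually dominate; this requires no cleverness, only bounding $\log(2k^*/\epsilon) \le C \log k^* \cdot \log(1/\epsilon)$ or similar and checking the arithmetic, which I would state but not belabor. Finally, that the non-empty bags partition $A$ follows because the sets $A(j,0), A(j,1), \ldots, A(j,\lambda+1)$ partition $A$ for each fixed $j$ (the degree thresholds are nested and exhaustive), and $A(\mathbf{v}) = \bigcap_j A(j,\alpha_j)$ is the common refinement of these $r$ partitions.
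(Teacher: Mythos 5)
Your proof is correct and follows what the paper intends; the paper states \Cref{obs:buckets} without a written proof, as it is meant to follow by unwinding the definitions of $A(j,\alpha)$ and $\lambda$, which is exactly what you do. Items 1--3 and the cardinality bound $|\mathbf{V}| = (\lambda+2)^r$ are handled correctly, and your observation that the paper's informal discussion writes $2k$ while the formal bucketing uses $k^*$ is a fair presentational nit (the formal statement is the authoritative one).

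One small quibble in the constant-chasing sketch: the intermediate bound you float, $\log(2k^*/\epsilon) \le C \log k^* \cdot \log(1/\epsilon)$, is false near $\epsilon = 1$ (the right-hand side tends to $0$). The correct splitting is additive rather than multiplicative: write $\log(2k^*/\epsilon) = \log 2 + \log k^* + \log(1/\epsilon)$ and use $\log(1/\epsilon) = O(1/\epsilon)$; combined with $\log_{1+\epsilon}(x) = O\bigl(\ln(x)/\epsilon\bigr)$ (from $\ln(1+\epsilon) \ge \epsilon/2$), this yields $\lambda + 2 = O\bigl(\log k^* / \epsilon\bigr) + O\bigl(1/\epsilon^2\bigr)$, which for $\epsilon \le 1$ and $k^*$ at least a small constant is at most $\frac{6\log k^*}{\epsilon^2}$. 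Your overall plan (``routine constant-chasing, assuming $k^* \ge 2$ and $\epsilon \le 1$'') is sound once that one bound is repaired.
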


\subsection{The Algorithm} \label{subsec:fptas-freq-d}

Our algorithm for \pbds, when each vertex in $B$ has degree at most $d$ is given in \Cref{alg:freqd}. This algorithm is recursive, and takes as input an instance $\cI$ of \pbds; initially the algorithm is called on the original instance. We assume that we remember the value of $k^*$ from the original instance, since that is used in the \textsc{Bucketing} subroutine. Now we discuss the algorithm. In the algorithm, first we use the \textsc{Bucketing} subroutine to partition the vertices of $A$ into a number of equivalence classes. Then, we pick one of the bags uniformly at random -- essentially, in this step we are trying to guess the bag that has a non-empty intersection with a hypothetical solution $S$. \footnote{We note that this step can be replaced by deterministically branching on each of the bags instead; but since the next step is inherently randomized, we continue with the current presentation for the simplicity of exposition.} 

Suppose we correctly guess such a bag $A(\mathbf{v})$. Then, we arbitrarily pick a vertex $v$ from this bag, and use it to define a probability distribution on all the vertices in $A$. This distribution places a constant ($\ge 1/2$) probability mass on sampling $v$, and the rest of the probability mass is split proportional to the number of common neighbors of a vertex $w$ with $v$. Then, we sample a vertex $u$ according to this distribution, add it to the solution, and recurse on the residual instance. Here, the intuition is that, if the vertices in $S$ have a lot of common neighbors with $v$, then one of the vertices from $S$ will be sampled with reasonably large probability. In this case, we recurse on a vertex from a hypothetical solution, i.e., a ``correct choice''. Otherwise, if the vertices in $S$ have very few common neighbors with $v$, then we claim that we can replace a vertex in $S \cap A(\mathbf{v})$ with $v$, and still obtain a good (hypothetical) solution to compare against. In this case, we argue that $v$ is the ``correct choice''. Now, we state the algorithm formally, and then proceed to the analysis.

\begin{algorithm}[t]
	\caption{\textsc{\pbds}$(\cI= (G = (A \uplus B, E), [r], f, t, k))$} \label{alg:freqd}
	\begin{algorithmic}[1]
		\Statex 
		\If{$k = 0$} 
		\State \Return $\emptyset$ 
		\EndIf
		\State Let $\cal A$ be the set of bags returned by applying \bucketing on $\cI$
		\State Choose a bag $A(\mathbf{v}) \in {\cal A}$ uniformly at random, and select an arbitrary vertex $v \in A(\mathbf{v})$
		\State Define the following quantities w.r.t. the vertex $v$: 
		\State \qquad $\bullet$ For any $w \in A \setminus {v}$ and any color $j$, let $h_j(w) \coloneqq \frac{|N^G_j(w) \cap N^G_j(v)|}{|N^G_j(v)|}$, 
		\State \qquad $\bullet$ Let $p(w) \coloneqq \frac{1}{2rd} \cdot \sum_{j = 1}^r h_j(w)$
		\State \qquad $\bullet$ Let $p(v) \coloneqq \frac{1}{2}$
		\State Sample a vertex $u \in V(G)$ at random proportional to the quantities $p(\cdot)$
		\State $\cI' \gets \textsc{PruneInstance}(\cI, u)$ \bluecomment{Residual instance after adding $u$ to the solution}
		\State \Return $\tilde{S} \cup \{u\}$, where $\tilde{S} \gets \textsc{\pbds}(\cI')$ 
	\end{algorithmic}
\end{algorithm}
\begin{algorithm*}{\textbf{procedure} \textsc{PruneInstance}$(\cI, u)$ \qquad $\cI = (G = (A \uplus B, E), [r], f, t, k)$ and $u \in A$}
	\hrule 
	
	\begin{algorithmic}[1]
		\Statex
		\State Let $J \coloneqq \LR{ j \in [r] : |N_j(u)| \ge t_j }$ \bluecomment{$J$ is the set of colors already satisfied by $u$}
		\If {$J = \emptyset$} 
		\State $G' \coloneqq G \bbslash u$ \bluecomment{Remove $u$ from $A$ and all of its neighbors from $B$}
		\State $f'$ is the restriction of $f$ to $B \setminus \bigcup_{j \in [r]} N_j(u)$
		\State $t'_j = t_j - |N_j(u)|$ for all colors $j$ \bluecomment{Account for addition of $u$ into the solution}
		\State $\cI' = (G', [r], f', t', k-1)$
		\Else
		\State $r' \coloneqq r - |J|$, and w.l.o.g. assume that $J = \LR{r'+1, r'+2, \ldots, r}$
		\State $G' = G[A' \uplus B']$, where $A' \coloneqq A \setminus \{u\}$, and $B' \coloneqq B \setminus \lr{N(u) \cup \bigcup_{j \in J} f^{-1}(j) }$
		\Statex \bluecomment{Remove $u$, its neighbors in $B$, and vertices of all the colors satisfied by $u$}
		\State $f'$ is the restriction of $f$ to $B'$, and $t'_j = t_j - |N_j(u)|$ for $j \in [r'] $
		\State $\cI' = (G', [r'], f', t', k-1)$
		\EndIf
		\State \Return $\cI'$
	\end{algorithmic}
\end{algorithm*}

First we state the following observation.

\begin{observation}\label{obs:prob-distribution}
	$\sum_{w \in A} p(w) \le 1$.
\end{observation}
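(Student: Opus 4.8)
\textbf{Proof plan for Observation~\ref{obs:prob-distribution}.}

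The plan is to bound the total probability mass by splitting the sum over $A$ into the single vertex $v$ and the rest. We have $p(v) = \frac{1}{2}$ by definition, so it suffices to show that $\sum_{w \in A \setminus \{v\}} p(w) \le \frac{1}{2}$. Plugging in the definition $p(w) = \frac{1}{2rd} \cdot \sum_{j=1}^{r} h_j(w)$, it is enough to prove that $\sum_{w \in A \setminus \{v\}} \sum_{j=1}^{r} h_j(w) \le rd$, i.e., that $\sum_{j=1}^{r} \left( \sum_{w \in A \setminus \{v\}} h_j(w) \right) \le rd$. Since there are $r$ colors, the cleanest route is to show that for \emph{each fixed color} $j$, we have $\sum_{w \in A \setminus \{v\}} h_j(w) \le d$, and then sum over the $r$ colors.

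First I would fix a color $j$ and recall that $h_j(w) = \frac{|N^G_j(w) \cap N^G_j(v)|}{|N^G_j(v)|}$. (I would note in passing that if $N^G_j(v) = \emptyset$ then the bag $A(\mathbf{v})$ containing $v$ must be contained in $A(j,\lambda+1)$ with $t_j$ such that $N^G_j(v)$ is genuinely empty; in the algorithm this quantity is only ever used when $v$ has a nonempty $j$-neighbourhood, or one adopts the convention that an empty numerator over an empty denominator is $0$ — either way the bound below holds trivially, so I would assume $|N^G_j(v)| \ge 1$.) Now swap the order of summation and count by neighbours:
\begin{align*}
\sum_{w \in A \setminus \{v\}} h_j(w) &= \frac{1}{|N^G_j(v)|} \sum_{w \in A \setminus \{v\}} |N^G_j(w) \cap N^G_j(v)| = \frac{1}{|N^G_j(v)|} \sum_{\fs \in N^G_j(v)} \bigl|\LR{ w \in A \setminus \{v\} : \fs \in N^G_j(w) }\bigr|.
\end{align*}
For each $\fs \in N^G_j(v) \subseteq B$, the number of vertices $w \in A$ adjacent to $\fs$ is exactly the degree of $\fs$ in $G$, which is at most $d$ by the bounded-frequency hypothesis; removing $v$ only decreases this count, so each inner term is at most $d$. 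Hence $\sum_{w \in A \setminus \{v\}} h_j(w) \le \frac{1}{|N^G_j(v)|} \cdot |N^G_j(v)| \cdot d = d$.

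Summing this over all $r$ colors gives $\sum_{w \in A \setminus \{v\}} \sum_{j=1}^r h_j(w) \le rd$, and therefore $\sum_{w \in A \setminus \{v\}} p(w) = \frac{1}{2rd} \sum_{w \in A \setminus \{v\}} \sum_{j=1}^{r} h_j(w) \le \frac{rd}{2rd} = \frac{1}{2}$. Adding back $p(v) = \frac{1}{2}$ yields $\sum_{w \in A} p(w) \le 1$, as claimed. There is no real obstacle here; the only point requiring a moment's care is the degenerate case $N^G_j(v) = \emptyset$, which is handled either by the convention above or by observing that the algorithm never invokes $h_j$ for such a $v$. One could even note the bound is typically strict, but $\le 1$ is all that is needed (and one normalizes the $p(\cdot)$ values to a genuine distribution afterwards).
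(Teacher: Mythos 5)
Your proof is correct and follows essentially the same route as the paper's: split off $p(v)=\tfrac12$, reduce to bounding $\sum_{w\neq v}\sum_j h_j(w)$ by $rd$, and for each fixed color $j$ bound $\sum_{w\neq v}|N^G_j(w)\cap N^G_j(v)|\le d\cdot|N^G_j(v)|$ by noting each $\fs\in N^G_j(v)$ has degree at most $d$ in $G$. Your added remark about the degenerate case $N^G_j(v)=\emptyset$ is a careful touch but does not change the argument.
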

\begin{proof}
	If $v$ is the vertex chosen in line 5 of the algorithm, then $p(v) \coloneqq 1/2$. Therefore, we show that $\sum_{w \in A_1} p(w) \le \frac{1}{2}$, where $A_1 = A \setminus \LR{v}$.
	\begin{align*}
		\sum_{w \in A_1 } p(w) &= \frac{1}{2rd}\sum_{w \in A_1} \sum_{j = 1}^{r} h_j(w)
		\\&= \frac{1}{2rd}  \sum_{j = 1}^r \frac{1}{|N^G_j(v)|} \sum_{w \in A_1} |N^G_j(w) \cap N^G_j(v)| 
		\\&\le \frac{1}{2rd} \sum_{j = 1}^r \frac{1}{|N^G_j(v)|} \cdot d \cdot |N^G_j(v)| = \frac{1}{2}
	\end{align*}
	Where, the second-last inequality follows from the fact that every vertex in $N^G_j(v)$ is counted at most $d$ times in 
	$\sum_{w \in A_1} |N^G_j(w) \cap N_j(v)|$, since every vertex in $B$ has degree at most $d$.
\end{proof}

Now we explain how to sample a vertex proportional to the quantities $p(\cdot)$. Note that \Cref{obs:prob-distribution} implies that $\ell \coloneqq \sum_{w \in A} p(w) \le 1$. Also note that $\ell \ge p(v) = 1/2$. Now, we sample a vertex from $A$ such that the probability of sampling a vertex $w$ is equal to $p(w)/\ell$ -- this can be done, e.g., by mapping the vertices to disjoint sub-intervals of $[0, 1]$ of length $p(w)/\ell$, and then sampling from uniform distribution over $[0, 1]$. Note that the sum of probabilities is equal to $1$, and thus this is a valid probability distribution. Finally, observe that for any set $W \subseteq A$ of vertices, the probability that a vertex from $W$ is sampled is equal to $\sum_{w \in W} p(w)/\ell \ge \sum_{w \in W}p(w)$. Next we prove the correctness of our algorithm. 

\begin{lemma}\label{lem:random-induction}
	Consider a recursive call PCCDS$(\cI)$, where $\cI = (G = (A \uplus B), [r], f, t, k)$, and let $k^*$ be the value of $k$ from the \emph{original instance}.  Consider a set $S \subseteq A$ of size $k$ such that, for each $j \in [r]$, $|N^G_j(S)| = \ttt_j$. 
	
	Then, with probability at least $ \lr{ \frac{1}{L} \cdot \frac{\epsilon}{2rd} }^k$, the algorithm returns a subset $S' \subseteq A$ of size at most $k$, such that for each $j \in [r]$, $|N^G_j(S')| = t''_j$, where $$t''_j \ge (1-2\epsilon) \min(\ttt_j, t_j) - \frac{\epsilon k}{k^*} \cdot t_j.$$
\end{lemma}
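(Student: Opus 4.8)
The plan is to prove \Cref{lem:random-induction} by induction on $k$, mirroring the structure of the overview in \Cref{subsec:overview-branching} but with the bucketing machinery layered on top. The base case $k = 0$ is immediate: the algorithm returns $\emptyset$, and $S = \emptyset$ forces $\ttt_j = 0$ for all $j$, so the bound $t''_j \ge (1-2\epsilon)\min(\ttt_j, t_j) - \tfrac{\epsilon k}{k^*} t_j$ holds trivially with $t''_j = 0 \le 0$. For the inductive step, assume the claim for all recursive calls with budget $k-1$. The first event we need is that the bag chosen uniformly at random in line 4 is a bag $A(\mathbf{v})$ that has nonempty intersection with $S$ — this happens with probability at least $1/L$ by \Cref{obs:buckets}. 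Condition on this event and fix a vertex $s^\star \in S \cap A(\mathbf{v})$; note $v$ (the arbitrarily chosen representative) and $s^\star$ lie in the same bag, so for every color $j$ their $j$-degrees agree up to a $(1+\epsilon)$ factor, or both are $\ge t_j$, or both are $< \tfrac{\epsilon t_j}{2k^*}$, according to which coordinate of $\mathbf{v}$ they share.

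Next I would split into the \emph{easy} and \emph{hard} cases exactly as in the overview, but measured through the weights $h_j(\cdot)$. Define $S' = S \setminus \{s^\star\}$ (so $|S'| = k-1$). The dichotomy is on whether the sets of $S$ collectively have large overlap with $v$'s neighborhoods: say color $j$ is ``heavy-witnessed'' if $\sum_{w \in S} h_j(w)$ is large (above a threshold on the order of $\epsilon$), and consider the total weight $p(S) = \sum_{w \in S} p(w)$. If $p(S)$ is at least $\tfrac{\epsilon}{2rd}$ (the easy case), then with that probability the algorithm samples some $u \in S$; conditioned on this, $\textsc{PruneInstance}$ produces an instance on which $S \setminus \{u\}$ is a feasible comparison set whose residual coverage requirements are exactly the old ones minus $u$'s contribution, and induction on budget $k-1$ finishes the bound — the loss of $u$'s coverage is paid for by having added $u$ to the output, just as in the easy-case computation in \Cref{subsec:overview-branching}. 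If instead $p(S) < \tfrac{\epsilon}{2rd}$ (the hard case), then for every color $j$ we have $\sum_{w \in S} h_j(w) < \epsilon$, which by the definition of $h_j$ forces $|N_j(S) \cap N_j(v)| < \epsilon |N_j(v)|$; here the algorithm samples $v$ itself with probability $\ge 1/2$, we add $v$ to the solution, and we compare against the \emph{modified} hypothetical solution $(S \setminus \{s^\star\}) \cup \{v\}$, using that $v$ and $s^\star$ cover approximately the same number of vertices of each color (the bucket guarantee) and that the sets in $S \setminus \{s^\star\}$ barely overlap $v$, so very little coverage is double-counted. Induction on budget $k-1$ applied to the residual instance after adding $v$ then yields the claimed $t''_j$.

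The main obstacle I expect is bookkeeping the three regimes of each color through $\textsc{PruneInstance}$ and through one recursive step. Colors with $d_j(v) \ge t_j$ (coordinate $0$) get \emph{removed} by $\textsc{PruneInstance}$ when $v$ is added, so they need no further attention — but one must check that $s^\star$ would also have satisfied them, which follows from both being in $A(j,0)$. Colors with $d_j(v) < \tfrac{\epsilon t_j}{2k^*}$ (coordinate $\lambda+1$) are the source of the additive $-\tfrac{\epsilon k}{k^*} t_j$ slack: here we may make an arbitrarily bad swap, losing up to $\tfrac{\epsilon t_j}{2k^*}$ of coverage per level, accumulated over at most $k$ levels, which is why the error term is linear in $k$ and has a $1/k^*$ factor. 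For the ``interesting'' colors (coordinates $1,\dots,\lambda$), the $(1+\epsilon)$-approximate equality of $j$-degrees between $v$ and $s^\star$ contributes one factor of $(1+\epsilon)^{-1} \ge 1-\epsilon$ to the multiplicative loss, and the overlap bound $|N_j(S)\cap N_j(v)| < \epsilon|N_j(v)|$ in the hard case contributes another; these combine, together with the inductive $(1-2\epsilon)$ factor, to give the stated $(1-2\epsilon)$ bound after checking that $(1-\epsilon)^2 \ge 1-2\epsilon$ absorbs the constant. Finally, multiplying the conditional success probabilities — $1/L$ for the bag guess, at least $\min(1/2, \tfrac{\epsilon}{2rd}) = \tfrac{\epsilon}{2rd}$ for the sampling step, and $\bigl(\tfrac{1}{L}\cdot\tfrac{\epsilon}{2rd}\bigr)^{k-1}$ from induction — yields the claimed $\bigl(\tfrac{1}{L}\cdot\tfrac{\epsilon}{2rd}\bigr)^{k}$.
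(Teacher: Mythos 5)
Your proposal is correct and takes essentially the same route as the paper's proof: induction on $k$, conditioning on the bag guess (probability $\ge 1/L$), then a dichotomy driven by whether the probability mass $p(S) = \sum_{w\in S} p(w)$ clears the $\tfrac{\epsilon}{2rd}$ threshold — if so you sample into $S$ and invoke the easy-case calculation (the paper's Claim~\ref{cl:residual}); if not you sample $v$, swap it for $s^\star$, and handle colors by bucket coordinate exactly as in the paper's Cases~A/B/C. The paper presents this as three explicit sub-cases ($v\in S$; some color with overlap $\ge \epsilon|N_j(v)|$; all colors with overlap $\le \epsilon|N_j(v)|$), but your single threshold on $p(S)$ subsumes and mildly streamlines that split, and the rest of the argument — the $\min(\ttt_j,t_j)$ bookkeeping through \textsc{PruneInstance}, the accumulation of the additive $\tfrac{\epsilon k}{k^*}t_j$ error from coordinate-$(\lambda{+}1)$ colors, and the multiplication of the three conditional probability factors — is the same.
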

\begin{proof}
	We prove this by induction. When $k = 0$, then there is nothing to prove. Now suppose the claim is true for $k-1 \ge 0$ and we prove it for $k$.
	
	Fix a set $S \subseteq V(G')$ of size $k$ such that $|N^G_j(S)| = \ttt_j$ for each $j \in [r]$. There exists a bag $A(\mathbf{v})$ such that $S \cap A(\mathbf{v}) \neq \emptyset$. Since in the first step (line 5), the algorithm picks a bag uniformly at random from at most $L$ bags, the probability that $A(\mathbf{v})$ is picked is at least $\frac{1}{L}$. We condition on the event that this choice is made, and proceed as follows. Suppose the algorithm picks $v \in A(\mathbf{v})$ in line 5. 
	We start with a relatively straightforward observation.
	
	\begin{observation} \label{obs:prune}
		Consider a vertex $u \in S$, and consider calling \textsc{PruneInstance}$(\cI, u)$. Let $J$ be the set as defined in line 1 in this procedure, and $\cI'$ be the instance that the call would return. Then, 
		\begin{enumerate}
			\item For any $j \in J$, $|N^{G}_j(u)| \ge t_j$, 
			\item For any $j \not\in J$, in the instance $\cI'$, it holds that $t'_j = t_j - |N^G_j(u)| \le t_j$, and
			\item For any $j \not\in J$, in the instance $\cI'$, it holds that $|N^{G'}_j(S \setminus \{u\})| = \ttt_j - |N^{G}_j(u)|$.
		\end{enumerate}
	\end{observation}
	\begin{proof}
		The first item follows from the definition of $N^G_j(u)$. The second item follows from the definition of $t'$. For the third item, we note that for any color $j \not\in J$, the vertices in $N^G_j(u)$ are removed from color $j$ in the instance $\cI'$. Thus, $N^{G'}_j(S \setminus \{u\})$ and $N^{G}_j(u)$ are disjoint sets in $V(G)$, and their sizes add up to $\ttt_j$.
	\end{proof}
	Next, we prove the following technical claim.
	
	\begin{claim} \label{cl:residual}
		Suppose the random vertex $u$ selected in line 10 belongs to the set $S$. Then, let $G'$ be the graph in the instance $\cI'$ obtained in line 11. Then, with probability at least $\lr{\frac{1}{L} \cdot \frac{\epsilon}{2rd}}^{k-1}$, it holds that for any $j \in [r]$, 
		\begin{equation}
			\left|N^G_j\lr{\tilde{S} \cup \{u\}}\right| \ge (1-2\epsilon) \cdot \min(\ttt_j, t_j) - \frac{\epsilon k}{k^*} t_j \label{eqn:claim-ineq}
		\end{equation}
		
	\end{claim}
	\begin{proof}
		Suppose the set $J$ as defined w.r.t. $u$ in line 1 of \textsc{PruneInstance} is non-empty. Then, for any $j \in J$, \Cref{obs:prune} implies that $|N^G_j(u)| \ge t_j$, which is at least the bound in the lemma. Thus, it suffices to focus on colors in $[r] \setminus J$.
		
		By \Cref{obs:prune} (item 3), the set $S \setminus \{u\}$ of size $k-1$ is such that for each $j \not\in J$, $|N^{G'}_j(S \setminus \LR{u})| = \ttt_j - |N^G_j(u)|$. Thus, by induction hypothesis (i.e., using \Cref{lem:random-induction}), with probability at least $\lr{\frac{1}{L} \cdot \frac{\epsilon}{2rd}}^{k-1}$, PCCDS$(\cI')$ returns a set $\tilde{S}$ of size at most $k-1$ that satisfies the following property: for any $j \not\in J$, 
		\begin{equation}
			\left|N^{G'}_j(\tilde{S})\right| \ge (1-2\epsilon) \cdot \min( \ttt_j - |N^{G}_j(u)|, \ t'_j ) - \frac{\epsilon (k-1)}{k^*} t'_j \label{eqn:lb-induction}
		\end{equation}
		Thus, it follows that the set $\tilde{S} \cup \{u\}$ satisfies that, for any $j \not\in J$,
		\begin{align*}
			\left|N^G_j(\tilde{S} \cup \{u\})\right| &=   \left|N^{G'}(\tilde{S})\right| + \left| N^G_j(u) \right|
			\\&\ge (1-2\epsilon) \cdot \lr{t_j - |N^G_j(u)|} - \frac{\epsilon (k-1)}{k^*} t'_j + |N^G_j(u)|
			\\&\ge (1-2\epsilon) \cdot t_j - \frac{\epsilon (k-1)}{k^*} t'_j
			\\&\ge (1-2\epsilon) \cdot \min(\ttt_j, t_j) - \frac{\epsilon k}{k^*} t_j \tag{since $t'_j \le t_j$}
		\end{align*}
		This concludes the proof of the claim.
	\end{proof}
	
	Now we proceed with the inductive step, where we consider different cases.
	
	\textbf{Case 1.} $v \in S$. 
	\\In this case, the probability that the randomly chosen vertex in line 10 is equal to $v$, is at least $1/2$. We condition on this event, and using \Cref{cl:residual}, it follows that the set $\tilde{S} \cup \{v\}$ satisfies (\ref{eqn:claim-ineq}) for each $j \in [r]$, with (conditional) probability at least $\lr{\frac{1}{L} \cdot \frac{\epsilon}{2rd}}^{k-1}$. Thus, the unconditional probability that the set $\tilde{S} \cup \{u\}$ satisfies the required property is at least
	\begin{equation*}
		\frac{1}{2} \cdot \lr{ \frac{1}{L} \cdot \frac{\epsilon}{2rd} }^{k-1} \ge \lr{\frac{1}{L} \cdot \frac{\epsilon}{2rd}}^k.
	\end{equation*}
	
	
	\textbf{Case 2}. Suppose $v \not\in S$. Now we consider two sub-cases.
	
	\textbf{Case 2.1}: There exists a color $j$ such that $\sum_{w' \in S} |N^G_j(v) \cap N^G_j(w')| \ge \epsilon \cdot |N^G_j(v)|$. 
	\\We first claim that the probability that some vertex $w$ from the set $S$ is chosen to be $u$ in line 10 is at least $\frac{\epsilon}{r}$. Then, conditioned on this event, we will use the induction hypothesis to show the required bound.
	
	Fix a color $j$ satisfying the case assumption (if there are multiple such colors, pick one such color arbitrarily). Then, since color $j$ satisfies the case assumption, it follows that, 
	\begin{align*}
		\sum_{w' \in S} h_j(v) &= \sum_{w' \in S} \frac{|N^G_j(w') \cap N^G_j(v)|}{|N^G_j(v)|} \ge \frac{1}{|N^G_j(v)|} \cdot \epsilon \cdot |N^G_j(v)| = \epsilon 
	\end{align*}
	Therefore,
	\begin{align*}
		\sum_{w' \in S} p(w') = \sum_{w' \in S} \frac{h(w')}{2rd} \ge \frac{1}{2rd} \cdot \sum_{w \in S} h_j(w') \ge \frac{1}{2rd} \cdot \epsilon
	\end{align*}
	Thus, the probability that some $w \in S$ will be chosen as the vertex $u$ is at least $\frac{\epsilon}{2rd}$. Now, we condition on this event. Then, by using \Cref{cl:residual} and an argument similar to case 1, the set $\tilde{S} \cup \{w\}$ satisfies (\ref{eqn:claim-ineq}) for each $j \in [r]$, with probability at least 
	\begin{equation*}
		\frac{1}{L} \cdot \frac{\epsilon}{2rd} \cdot \lr{\frac{1}{L} \cdot \frac{\epsilon}{2rd}}^{k-1} = \lr{\frac{1}{L} \cdot \frac{\epsilon}{2rd}}^k.
	\end{equation*}
	
	\textbf{Case 2.2}: For all colors $j$, 
	$\sum_{w' \in S} |N_j(v) \cap N_j(w')| \le \epsilon \cdot |N_j(v)|$. 
	Hence, 
	\begin{equation}
		\bigg|\lr{\bigcup_{w' \in S} N_j(w')} \cap N_j(v)\bigg| \le \sum_{w' \in S} |N_j(v) \cap N_j(w')| \le \epsilon \cdot |N_j(v)|. \label{ineq:small-intersection}
	\end{equation}
	Recall that the probability that the vertex $u$ is equal to $v$ is at least $1/2$, and we condition on this choice. Furthermore, recall that we have conditioned on the event that $A(\mathbf{v}) \cap S \neq \emptyset$, but due to case assumption $v \not\in S$. Therefore, there must exist a vertex $w \in A(\mathbf{v}) \cap S$. 
	
	In this case, we aim to show that $v$ ``approximately plays the role'' of $w \in A(\mathbf{v}) \cap S$ in our solution.	To this end, we consider different cases for the value of $\alpha_j$ in the vector $\mathbf{v}$. In each of the cases, we condition on the probability that the recursive call returns a set $\tilde{S}$ with the desired properties. Using induction, this happens with probability at least $\lr{\frac{1}{L} \cdot \frac{\epsilon}{2rd}}^{k-1}$. Thus, the unconditional probability is at least $ \lr{\frac{1}{L} \cdot \frac{\epsilon}{2rd}}^{k}$ as in Case 1. Now we proceed to the analysis of each of the cases, conditioned on the good events.
	
	\textbf{Case A: $\alpha_j = 0$.} 
	Since $v \in A(j, 0)$, 
	$|N^G_j(v)| \ge t_j$. Thus, $|N^G_j(\tilde{S} \cup \{v\}| \ge t_j$, which is at least the claimed bound.
	
	\textbf{Case B: $\alpha_j = \lambda+1$.} 
	Since $v, w \in A(j, \lambda+1)$, 
	$|N^G_j(v)| \le \frac{\epsilon t_j}{2k^*}$, and $|N^G_j(w)| \le \frac{\epsilon t_j}{2k^*}$.
	\\Now we analyze $\left|N^{G'}(S \setminus \{w\})\right|$. By case assumption, it follows that,
	\begin{align}
		\left|N^{G'}_j(S \setminus \{w\})\right| &= \ttt_j - |N^G_j(w)| - \bigg|\lr{\bigcup_{w' \in S} N^G_j(w')} \cap N^G_j(v)\bigg| \label{eqn:ineq-lb}
		\\&\ge \ttt_j - |N^G_j(w)| - |N^G_j(v)| \nonumber
		\\&\ge \ttt_j - \frac{2\epsilon t_j}{2k^*} = \ttt_j - \frac{\epsilon t_j}{k^*}. \nonumber
	\end{align}
	Then, by inductive hypothesis, the solution $\tilde{S}$ satisfies the first inequality in the following.
	\begin{align*}
		\left|N^{G'}_j(\tilde{S})\right| &\ge (1-2\epsilon) \cdot \lr{\ttt_j - \frac{\epsilon t_j}{k^*}} - \frac{\epsilon(k-1)t'_j}{k^*}
		\\&\ge (1-2\epsilon) \cdot \ttt_j - \frac{\epsilon t_j}{k^*} - \frac{\epsilon(k-1) t'_j}{k^*}
		\\&\ge (1-2\epsilon) \cdot \ttt_j - \frac{\epsilon k}{k^*} t_j \tag{Since $t'_j \le t_j$}
	\end{align*}
	which is at least the claimed bound.
	
	\textbf{Case C. $1 \le \alpha_j \le \lambda$.}
	\\Since $v, w \in A(j, \alpha_j)$, \Cref{obs:buckets} implies that
	\begin{equation}
		|N^G_j(w)| \le (1+\epsilon) \cdot |N^G_j(v)| \label{ineq:uw-approx}
	\end{equation}	
	Analogous to (\ref{eqn:ineq-lb}), we have the following.
	\begin{align*}
		\left|N^{G'}_j(S \setminus \{w\})\right| &= \ttt_j - |N^G_j(w)| - \bigg|\lr{\bigcup_{w' \in S} N^{G}_j(w')} \cap N^G_j(v)\bigg|
		\\&\ge \ttt_j - |N^G_j(w)| - \epsilon \cdot |N^G_j(v)| \tag{From (\ref{ineq:small-intersection})}
		\\&\ge \ttt_j - (1+2\epsilon) \cdot |N^G_j(v)| \tag{From (\ref{ineq:uw-approx})} 
	\end{align*}
	Thus, by inductive hypothesis, it holds that,
	\begin{align*}
		|N^G_j(\tilde{S} \cup \{v\})| &= |N^G_j(v)| + |N^{G'}(\tilde{S})|
		\\&\ge |N^G_j(v)| + (1-2\epsilon) \cdot \lr{\ttt_j - (1+2\epsilon) \cdot |N^G_j(v)|} - \frac{\epsilon (k-1)}{k^*} \cdot t_j
		\\&\ge (1-2\epsilon)\cdot \ttt_j + 4\epsilon^2 \cdot |N^G_j(v)| - \frac{\epsilon (k-1)}{k^*} \cdot t_j
		\\&\ge (1-2\epsilon) \cdot \ttt_j - \frac{\epsilon(k-1)}{k^*} \cdot t_j \tag{since $|N^G_j(v)| \ge 0$}
	\end{align*}
	This completes the induction, and thus the proof of the lemma.
\end{proof}

\begin{theorem} \label{thm:d-hs}
	There exists a randomized algorithm that runs in time $\lr{ \frac{6dr}{\epsilon} }^k \cdot \lr{\frac{18\log k}{\epsilon^2}}^{kr} \cdot n^{\Oh(1)}$, and given a \yin $\cI = (G = (A \uplus B, E), [r], f, t, k)$ of \pbds, where each vertex in $B$ has degree at most $d$, with probability at least $ \lr{ \frac{1}{L} \cdot \frac{\epsilon}{2rd} }^k$, returns a subset $\tilde{S} \subseteq A$ of size  $k$ such that $|N^G_j(\tilde{S})| \ge (1-\epsilon) t_j$ for all colors $j$. 
\end{theorem}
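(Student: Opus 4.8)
My plan is to derive the statement directly from \Cref{lem:random-induction} applied to the top-level call of \Cref{alg:freqd}; the only real content is getting the constants to line up, since the hard combinatorial work is already inside that lemma. First I would fix a witness of the \yin: a set $S^\star \subseteq A$ with $|S^\star| \le k$ and $|N^G_j(S^\star)| \ge t_j$ for every color $j$. By adding arbitrary vertices of $A$ to $S^\star$ — which cannot decrease any $|N^G_j(\cdot)|$ — I may assume $|S^\star| = k$, and I set $\ttt_j \coloneqq |N^G_j(S^\star)| \ge t_j$. The algorithm asserted by the theorem is \Cref{alg:freqd} run on $\cI$, so that the ``original'' budget $k^\star$ used by \bucketing coincides with the input $k$.

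The one genuine step is a rescaling of the accuracy parameter. Feeding $S = S^\star$ into \Cref{lem:random-induction}, and using that at the initial call $k = k^\star$ (so the error term $\frac{\epsilon k}{k^\star}t_j$ is exactly $\epsilon t_j$) together with $\min(\ttt_j, t_j) = t_j$ (as $\ttt_j \ge t_j$), I get: with probability at least $\lr{\frac1L \cdot \frac{\epsilon}{2rd}}^k$ the returned set $\tilde{S}$ has $|N^G_j(\tilde{S})| \ge (1-2\epsilon)t_j - \epsilon t_j = (1-3\epsilon)t_j$ for every $j$. To reach the advertised factor $(1-\epsilon)$ I would run \Cref{alg:freqd} with its internal accuracy parameter set to $\epsilon/3$ rather than $\epsilon$ (any constant fraction with $1-3\epsilon' \ge 1-\epsilon$ works); this rescales $\lambda$, hence the bag-count bound $L$, and the worst-case per-step sampling probability $\frac{\epsilon}{2rd}$, each by a constant factor — which is precisely the source of the constants $18$ and $6$ in the stated bounds. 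If the output has fewer than $k$ vertices I pad it arbitrarily to size exactly $k$; again this does not hurt any $|N^G_j(\tilde{S})|$, so all $r$ coverage inequalities still hold.

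For the running time, I would note that a single execution of \Cref{alg:freqd} takes $n^{\Oh(1)}$ time: it decrements $k$ and issues exactly one recursive call, so its call structure is a path of length at most $k$, and each node only runs \bucketing, forms the weights $h_j(\cdot)$ and the distribution $p(\cdot)$ on $A$, draws one sample, and calls \textsc{PruneInstance} — all polynomial. Thus the exponential factor in the statement is a safe upper bound; it also equals, up to constants, the number $\lr{L \cdot \frac{2rd}{\epsilon}}^k$ of independent repetitions one runs to boost the per-run success probability $\lr{\frac1L \cdot \frac{\epsilon}{2rd}}^k$ of \Cref{lem:random-induction} to a constant, which is how this procedure is used downstream (together with the reduction of \Cref{sec:reduction}). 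The ``main obstacle'' for this statement has in effect already been overcome — it is \Cref{lem:random-induction} itself, with its \bucketing construction and the Case~1 / 2.1 / 2.2 and A/B/C analysis of the weighted sampling step — so the only point to be careful about here is not to confuse the running-accuracy parameter with the target $\epsilon$: since the top-level call has $k = k^\star$, the $\frac{\epsilon k}{k^\star}$ contribution does not shrink with recursion depth, which is why the $\epsilon/3$ substitution is essential and not merely cosmetic.
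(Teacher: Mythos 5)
Your proof is correct and follows exactly the paper's route: invoke \Cref{lem:random-induction} at the top-level call (where $k=k^\star$ so the additive error is $\epsilon t_j$ and $\min(\ttt_j,t_j)=t_j$), conclude $(1-3\epsilon)$-coverage per color with probability $q=\lr{\frac1L\cdot\frac{\epsilon}{2rd}}^k$, and rescale $\epsilon\mapsto\epsilon/3$ to absorb the factor of~$3$; the running time is the $n^{\Oh(1)}$ per-run cost times $\Oh(q^{-1})$ repetitions to boost to constant success. You are in fact a bit more careful than the paper's own short proof — you explicitly pad the witness to size exactly $k$ (so the lemma's hypothesis is met) and pad the output to size $k$ — both harmless but worth noting; the only cosmetic mismatch is that the rescaling actually inflates the $\log k/\epsilon^2$ base by $3^2=9$, giving $54$ rather than the paper's stated $18$, but that discrepancy is in the theorem statement itself, not your argument.
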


Next, we conclude with the proof of Theorem~\ref{thm:d-hs}, 
which follows from \Cref{lem:random-induction} in a straightforward manner.
\begin{proof}[Proof of Theorem~\ref{thm:d-hs}]
	Let $S \subseteq A$ be a set of size $k$ such that $|N^G_j(S)| = \ttt_j \ge t_j$ for all colors $j$. Let $\tilde{S}$ denote the output of \textsc{PCCDS}$(\cI)$. It is easy to see that the algorithm returns a solution in polynomial time. Next, \Cref{lem:random-induction} implies that with probability at least $q = \lr{\frac{1}{L} \cdot \frac{\epsilon}{2rd}}^k$, the set $\tilde{S}$ satisfies that, for each $j \in [r]$,
	\begin{align*}
		|N^G_j(\tilde{S})| = t''_j \ge (1-2\epsilon) \cdot t_j - \frac{\epsilon k}{k^*} t_j = (1-2\epsilon) t_j - \epsilon t_j = (1-3\epsilon) \cdot t_j
	\end{align*} 
	Here, we use the fact that since $\cI$ is the original instance, we have $k^* = k$. Also note that, for any color $j \in [r], \ttt_j \ge t_j$.
	
	We make $\Oh(q^{-1} \log n)$ independent calls to PCCDS$(\cI)$, and if in any of the calls, we find a set $\tilde{S}$ with the claimed properties, then we return it. Otherwise, the algorithm concludes that $\cI$ is a \nin. We get the claimed running time by rescaling $\epsilon$ to $\epsilon/3$. 
\end{proof}

\section{An \FPTAS for \pbds on $K_{d,d}$-free graphs}
\label{sec:kdd-free}

In this section, we design an \FPTAS for \pbds on $K_{d,d}$-free graphs. 
In the algorithm, we first divide the colors into two sets according to their coverage requirements: ${\sf T_{small}}\coloneqq \{j\in [r] \colon t_j \leq \nicefrac{2k^2d}{\epsilon}\}$ and ${\sf T_{large}}\coloneqq\{j\in [r] \colon t_j >  \nicefrac{2k^2d}{\epsilon}\}$. Further, we do bucketing of the vertices in ${\sf T_{large}}$ and ${\sf T_{small}}$ separately. For the vertices in ${\sf T_{large}}$, the strategy is similar to \textsc{Bucketing} in Section~\ref{subsec:bucketing}. For the sake of simplicity of analysis, we use $\Oh(\log m/\epsilon)$ buckets per color, instead of $\Oh(\log k/\epsilon^2)$ as in the previous section. As a result, we will get a slightly worse running time. Specifically, we will have an extra $\log r$ factor in the exponent. Note that this factor can be eliminated by using $\Oh(\log k/\epsilon^2)$ buckets and a more careful analysis similar to the previous section that keeps track of the additive errors for color $j$ incurred when we branch on a bucket that contains all the ``small-degree'' vertices of color $j$; however we omit this. 

We will use {\em color coding} to identify a solution that covers the required coverage for the colors in ${\sf T_{small}}$ with high probability. Thus, we first propose a randomized algorithm here, which will be derandomized later using the known tool of  {\em $(p,q)$-perfect hash family}~\cite{alon1995color,fomin2014efficient}. 

Henceforth, we will assume that we are given a \yin and show that the algorithm outputs an approximate solution with high probability--otherwise the algorithm will detect that we are given a \nin.
Hence, there exists a hypothetical solution $S$ such that for every $j\in [r]$, $|N_j(S)|\geq t_j$. Note that, for every $j\in {\sf T_{small}}$, $t_j \le |N_j(S)| \le \frac{2k^2d}{\epsilon}$. 
As a first step, we first use color coding in order to attempt to identify the vertices in each $B_j$, where $j\in T_{\sf small}$, that are covered by the solution. Without loss of generality, let ${\sf T_{small}}=\{1,\ldots,z\}$ and $B_{\sf small}=\cup_{j\in {\sf T_{small}}} B_j$. 

\begin{tcolorbox}[boxsep=5pt,left=5pt,top=5pt,colback=green!5!white,colframe=gray!75!black]
	{\bf  Separation of small cover:}  
	Label the vertices of $B_{\sf small}$ uniformly and independently at random using $\frac{2k^2zd}{\epsilon}$ labels, say $1,\ldots,\frac{2k^2zd}{\epsilon}$. 
\end{tcolorbox}

The goal of the labelling is that ``with high probability'', we label the vertices in $B_{\sf small}$ that are covered by the solution with distinct labels. Note that the solution can cover more than $t_j$ vertices of color $j$, however, we are only concerned with
$t_j$ vertices. The following proposition bounds the success probability.

\begin{proposition}{\rm \cite[Lemma 5.4]{DBLP:books/sp/CyganFKLMPPS15}}\label{prop:success-prob} Let $\cU$ be a universe and $X\subseteq \cU$. Let $\chi \colon \cU \rightarrow [|X|]$ be a function that colors  each element of $\cU$ with one of $|X|$ colors uniformly and independently at random. Then, the probability that the elements of $X$ are colored with pairwise distinct colors is at least $e^{-|X|}$.
\end{proposition}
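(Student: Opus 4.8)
The plan is to reduce the statement to the elementary inequality $s!\ge (s/e)^s$ where $s\coloneqq |X|$. First I would observe that the colors assigned to elements of $\cU\setminus X$ are irrelevant to the event in question, so we may as well condition on (equivalently, restrict attention to) the random coloring of $X$ alone. Since each of the $s$ elements of $X$ receives one of $s$ colors uniformly and independently, there are $s^s$ equally likely colorings of $X$, and the colorings in which all elements of $X$ get pairwise distinct colors are precisely the bijections from $X$ to $[s]$, of which there are $s!$. Hence the desired probability equals exactly $\dfrac{s!}{s^s}$ (the edge case $s=0$ giving the empty product $1\ge 1$, so assume $s\ge 1$ below).

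Next I would establish $\dfrac{s!}{s^s}\ge e^{-s}$, i.e.\ $s!\ge (s/e)^s$, i.e.\ $\dfrac{s^s}{s!}\le e^s$. The cleanest route is the Taylor expansion $e^s=\sum_{k\ge 0}\dfrac{s^k}{k!}$: since every term is nonnegative, in particular the $k=s$ term satisfies $\dfrac{s^s}{s!}\le e^s$, which is exactly what is needed. (Alternatively one can prove $s!\ge (s/e)^s$ by induction on $s$, using $\bigl(1+\tfrac1s\bigr)^s\le e$ in the inductive step, but the series argument avoids any real work.) Combining the two displays yields $\Pr[\text{all of }X\text{ distinctly colored}]=\dfrac{s!}{s^s}\ge e^{-s}=e^{-|X|}$.

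I do not expect any genuine obstacle here; the only point requiring a line of justification is the bound $s^s/s!\le e^s$, and the exponential-series observation dispatches it immediately. If a self-contained write-up is preferred over quoting \cite{DBLP:books/sp/CyganFKLMPPS15}, the whole argument fits in a few lines along the scheme above; otherwise one may simply cite the reference and use the bound as a black box, as the surrounding text already does.
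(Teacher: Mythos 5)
Your proof is correct, and the paper itself gives no proof — it cites this as Lemma 5.4 of Cygan et al.'s textbook, where the argument is essentially the same one you give: the probability is exactly $s!/s^s$ with $s=|X|$, and $s!\ge (s/e)^s$ follows from the $k=s$ term of the series $e^s=\sum_{k\ge 0} s^k/k!$.
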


For a vertex $v\in B_{\sf small}$, let $label(v)$ denote its label. For $X\subseteq B_{\sf small}$, $label(X)=\cup_{v\in X}label(v)$. Let ${\sf labels}=\{1,\ldots,\nicefrac{2k^2zd}{\epsilon}\}$. 
We next move to the bucketing step. We first create buckets with respect to all the colors in $j\in {\sf T_{large}}$. 

\medskip\noindent\bucketinglarge.
For every color $j \in {\sf T_{large}}$ and $1 \le \alpha \le \log_{(1+\epsilon)}m$, we define $\displaystyle A(j, \alpha) \coloneqq \LR{ v \in A \colon  \nicefrac{2kd}{\epsilon} \cdot (1+\epsilon)^{\alpha-1} < d_j(v) \leq \nicefrac{2kd}{\epsilon} \cdot (1+\epsilon)^\alpha  }.$ For all the smaller degrees, we have the following bucket.  $\displaystyle A(j, 0) \coloneqq \LR{ v \in A \colon d_j(v) \leq \nicefrac{2kd}{\epsilon}}.$ 

Next, we create buckets for all $j\in {\sf T_{small}}$ as follows.

\medskip\noindent\bucketingsmall.
For every $j \in {\sf T_{small}}$ and a set $\gamma \subseteq {\sf labels}$, we define $$A(\gamma) \coloneqq \LR{ v \in A\colon label(N(v)\cap B_{\sf small})=\gamma}$$ 
\begin{sloppypar}
	We first create bags $A(\mathbf v)$ as defined in Section~\ref{subsec:bucketing}. In particular, let  $\mathbf{V} = \{0,1, \ldots, \log_{(1+\epsilon)}m\}^{r-z}$. Consider an arbitrary vector $\mathbf{v} \in \mathbf{V}$. Let $\mathbf{v} = (\alpha_{z+1}, \ldots, \alpha_{r})$. Then, $A(\mathbf{v}) \coloneqq \bigcap_{j = z+1}^{r} A(j, \alpha_j)$. For every  $\gamma \subseteq {\sf labels}$, $\mathbf{v}\in \mathbf{V}$, let 
	$A^\gamma(\mathbf v)=A(\mathbf v)\cap A(\gamma)$. %
	We call any such $A^\gamma(\mathbf{v})$  a \emph{bag}. For every  $\gamma \subseteq {\sf labels}$, we also add $A(\gamma)$ to our collection of bags. Thus, the number of bags is upper bounded by $2^{\frac{2k^2rd}{\epsilon}}(1+(\log_{(1+\epsilon)}m +1)^r) \le 2^{\frac{2k^2rd}{\epsilon}} \cdot r^{\Oh(r)} \cdot m^{\Oh(1)}$ via standard arguments. Note that these bags form a covering and not a partition of vertices in $A$ as was the case in Section~\ref{subsec:fptas-freq-d}
\end{sloppypar}

Our main idea is as follows. We start by guessing a bag that has a non-empty intersection with an optimal solution. Since every vertex in a bag is adjacent to vertices of the same label set, any vertex in the bag can be chosen in order to cover the vertices of colors $j\in T_{\sf small}$. Further, the $j$-degree of vertices in the same bag is ``almost'' equal, for every $j\in T_{\sf large}$.  We will demonstrate that selecting a vertex $v$ from a selected bag leads to one of the following two possibilities: either it belongs to the solution or, there exists at least one vertex from the set of vertices, each of whose neighborhood has significantly overlap with the $j$-neighborhood of $v$ for all $j\in T_{\sf large}$. The formal algorithmic description is presented in Algorithm~\ref{alg:kdd-free}.

To begin, we utilize the definition and lemma introduced by Jain et al.~\cite{DBLP:conf/soda/0001KPSS0U23} to elaborate on the concept of a ``high'' intersection. 
\begin{definition}[$\beta_j$-High Degree Set]\label{def:def3}
	Given a bipartite graph $G=(A,B, E)$, a set $X\subseteq B$, a color $j\in [r]$, and a positive integer $\beta > 1$, the {\em $\beta_j$-High Degree Set}, denoted by ${\sf HD}_{\beta_j}^{G}(X) \subseteq A$, is a set of vertices such that every vertex $v \in {\sf HD}_{\beta_j}^{G}(X) $ satisfies $|N_j(v)\cap X| \geq \frac{|X|}{\beta}$, i.e.,
	\begin{center}
		$ {\sf HD}_{\beta_j}^{G}(X) = \{v\in A \colon |N_j(v)\cap X| \geq \frac{|X|}{\beta}\}$
	\end{center}
	
	
\end{definition}

Let ${\sf AHD}_{\beta_j}^{G}(X) = {\sf HD}_{\beta_j}^{G}(X) \cap \{v\in A: |N_j(v)| \geq d \}$. That is, ${\sf AHD}_{\beta_j}^{G}(X)$ consists of vertices of $j$-degree at least $d$ and those that belong to 
${\sf HD}_{\beta_j}^{G}(X)$.  Due to Lemma 4.2 in~\cite{DBLP:conf/soda/0001KPSS0U23}, we know that $|{\sf AHD}_{\beta_j}^{G}{(X)}| \leq (d-1)(2\beta)^{d-1}$, for all $X\subseteq B$, $d$, and for all $\beta > 1$ with $\frac{|X|}{2\beta} > d$.

\begin{algorithm}[t]
	\caption{\textsc{$K_{d,d}$-free-\pbds}$(\cI= (G = (A \uplus B, E), [r], f, t, k,\epsilon))$}
	\label{alg:kdd-free}
	\begin{algorithmic}[1]
		\Statex 
		\If{$k = 0$} 
		\State \Return $\emptyset$ if $t_j=0$, for every $j\in [r]$; otherwise {\sf NO}.
		\EndIf
		\State compute ${\sf T_{small}}$ and ${\sf T_{large}}$.
		\State label the vertices in $B_{\sf small}$ as defined above in the {\color{green}green} box. 
		\State Let ${\cal A}$ be the set of bags returned by applying \bucketinglarge on ${\sf T_{large}}$ and \bucketingsmall on ${\sf T_{small}}$. 
		\For{every bag $A^\gamma(\mathbf{v}) \in {\cal A}$}
		\State select an arbitrary vertex $x \in A^\gamma(\mathbf{v})$ \label{sep:arbitrary_vertex}
		\State compute ${\sf AHD}^G_{\beta_j}(N_j(x))$ for $\beta=\frac{k}{\epsilon}$, and $j\in {\sf T_{large}}$. 
		\State let $Z_{\mathbf{v}}= \bigcup\limits_{\substack{j\in {\sf T_{large}}}}{\sf AHD}^G_{\beta_j}(N(x)) \cup \{x\}$  
		\EndFor
		\State let $Z=\bigcup\limits_{A^\gamma(\mathbf{v}) \in {\cal A}}Z_{\mathbf{v}}$
		\State from every bag $A(\gamma)\in {\cal A}$, add a vertex in $Z$\label{sep:vertex_small_cov} 
		\For{each $y\in Z$} 
		\State let $\cI_y \gets \textsc{PruneInstance}(\cI, y)$ 
		\State let $S_y$  be the set returned by \textsc{$K_{d,d}$-free-\pbds}($\cI_y$)   
		\EndFor
		\State Among all the sets $S_y$,  $y\in Z$, suppose $z$ is the element such that for every $j\in [r]$,
		$|N_j(\{z\}\cup S_z)|\geq (1-k\epsilon)t_j$. Return $\{z\}\cup S_z$. Return {\sf NO}, if no such $z$ exists.
	\end{algorithmic}
\end{algorithm}


\begin{lemma}\label{lem:kdd-free}
	Given a \yin $(\cI= (G = (A \uplus B, E), [r], f, t, k,\epsilon))$ of \pbds where $G$ is a $K_{d,d}$-free graph, Algorithm~\ref{alg:kdd-free} finds a set $S\subseteq V(G)$ of size at most $k$ such that for every $j\in {\sf T_{large}}$, $|N_j(S)|\geq (1-k\epsilon)t_j$, and  for every $j\in {\sf T_{small}}$, $|N_j(S)|\geq t_j$   with probability at least $e^{\nicefrac{-2k^2k^\star dr}{\epsilon}}$. 
\end{lemma}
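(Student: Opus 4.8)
The proof goes by induction on $k$, following the skeleton of \Cref{lem:random-induction} but with two changes: the randomized sampling of a vertex is replaced by deterministic branching over $Z$ (so no probability is lost there), and a layer of color coding is added to handle the colors in ${\sf T_{small}}$ \emph{exactly}. The base case $k=0$ is immediate: a \yin with budget $0$ forces $t_j=0$ for all $j$, and returning $\emptyset$ works with probability $1$. For the inductive step, fix a hypothetical solution $S$ with $|N_j(S)|\ge t_j$ for all $j$; for each $j\in{\sf T_{small}}$ fix $X_j\subseteq N_j(S)$ with $|X_j|=t_j$ and let $X=\bigcup_{j\in{\sf T_{small}}}X_j$. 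Since $|X|\le\sum_{j\in{\sf T_{small}}}t_j\le z\cdot\tfrac{2k^2d}{\epsilon}=\tfrac{2k^2zd}{\epsilon}$, which equals the number of labels, \Cref{prop:success-prob} says that with probability at least $e^{-2k^2zd/\epsilon}\ge e^{-2k^2rd/\epsilon}$ all vertices of $X$ receive pairwise distinct labels; call this the \emph{good labelling event} and condition on it. The branching performed by \Cref{alg:kdd-free} is otherwise deterministic, so along the single root-to-leaf path we will follow there are at most $k$ (independent) good labelling events to condition on, each at a call with budget $\kappa\le k$ and thus of probability $\ge e^{-2\kappa^2rd/\epsilon}$; the product is $\ge e^{-2k^3rd/\epsilon}\ge e^{-2k^2k^\star rd/\epsilon}$ since $k\le k^\star$. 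Hence it suffices to show that, conditioned on the good labelling event holding at \emph{every} recursive level, the algorithm outputs a set with the claimed guarantees.

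\textbf{Choosing the branch.} The bags $\{A^\gamma(\mathbf v)\}$ cover $A$, so there is a bag $A^\gamma(\mathbf v)$ with some $w\in S\cap A^\gamma(\mathbf v)$; the algorithm processes this bag and picks some arbitrary $x\in A^\gamma(\mathbf v)$. We split into two cases. \emph{Easy case:} some $s\in S$ lies in ${\sf AHD}^G_{\beta_j}(N_j(x))$ for some $j\in{\sf T_{large}}$. Then $s\in Z$, and for the branch $y=s$ the residual instance $\cI_s=\textsc{PruneInstance}(\cI,s)$ is itself a \yin, witnessed by $S\setminus\{s\}$: for every surviving color $j$, $|N^{G'}_j(S\setminus\{s\})|\ge|N_j(S)|-|N_j(s)|\ge t_j-|N_j(s)|=t'_j$, and colors with $|N_j(s)|\ge t_j$ are removed. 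Applying the inductive hypothesis to $\cI_s$ and re-adding $s$ gives, using $t'_j\le t_j$, exactly $|N_j(\{s\}\cup S_s)|\ge(1-(k-1)\epsilon)t_j+(k-1)\epsilon|N_j(s)|\ge(1-k\epsilon)t_j$ for $j\in{\sf T_{large}}$ and $|N_j(\{s\}\cup S_s)|\ge|N_j(s)|+t'_j=t_j$ for $j\in{\sf T_{small}}$ — the same arithmetic as in \Cref{lem:random-induction}.

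\textbf{Hard case.} No $s\in S$ lies in any ${\sf AHD}^G_{\beta_j}(N_j(x))$, $j\in{\sf T_{large}}$; we analyze the branch $y=x$ and argue $x$ can take over the role of $w$. For $j\in{\sf T_{large}}$ (with the usual sub-split on the bucket index $\alpha_j$, the case $|N_j(x)|\ge t_j$ being trivial): $w$ and $x$ lie in the same large-degree bucket so $|N_j(w)|\le(1+\epsilon)|N_j(x)|$ (or both degrees are $O(kd/\epsilon)$, contributing an additive error absorbed by the large value of $t_j$ after rescaling), and since no $s\in S$ is $\beta_j$-high-degree w.r.t.\ $N_j(x)$ we get $|N_j(x)\cap N_j(S\setminus\{w\})|\le\sum_{s\in S\setminus\{w\}}|N_j(s)\cap N_j(x)|<k\cdot\tfrac{|N_j(x)|}{\beta}=\epsilon|N_j(x)|$ for $\beta=k/\epsilon$; hence in $\cI_x$, $|N^{G'}_j(S\setminus\{w\})|\ge t_j-(1+2\epsilon)|N_j(x)|=t'_j-2\epsilon|N_j(x)|$, i.e.\ $S\setminus\{w\}$ covers color $j$ up to an additive loss of $2\epsilon|N_j(x)|\le2\epsilon t_j$ per level, which over $k$ levels accumulates to $(1-2k\epsilon)t_j$ and becomes $(1-k\epsilon)t_j$ after rescaling $\epsilon$. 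For $j\in{\sf T_{small}}$ we use the good labelling: $x$ and $w$ are in the same bag $A(\gamma)$, so $\mathrm{label}(N(x)\cap B_{\sf small})=\gamma=\mathrm{label}(N(w)\cap B_{\sf small})$, and the vertices of $X$ carry distinct labels, so the set of labels consumed by $x$ among $B_{\sf small}$ dominates the set consumed by $w$ among $X$; therefore $S\setminus\{w\}$ remains feasible for the decremented small-color requirements of $\cI_x$, and the inductive hypothesis returns a set meeting $t_j$ exactly for $j\in{\sf T_{small}}$. In either case the recursion on $\cI_x$ returns $S_x$ with $|N_j(\{x\}\cup S_x)|\ge(1-k\epsilon)t_j$ for $j\in{\sf T_{large}}$ and $\ge t_j$ for $j\in{\sf T_{small}}$. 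In both the easy and hard cases some branch $y\in Z$ produces $\{y\}\cup S_y$ meeting these bounds for all colors, and the final step of \Cref{alg:kdd-free} selects and returns such a branch; the size bound $|S|\le k$ follows since each recursive call adds one vertex and decrements $k$, and the bound $|{\sf AHD}^G_{\beta_j}(N_j(x))|\le(d-1)(2\beta)^{d-1}$ from Lemma 4.2 in~\cite{DBLP:conf/soda/0001KPSS0U23} (valid since for surviving large colors $|N_j(x)|>\tfrac{2kd}{\epsilon}>2\beta d$) keeps $Z$ finite.

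\textbf{Main obstacle.} The delicate part is the ${\sf T_{small}}$ bookkeeping in the hard case: one must phrase the invariant carried through the recursion purely in terms of the (freshly drawn, at each level) labels, so that replacing the solution vertex $w$ by an \emph{arbitrary} bag representative $x$ provably preserves feasibility of the residual small-color requirements — this is precisely the ``cumbersome'' branching alluded to just before \Cref{alg:kdd-free}, and one also has to check that \textsc{PruneInstance} deleting \emph{all} of $N(x)$ (and not just its color-$j$ part) does not destroy label classes needed to certify the recursion. A secondary, routine, point is tracking the exact constants in the accumulation of the $2\epsilon|N_j(x)|$ additive losses (together with the $\tfrac{\epsilon k}{k^\star}t_j$-type errors inherited from the bucketing of large colors) over the $k$ levels, and the final $\epsilon$-rescaling that turns the resulting $(1-ck\epsilon)$ into the stated $(1-k\epsilon)$.
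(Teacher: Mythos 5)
Your proposal follows the paper's proof very closely: induction on $k$, the case split on whether $S$ meets the branching set $Z$, replacing the solution vertex $w$ by the bag representative $x$ in the hard case, and layering fresh random labels at each of the $k$ recursive levels to handle ${\sf T_{small}}$ exactly (so the probability multiplies to $e^{-\Theta(k^3 rd/\epsilon)} \ge e^{-2k^2 k^\star dr/\epsilon}$). The probability bookkeeping and the ${\sf T_{large}}$ bucket arithmetic are essentially the paper's.

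One step is stated incorrectly, though it is easy to repair and the paper in fact handles it differently. In the hard case you write that since no $s\in S$ lies in ${\sf AHD}^G_{\beta_j}(N_j(x))$, it follows that $|N_j(s)\cap N_j(x)|<|N_j(x)|/\beta$ for every $s\in S$, hence $|N_j(x)\cap N_j(S\setminus\{w\})|<\epsilon|N_j(x)|$. But membership in ${\sf AHD}^G_{\beta_j}(\cdot)$ is the conjunction of being $\beta_j$-high-degree \emph{and} having $|N_j(s)|\ge d$. So $s\notin{\sf AHD}$ only tells you that \emph{either} $|N_j(s)\cap N_j(x)|<|N_j(x)|/\beta$ \emph{or} $|N_j(s)|<d$; the latter vertices contribute up to $d$ each. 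The correct bound, as in the paper, is $|N_j(x)\cap N_j(S\setminus\{w\})| \le \ell d + \ell\,|N_j(x)|/\beta$, and the extra $\ell d$ term has to be absorbed using the fact that for every surviving $j\in{\sf T_{large}}$ one has $|N_j(x)|>2kd/\epsilon$, so $\ell d\le\tfrac{\epsilon}{2}|N_j(x)|$. This is not a constant-tracking issue; it is a logical slip in characterizing the set $Z$ that, if not fixed, makes the stated inequality false. With this repair, your argument matches the paper's.
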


\begin{proof}
	We prove it by induction on $k$. 
	
	\noindent {\em Base Case:} When $k=0$, then we cannot cover any vertex; thus the statement holds trivially.  
	
	\noindent{\em Induction Hypothesis:} Suppose that the claim is true for $k\leq \ell-1$. 
	
	\noindent {\em Inductive Step:} Next, we prove the claim for $k=\ell$. Let $S\subseteq V(G)$ such that $|S|=\ell$ and for every $j\in [r]$, $|N_j(S)|=t^\star_j \geq t_j$.  We consider the following two cases. 
	\begin{description}
		\item[Case 1.] $S \cap Z = \emptyset$.  Suppose that ${\sf T_{small}}=[r]$, i.e., for all $j\in [r]$, $t_j \leq \nicefrac{2\ell^2d}{\epsilon}$.  Then, $N(S)$ is colorful with probability at least $e^{\nicefrac{-2\ell^2d}{\epsilon}}$. Thus, $S$ has non-empty intersection with at least one bag $A(\gamma)$, where $\gamma \subseteq {\sf labels}$, with probability at least $e^{\nicefrac{-2\ell^2d}{\epsilon}}$. Let $x\in S\cap A(\gamma)$.  Let $p$ be an arbitrary vertex in $A(\gamma)$ selected in Step~\ref{sep:vertex_small_cov}  of the algorithm. Due to the construction of the bucket $A(\gamma)$, we know that $label(N(p)\cap B_{\sf small})=label(N(x)\cap B_{\sf small})=\gamma$.  Let $J\subseteq {\sf T_{small}}$ such that label of at least one vertex of $B_j$, $j\in J$, is in $\gamma$. 
		Note that $x$ and $p$ does not cover any vertex with color $j\in {\sf T_{small}}\setminus J$. 
		Due to induction hypothesis, $|N_j(S_p)|\geq t_j-d_j^G(p)$ with probability at least $e^{\nicefrac{-2(\ell-1)^2k^\star dr}{\epsilon}}$, for every $j\in [r]$. Hence,  for every $j \in [r]$, $|N_j(\{p\}\cup S_p)| \geq t_j$ with probability at least $e^{\nicefrac{-2\ell^2k^\star dr}{\epsilon}}$.
		
		Next, we consider the case when ${\sf T_{large}}\neq \emptyset$. Then, for every $j\in {\sf T_{large}}$, there exists at least one vertex $v \in S$ such that $d_j(v)\geq \nicefrac{2\ell d}{\epsilon}$. Thus,  $S$ has non-empty intersection with at least one bag $A^\gamma(\mathbf v) \in {\cal A}$. Let $x\in S \cap A^\gamma(\mathbf v)$. Let $S'=S \setminus \{x\}$. Clearly, for every $j\in [r]$, $|N_j^G(S')|\geq t_j-d_j^G(x)$. 
		Let $p$ be an arbitrary vertex in $A^\gamma(\mathbf v)$ selected in Step~\ref{sep:arbitrary_vertex} of the algorithm.  Furthermore, note that $S\cap Z =\emptyset$. Thus, for every $j\in {\sf T_{large}}$ and $w\in S$,  either $N_j(w) < d$ or $|N_j(w)\cap N_j(p)| < \nicefrac{|N_j(p)|}{\beta}$. Thus, for every $j\in {\sf T_{large}}$,
		
		\begin{equation*}
			\begin{split}
				|N_j^{G_p}(S')| & = |N_j^{G}(S')\setminus N(p)| \\
				& \geq t_j-d_j^G(x) -\ell d - \ell \frac{|N_j(p)|}{\beta} \\
				& \geq t_j-d_j^G(x) -\ell d - \epsilon |N_j(p)|
			\end{split}
		\end{equation*}
		
		Due to induction hypothesis, for every $j\in {\sf T_{large}}$, $|N_j^{G_p}(S_p)| \geq (1- (\ell -1)\epsilon)(t_j-d_j^G(x) -\ell d - \epsilon |N_j^G(p)|)$. Next, we argue that for every $j\in {\sf T_{large}}$,
		$|N_j(\{p\} \cup S_p)|\geq (1-\ell \epsilon)t_j$. 
		Since $p$ and $x$ belong to the same bag, $d_j(x)\leq (1+\epsilon)d_j(p)$. Note that 
		
		\begin{equation*}
			\begin{split}
				|N_j(\{p\} \cup S_p)| & = |N_j^G(\{p\}| + |N_j^{G_p}(S_p)| \\
				& \geq d_j^G(p) + (1-(\ell -1)\epsilon)(t_j-d_j^G(x) -\ell d - \epsilon d_j^G(p)) \\
				& \geq d_j^G(p)+ (1-(\ell -1)\epsilon)(t_j-(1+\epsilon)d_j(p)-\ell d-\epsilon d_j^G(p)) \\
				& \geq (1-(1-\epsilon)(1+2\epsilon'))d_j^G(p)+(1-\epsilon)(1-(\ell -1)\epsilon)t_j \\
				& \geq (1-\ell \epsilon)t_j 
			\end{split}
		\end{equation*}
		
		
		%
		%
		%
		%
		%
		We next argue the claim for $j\in {\sf T_{small}}$. Using the same argument as above (when we considered ${\sf T_{small}}=[r]$), we have that for every $j\in {\sf T_{small}}$, $|N_j(\{p\}\cup S_p)| \geq t_j$ with probability at least $e^{\nicefrac{-2\ell^2k^\star dr}{\epsilon}}$. 
		
		\item[Case 2] $S \cap Z \neq \emptyset$. Let $x\in S \cap Z$. Let $S' = S \setminus \{x\}$.  For every $j\in {\sf T_{large}}$, clearly, $|N_j(S')|\geq t_j - d_j^G(x)$, otherwise, $S$ is not a solution to $\cI$. Thus, $\cI_x$ is a \yin to \pbds, as $S'$ is a solution to $\cI_x$. 
		Hence, due to our induction hypothesis, we know that Algorithm~\ref{alg:kdd-free} finds a set $S_x \subseteq V(G_x)$ of size $\ell -1$ such that for every $j\in [r]$, $|N_j(S_x)|\geq (1-(\ell-1)\epsilon)(t_j-d_j^G(x))$. Thus, for every $j\in {\sf T_{large}}$,
		\[|N_j^G(\{x\}\cup S_x)|  = |N_j^G(x)| + |N_j^{G_x}(S_x)| \geq (1-\ell \epsilon)t_j\]
		
		For every $j\in {\sf T_{small}}$, the argument is same as in Case 1.
		Since there exists an element $x\in Z$ such that $|N_j^G(\{x\}\cup S_x)|\geq (1-\ell \epsilon)t_j$, our algorithm returns one such set. 
		%
		%
		%
	\end{description}
\end{proof}

Thus, we obtain the following result by invoking Algorithm~\ref{alg:kdd-free} with $\epsilon'=\nicefrac{\epsilon}{k}$. 
\begin{theorem}\label{thm:kdd-free}
	There exists a randomized algorithm that runs in $2^{\Oh(\frac{k^4rd\log r}{\epsilon})}(n+m)^{\Oh(1)}$ time, and given a \yin $(\cI= (G = (A \uplus B, E), [r], f, t, k,\epsilon))$ of \pbds where $G$ is a $K_{d,d}$-free graph, finds a set $S\subseteq V(G)$ of size at most $k$ such that for every $j\in {\sf T_{large}}$, $|N_j(S)|\geq (1-\epsilon)t_j$, and  for every $j\in {\sf T_{small}}$, $|N_j(S)|\geq t_j$   with probability at least $e^{\nicefrac{-2k^3k^\star dr}{\epsilon}}$. 
\end{theorem}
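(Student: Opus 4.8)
The plan is to obtain \Cref{thm:kdd-free} as an essentially immediate corollary of \Cref{lem:kdd-free} by rescaling the accuracy parameter, together with a routine accounting of the running time of \Cref{alg:kdd-free}. Concretely, I would run \Cref{alg:kdd-free} with $\epsilon' := \epsilon/k$ in place of $\epsilon$ (the partition of $[r]$ into ${\sf T_{small}}, {\sf T_{large}}$, the two bucketing subroutines, and the labelling step all still make sense, only the numerical thresholds change). Applying \Cref{lem:kdd-free} with parameter $\epsilon'$, a single run outputs a set $S \subseteq V(G)$ with $|S| \le k$ such that $|N_j(S)| \ge (1-k\epsilon') t_j = (1-\epsilon) t_j$ for every $j \in {\sf T_{large}}$, and $|N_j(S)| \ge t_j$ for every $j \in {\sf T_{small}}$, with probability at least $e^{-2k^2 k^\star d r/\epsilon'} = e^{-2k^3 k^\star d r/\epsilon}$, which is exactly the guarantee in the statement.

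For the running time, observe that \Cref{alg:kdd-free} is a branching recursion whose depth is $k$ (each recursive call decrements $k$), so it suffices to bound the branching factor $|Z|$. Recall $Z = \bigcup_{A^\gamma(\mathbf v) \in \cal A} Z_{\mathbf v}$ together with one additional vertex from each bag of the form $A(\gamma)$. From the bound on the number of bags given just before \Cref{alg:kdd-free} (with $\epsilon'$ in place of $\epsilon$), the number of bags is at most $2^{\Oh(k^3 r d/\epsilon)} \cdot (\log_{(1+\epsilon/k)} m + 2)^{r}$; using $\log_{(1+\epsilon/k)} m + 2 \le \tfrac{3k\ln m}{\epsilon}$ and splitting into the cases $r \ge \log_{(1+\epsilon/k)} m$ and $r < \log_{(1+\epsilon/k)} m$, the last factor is at most $r^{\Oh(r)} \cdot (n+m)^{\Oh(1)}$ — it is the first case that contributes the $r^{\Oh(r)} = 2^{\Oh(r\log r)}$ term, and hence the $\log r$ in the exponent of the final bound. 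For each bag we pick a single vertex $x$, and by the bound $|{\sf AHD}^G_{\beta_j}(X)| \le (d-1)(2\beta)^{d-1}$ (Lemma 4.2 of \cite{DBLP:conf/soda/0001KPSS0U23}) applied with $\beta = k/\epsilon' = k^2/\epsilon$ to each of the $\le r$ colors of ${\sf T_{large}}$, we get $|Z_{\mathbf v}| \le r(d-1)(2k^2/\epsilon)^{d-1} + 1 = (k/\epsilon)^{\Oh(d)}\, r$. Multiplying these quantities gives $|Z| \le 2^{\Oh(k^3 r d \log r/\epsilon)} \cdot (n+m)^{\Oh(1)}$; raising this to the $k$-th power (the recursion depth) and accounting for the polynomial per-node work (bucketing, computing the ${\sf AHD}$ sets, and \textsc{PruneInstance}) yields the claimed $2^{\Oh(k^4 r d \log r/\epsilon)} (n+m)^{\Oh(1)}$ running time after a routine simplification.

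I do not expect a real obstacle here: all the combinatorial work is already carried out inside \Cref{lem:kdd-free}, and the only things to be careful about are (i) the $\epsilon \mapsto \epsilon/k$ rescaling, which converts the $(1-k\epsilon)$-guarantee of the lemma into a $(1-\epsilon)$-guarantee at the cost of shrinking the success probability and inflating the running-time exponent by a factor $k$, and (ii) the bookkeeping for $|Z|$, where the main choice is how to absorb $(\log_{(1+\epsilon/k)} m)^r$ — the two-case split above is the clean way and is precisely the source of the $\log r$ slack the authors flagged as the price of using $\Oh(\log m/\epsilon)$ rather than $\Oh(\log k/\epsilon^2)$ buckets per color. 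One could additionally remark that the random labelling in the ``Separation of small cover'' step can be replaced by iterating over an appropriate $(|B_{\sf small}|, \tfrac{2k^3 r d}{\epsilon})$-perfect hash family \cite{alon1995color,fomin2014efficient}, yielding a deterministic algorithm with the same asymptotic running time; but the randomized formulation is all that \Cref{thm:kdd-free} asserts.
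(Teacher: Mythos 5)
Your proposal is correct and follows essentially the same route as the paper: invoke \Cref{lem:kdd-free} with $\epsilon' = \epsilon/k$, observe that the $(1-k\epsilon')$-guarantee becomes $(1-\epsilon)$ and the success probability becomes $e^{-2k^2 k^\star dr/\epsilon'} = e^{-2k^3 k^\star dr/\epsilon}$, then bound the running time by multiplying the per-node branching factor $|Z|$ (bags $\times$ $|{\sf AHD}|$-sets) over the recursion of depth $k$. The only added value beyond the paper's terse proof is that you spell out the two-case accounting for the $(\log_{(1+\epsilon/k)} m)^r$ term, which justifies the $\log r$ factor in the exponent that the paper asserts but does not expand.
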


%

\begin{proof}
	We call Algorithm~\ref{alg:kdd-free} with $\epsilon'=\nicefrac{\epsilon}{k}$. The correctness follows due to Lemma~\ref{lem:kdd-free}. Recall that the number of bags is upper bounded by $2^{\frac{2k^3rd}{\epsilon}}(1+(\log_{(1+\epsilon)}m +1)^r)$. Furthermore, for every $x\in A^{\gamma}({\mathbf v})$, $|{\sf AHD}_{\beta_j}^G(N(x))|\leq (d-1)(2\beta)^{d-1}$. Thus, in every recursive call,  $$|Z| \leq 2^{\frac{2k^3rd}{\epsilon}}\big(1+(d-1)(2\beta)^{d-1}(1+(\log_{(1+\epsilon)}m +1)^r)\big)$$
	
	Since the number of recursive calls is bounded by $k$, the running time is $2^{\Oh(\frac{k^4rd\log r}{\epsilon})}(n+m)^{\Oh(1)}$.
\end{proof}

We derandomize this algorithm using $(p,q)$-perfect hash family to obtain a deterministic algorithm in the following theorem.


\begin{theorem}
	\label{thm:introkdd}
	There exists a deterministic algorithm that runs in $2^{\Oh(\frac{k^4r^2d\log r}{\epsilon})} \cdot (n+m)^{\Oh(1)}$ time, and given a \yin $(\cI= (G = (A \uplus B, E), [r], f, t, k,\epsilon))$ of \pbds where $G$ is a $K_{d,d}$-free graph, finds a set $S\subseteq A$ of size at most $k$ such that, for every $j\in [r]$, $|N_j(S)|\geq (1-\epsilon)t_j$. 
\end{theorem}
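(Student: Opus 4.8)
The plan is to derandomize the algorithm behind \Cref{thm:kdd-free} (\Cref{alg:kdd-free}) by observing that its \emph{only} randomized step is the \emph{Separation of small cover} step, in which the vertices of $B_{\sf small}$ are labelled uniformly at random; every other step --- computing ${\sf T_{small}}$ and ${\sf T_{large}}$, the two bucketing subroutines, selecting an arbitrary representative of a bag, and branching over the set $Z$ --- is already deterministic. It therefore suffices to replace the random labelling by an exhaustive search over a \emph{$(p,q)$-perfect hash family}, i.e.\ a family $\mathcal{H}$ of functions $[p]\to[q]$ such that for every $q$-element subset $X\subseteq[p]$ some $h\in\mathcal{H}$ is injective on $X$; by the constructions underlying color coding~\cite{alon1995color,fomin2014efficient} such an $\mathcal{H}$ of size $e^q q^{\Oh(\log q)}\log p = 2^{\Oh(q)}\log p$ can be computed in time $2^{\Oh(q)}\,p\log p$.

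Concretely, I would first pin down the vertices that must receive pairwise distinct labels. Running \Cref{alg:kdd-free} with $\epsilon'=\epsilon/k$ as in \Cref{thm:kdd-free}, at a recursive call with current budget $k'\le k$ each small color has residual demand at most $\tfrac{2(k')^2 d}{\epsilon'}\le \tfrac{2k^3 d}{\epsilon}$, and the hypothetical solution needs only $t_j$ of the covered vertices of each small color $j$. Since the residual demands are non-increasing along any root-to-leaf path of the recursion and the relevant covered sets can be taken nested, one may fix once and for all a set $X$ of relevant vertices --- at most $t_j\le\tfrac{2k^3 d}{\epsilon}$ per color, so $|X|\le q\coloneqq\tfrac{2k^3 rd}{\epsilon}$ in total --- whose colorfulness is the \emph{only} property the analysis of \Cref{lem:kdd-free} uses. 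The derandomized algorithm computes a $(|B|,q)$-perfect hash family $\mathcal{H}$, and for each $h\in\mathcal{H}$ runs the deterministic version of \Cref{alg:kdd-free} in which the random labelling is replaced by the restriction of $h$ to the current $B_{\sf small}$ (with bucket label-sets now drawn from the full range $[q]$); it returns the first output satisfying all coverage demands, and reports \no otherwise.

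Correctness is then immediate: by the defining property of $\mathcal{H}$ there is some $h$ injective on $X$, and for that $h$ the induction in \Cref{lem:kdd-free} goes through verbatim because at every recursion level the labelling is colorful on the relevant covered vertices; and for every $h$ the algorithm only ever returns a genuinely feasible set. For the running time, enumerating $\mathcal{H}$ multiplies the bound of \Cref{thm:kdd-free} by $|\mathcal{H}|=2^{\Oh(q)}\log|B|$, and collecting terms (absorbing $\log$-factors into $(n+m)^{\Oh(1)}$ via the standard estimate $(\log n)^k\le 2^{\Oh(k\log k)}+n^{\Oh(1)}$) yields the claimed deterministic running time.

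\textbf{Main obstacle.} The delicate point is the second step: showing that a single set $X$ of size $\mathrm{poly}(k,r,d)/\epsilon$ serves \emph{all} recursive calls at once, given that colors may migrate between ${\sf T_{small}}$ and ${\sf T_{large}}$ as the budget shrinks and that a globally fixed labelling must remain compatible with the bucketing performed at every level. Once $X$ and its size bound are established, the rest is a routine reprise of the proof of \Cref{lem:kdd-free} together with standard perfect-hash-family bookkeeping.
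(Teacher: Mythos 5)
Your proposal matches the paper's own proof essentially exactly: the paper derandomizes Theorem~\ref{thm:kdd-free} by constructing a $(|B_{\sf small}|,\,\Oh(k^2rd/\epsilon))$-perfect hash family via the Naor--Schulman--Srinivasan construction and rerunning Algorithm~\ref{alg:kdd-free} once for each function in the family in place of the random labelling of $B_{\sf small}$. The ``main obstacle'' you flag --- that a single globally fixed labelling must serve every recursion level even though Algorithm~\ref{alg:kdd-free} nominally re-labels at each call, and that colors can migrate from ${\sf T_{large}}$ to ${\sf T_{small}}$ as residual demands shrink --- is a genuine subtlety, but the paper's two-line proof does not address it either; so you are not missing a step the paper supplies, you are flagging a point the paper leaves implicit.
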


\begin{proof}[Proof of Theorem~\ref{thm:introkdd}]  
	\begin{definition}[$(p,q)$-perfect hash family]{\rm (\cite{alon1995color})}
		For non-negative integers $p$ and $q$, a family of functions $f_1,\ldots,f_t$ from a universe $\cU$ of size $p$ to a universe of size $q$ is called a $(p,q)$-perfect hash family, if for any subset $S\subseteq \cU$ of size at most $q$, there exists $i\in [t]$ such that $f_i$ is injective on $S$. 
	\end{definition}
	
	We can construct $(p,q)$-perfect hash family using the following result. 
	
	\begin{proposition}[\cite{NaorSS95,DBLP:books/sp/CyganFKLMPPS15}]\label{prop:hash family construction}
		There is an algorithm that given $p,q \geq 1$ constructs a $(p,q)$-perfect hash family of size $e^qq^{\Oh(\log q)}\log p$ in time $e^qq^{\Oh(\log q)}p \log p$.
	\end{proposition}
	
	Let $\cI$ be an instance of \pbds. Instead of taking a random coloring for $B_{\sf small}$ in Algorithm~\ref{alg:kdd-free}, we create a $(|B_{\sf small}|,\frac{2k^2rd}{\epsilon})$-perfect hash family ${\cal F}$, and run the algorithm for every label function $f\in {\cal F}$. Using this, we get the proof of Theorem~\ref{thm:introkdd}. 
\end{proof}

\section{Handling Matroid Constraints}\label{sec:matroidc}

Recall that we want to find a subset $A' \subseteq A$ that is independent in the given $\cM$ of rank at most $k$. Without loss of generality, we assume that $k$ equals the rank of $\cM$, i.e., the solution is a base of $\cM$ by truncating the matroid appropriately. Note that it is straightforward to work with the truncated matroid, given oracle access to the original matroid.
By slightly abusing the notation, we use $\cM$ to denote the appropriately truncated matroid, if necessary.

\subsection{$K_{d,d}$-free Case} \label{subsec:kdd-matroid}
Now, we are ready to discuss our algorithm. Let $(\cI= (G = (A \uplus B, E), [r], f, t, k,\epsilon),\mathcal{M},S_{par})$ be an instance of \pmbds, where $\mathcal{M}=(A, I)$ is a matroid. The algorithm is largely similar to the one in Section~\ref{sec:kdd-free}, with a few modifications as described next. 
\medskip

\noindent\textbf{Modifications:} 
In addition to the standard inputs for Algorithm~\ref{alg:kdd-free}, the modified algorithm instance also receives (oracle access to) a matroid $\mathcal{M'}=(A',I')$. Here, $\cM'$ is obtained by contracting the original matroid $\cM$ on the set of elements $Q$ added so far leading to this recursive call. From the definition of matroid contraction, it follows that any independent set in $\cM'$, along with $Q$, is independent in the original matroid $\cM$. Due to our initial truncation, we can inductively assume that $\cM'$ has rank exactly $k$, where $0 \le k \le k^{\star}$, where $k^{\star}$ is the \emph{original} budget (and thus the rank of the original matroid $\cM$).
Furthermore, rather than just searching for a set of size $k$ that satisfies the coverage requirements for each color class, the algorithm seeks a set $S$ that meets two conditions: first, it must be independent in $\mathcal{M'}$, and second, it must have a neighborhood size $N_j(S) \geq (1-\epsilon)t_j$, thus satisfying the coverage requirements for each color class $j \in [r]$. 

Similar to Algorithm~\ref{alg:kdd-free}, we start by guessing a bag $A^{\gamma}(\mathbf{v})$ that contains a vertex of an optimal solution $\mathsf{OPT}$ (i.e., we branch on all such bags). But instead of selecting any arbitrary vertex in  $A^{\gamma}(\mathbf{v})$ (as done in line 5 of Algorithm~\ref{alg:kdd-free}), we compute a $R(A^{\gamma}(\mathbf{v})) \rep{k-1} A^{\gamma}(\mathbf{v})$. \Cref{lem:oraclerepset} implies that, $|R(A^{\gamma}(\mathbf{v}))| \le k$, and it can be computed in polynomial time. Next, for every $v_i \in  R(A^{\gamma}(v))$, we compute the sets ${\sf AHD}_{\beta _j}^G (N(v_i))$. Next, we define 
\begin{equation}
	Z_{\mathbf{v}} \coloneqq \bigcup_{v_i \in R(A^{\gamma}(v)} \bigcup_{j \in \mathsf{T}_{large}} \{v_i\} \cup {\sf AHD}_{\beta _j}^G (N(v_i) \label{eqn:repset-z}
\end{equation}
We branch on such a $y\in Z_{\mathbf{v}}$ and update the instance $\mathcal{I}'$ passed to the next iteration of the algorithm as was done in the $\mathsf{PruneInstance}$ procedure, with the following modification. 
First, we obtain $\cM'' \coloneqq \cM' / y$, i.e., $\cM''$ is obtained by contracting $\cM'$ on the vertex $y$ on which we are branching. Note that one can simulate oracle access to $\cM''$ using the oracle access to $\cM'$ by always including $y$ in the set being queried.
Note that the rank of $\cM''$ is $k-1$. 
For the sake of formality, we describe the explicit changes made to the algorithm.
\medskip

\noindent \textbf{Exact changes}: 
\begin{itemize}
	\item Line 8 of Algorithm~\ref{alg:kdd-free} is replaced by: Compute $R(A^{\gamma}(\mathbf{v})) \rep{k-1} A^{\gamma}(\mathbf{v})$ using Lemma \ref{lem:oraclerepset}. 
	\item In Line 10 of Algorithm~\ref{alg:kdd-free}, the set $Z_v$ is defined as in (\ref{eqn:repset-z}). 
	
	\item In $\mathsf{PruneInstance}$ ($\mathcal{I},y$), we also compute the contracted matroid $\cM''$ by contracting $\cM'$ on the element $y$ as mentioned above. 
\end{itemize}
\medskip

\noindent\textbf{Correctness:} We sketch the modified algorithm's correctness through induction on the parameter $k$. This approach is similar to how we proved the correctness of Algorithm~\ref{alg:kdd-free}. For the base case where $k = 0$, correctness is trivial. Assuming that the algorithm correctly returns an approximate solution when $k \leq i$, we will prove the correctness for the case of $k = i+1$. For that purpose we consider the following two scenarios assuming the input to be a \textsf{Yes}-instance:

\begin{itemize}
	\item Case 1: ($\textsf{OPT} \cap Z \neq \emptyset$) \
	Let $x\in Z \cap \textsf{OPT}$. By induction our algorithm on the instance returned by $\mathsf{PruneInstance}$ ($\mathcal{I},x$) with the contracted matroid $\mathcal{M''} = \cM/x$, returns a set of size $S'$ of size $k-1$ that satisfies the approximate coverage requirements (as argued for  Algorithm~\ref{alg:kdd-free}). By the definition of $\cM''$, it follows that $S' \cup \LR{x}$ is independent in $\cM'$.
	\item Case 2: ($\textsf{OPT} \cap Z = \emptyset$) \
	Recall that $A^{\gamma}(\mathbf{v}) \cap \textsf{OPT} \neq \emptyset$, i.e., $\OPT$ selects at least one vertex, say $x$ from $A^{\gamma}(\mathbf{v})$. However, $\OPT \cap Z = \emptyset$, which implies that $x \not\in Z$, which, in particular, implies that $x \not\in R(A^{\gamma}(\mathbf{v}))$. In this case, based on the correctness arguments of Algorithm~\ref{alg:kdd-free}, we know that any vertex $y \in A^{\gamma}(\mathbf{v}) \cap Z$ serves as a suitable ``approximate replacement'' for $x$, as far as the coverage requirement is concerned. 
	
	However, here we have an additional requirement that that the solution be an independent set in $\cM'$. To this end, let $\OPT' = \OPT \setminus \LR{x}$. Note that $|\OPT'| = k-1$, and $\OPT' \cup \LR{x}$ is independent in $\cM'$. It follows that, there exists some $y \in R(A^{\gamma}(\mathbf{v})) \rep{k-1} A^{\gamma}(\mathbf{v})$ such that $y \cap \OPT' = \emptyset$ and $y \cup \OPT'$ also independent in $\cM'$. Thus, using inductive hypothesis, the solution returned by the recursive call corresponding to $y$, combined with $y$, is (1) independent in $\cM'$, and (2) satisfies the coverage requirements up to an $1-\epsilon$ factor.
\end{itemize}

\noindent\textbf{Running time:}
Note that the branching factor in line 10 of Algorithm~\ref{alg:kdd-free} increases by at most $k$. This adds a multiplicative factor of $k^k$ to the running time, which is absorbed into the FPT factor. Furthermore, the time required to compute a representative set and $Z_{\mathbf{v}}$ being polynomial-time for any bag $A^{\gamma}(v)$ is absorbed into the polynomial factor. Thus, we obtain the following result.

\begin{theorem}\label{thm:kdd-free}
	There exists a randomized algorithm that runs in $2^{\Oh(\frac{k^3rd\log r}{\epsilon})}(n+m)^{\Oh(1)}$ time, and given a yes-instance $\cI= (G = (A \uplus B, E), [r], f, t, k,\epsilon,\mathcal{M})$ of \mfmaxcov where $G$ is a $K_{d,d}$-free graph, finds a set $S\subseteq V(G)$ of size at most $k$ that is independent in $\mathcal{M}$ such that for every $j\in {\sf T_{large}}$, $|N_j(S)|\geq (1-\epsilon)t_j$, and  for every $j\in {\sf T_{small}}$, $|N_j(S)|\geq t_j$   with probability at least $e^{\nicefrac{-2k^2k^\star dr}{\epsilon}}$; otherwise if $\cI$ is a no-instance, then the algorithm either returns {\sf NO} or return a set $S\subseteq V(G)$ such that for every $j\in {\sf T_{large}}$, $|N_j(S)|\geq (1-\epsilon)t_j$, for every $j\in [r]$.
\end{theorem}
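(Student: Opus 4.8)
The plan is to analyze the modification of Algorithm~\ref{alg:kdd-free} described above: the arbitrary bag-vertex selected in line~\ref{sep:arbitrary_vertex} is replaced by a representative family $R(A^\gamma(\mathbf v))\rep{k-1}A^\gamma(\mathbf v)$ computed via \Cref{lem:oraclerepset}; the candidate set $Z_{\mathbf v}$ is the enlarged set of~\eqref{eqn:repset-z}, collecting $\{v_i\}\cup {\sf AHD}_{\beta_j}^{G}(N(v_i))$ over all $v_i\in R(A^\gamma(\mathbf v))$ and all $j\in {\sf T_{large}}$ (and the same replacement is applied to the bags $A(\gamma)$ used when ${\sf T_{large}}=\emptyset$); and the recursive call on a branch vertex $y$ additionally receives the contracted matroid $\cM''=\cM'/y$. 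Correctness is proved by induction on $k$, strengthening the inductive statement of \Cref{lem:kdd-free} with the extra invariant that the returned set is independent in the current (contracted, truncated) matroid $\cM'$. The base case $k=0$ is trivial, and the probability factor $e^{-2k^2k^\star dr/\epsilon}$ arises exactly as in \Cref{lem:kdd-free} from the event, analyzed with \Cref{prop:success-prob}, that the color-coding step separates $N(S)\cap B_{\sf small}$ with pairwise distinct labels; the matroid bookkeeping introduces no new randomness, since both \Cref{lem:oraclerepset} and matroid contraction/truncation are deterministic given the oracle.

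For the inductive step, fix a yes-witness $S$ that is independent in $\cM'$, has $|S|=k$, and has $|N_j(S)|=t^\star_j\ge t_j$ for every $j$; condition on the color-coding being successful, which (as in \Cref{lem:kdd-free}) ensures $S$ intersects one of the bags, say $A^\gamma(\mathbf v)$ (or $A(\gamma)$). If $S\cap Z\ne\emptyset$, pick $x\in S\cap Z$: then $S\setminus\{x\}$ witnesses that $\textsc{PruneInstance}(\cI,x)$ paired with $\cM'/x$ is a yes-instance, induction returns $S_x$ independent in $\cM'/x$ with the stated approximate coverage, and $S_x\cup\{x\}$ is independent in $\cM'$ by the definition of contraction, the coverage arithmetic being verbatim that of the corresponding case of \Cref{lem:kdd-free}. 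The substantive case is $S\cap Z=\emptyset$. Correctness of the guess yields $x\in S$ lying in the guessed bag, with $x\notin R(A^\gamma(\mathbf v))$; applying the representative-family property to the $(k-1)$-set $B:=S\setminus\{x\}$ --- for which $\{x\}\cup B=S$ is independent and $\{x\}\in A^\gamma(\mathbf v)$ --- produces $y\in R(A^\gamma(\mathbf v))$ with $\{y\}\cup(S\setminus\{x\})$ independent in $\cM'$. Since $y\in R(A^\gamma(\mathbf v))\subseteq Z_{\mathbf v}\subseteq Z$ the algorithm branches on $y$, and $y\notin S$. Because $S\cap Z=\emptyset$ and ${\sf AHD}_{\beta_j}^{G}(N(y))\subseteq Z_{\mathbf v}$ for each $j\in {\sf T_{large}}$, no $w\in S\setminus\{x\}$ can satisfy both $|N_j(w)|\ge d$ and $|N_j(w)\cap N_j(y)|\ge |N_j(y)|/\beta$ --- precisely the structural fact used in the ``$S\cap Z=\emptyset$'' case of \Cref{lem:kdd-free}. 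As $y$ and $x$ share a bag, $d_j(y)$ is within a $(1+\epsilon)$ factor of $d_j(x)$ for $j\in {\sf T_{large}}$ and $y$ covers the same ${\sf T_{small}}$-labels as $x$, so the coverage estimate of \Cref{lem:kdd-free} carries over verbatim with $y$ in the role of the arbitrary vertex and $S\setminus\{x\}$ in the role of $S'$: $\textsc{PruneInstance}(\cI,y)$ with $\cM'/y$ is a yes-instance witnessed by $S\setminus\{x\}$, induction returns $S_y$ independent in $\cM'/y$ with the required coverage, and $S_y\cup\{y\}$ is independent in $\cM'$, has the right coverage, and passes the algorithm's final verification.

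For the running time, \Cref{lem:oraclerepset} bounds $|R(A^\gamma(\mathbf v))|\le k$ and runs in polynomial time, so $|Z|$ grows by a factor at most $k$ per recursion level compared to the non-matroid Algorithm~\ref{alg:kdd-free}; across the $\le k$ levels this is a multiplicative $k^k$, absorbed into the \FPT factor, while the number of bags and the bound $|{\sf AHD}_{\beta_j}^{G}(N(x))|\le (d-1)(2\beta)^{d-1}$ are inherited from \Cref{sec:kdd-free}, and contracting/truncating the matroid costs only linear overhead per oracle query --- yielding $2^{\Oh(k^3 rd\log r/\epsilon)}\cdot(n+m)^{\Oh(1)}$; rescaling $\epsilon$ exactly as in the non-matroid case upgrades the $(1-k\epsilon)$ guarantee for ${\sf T_{large}}$ to the claimed $(1-\epsilon)$. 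The no-instance clause is immediate, since the algorithm only ever outputs a set that is independent in the relevant matroid and passes the coverage check, so whatever it returns is a valid approximate solution. The main obstacle --- the only genuinely new ingredient beyond \Cref{lem:kdd-free} --- is arguing that the representative family simultaneously supplies a branch vertex $y$ that is matroid-compatible with $S\setminus\{x\}$ \emph{and} still behaves like an arbitrary bag-vertex for the coverage estimate; the latter comes for free precisely because the ${\sf AHD}$-sets of \emph{every} representative-family vertex, rather than of a single arbitrarily chosen vertex, were placed into $Z$, so the hypothesis $S\cap Z=\emptyset$ already excludes any vertex of $S$ having large $j$-overlap with $y$.
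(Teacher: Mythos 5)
Your proposal matches the paper's argument in structure and substance: the same representative-family replacement of the arbitrary bag-vertex via \Cref{lem:oraclerepset}, the same enlarged $Z_{\mathbf v}$ collecting ${\sf AHD}$-sets over \emph{all} representatives, the same matroid contraction in \textsc{PruneInstance}, and the same two-case inductive analysis (branch on a vertex of $S\cap Z$ when nonempty; otherwise use the $(k-1)$-representative property applied to $B=S\setminus\{x\}$ to find a matroid-compatible $y\in R(A^\gamma(\mathbf v))$). You make explicit a point the paper leaves implicit but implicitly relies on — namely that the coverage estimate of \Cref{lem:kdd-free} still applies to $y$ precisely because $Z$ already contains ${\sf AHD}_{\beta_j}^G(N(v_i))$ for \emph{every} $v_i$ in the representative family, so $S\cap Z=\emptyset$ rules out high $j$-overlap of any solution vertex with $N_j(y)$ — which is a welcome clarification rather than a deviation.
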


\subsection{Frequency $d$ Case}

\noindent \textbf{Modifications:} In addition to the standard inputs for Algorithm~\ref{alg:freqd}, the modified
algorithm instance is provided a matroid $\mathcal{M}'$ and a partial solution $S_{par}$ as inputs. The algorithm seeks an independent set $S$ in $\mathcal{M}'$ that satisfies the coverage requirements approximately. Similar to Algorithm~\ref{alg:freqd}, we start by guessing a bag $A(\mathbf{v})$ that contains a vertex of an optimal solution \textsf{OPT}. But instead of selecting any arbitrary vertex in $A(\mathbf{v})$ (as done in line
5 of Algorithm 2), we compute a $R(A(\mathbf{v}))\rep{k-1} A(\mathbf{v})$ of size at most $k$ in polynomial-time. We choose a vertex $v$ uniformly at random from this set and subsequently proceed in accordance with the steps outlined in Algorithm~\ref{alg:freqd}. We contend that with a \emph{high} probability, either vertex $v$ or a vertex $w$ (based on the probability distribution $p(w)$) can be included in the gradually constructed solution (produced by an iteration of the same algorithm with smaller $k$) without compromising independence, while still satisfying the approximate coverage requirements. And, the \textsf{PruneInstance} procedure undergoes identical modifications as detailed in the preceding section.

\noindent\textbf{Exact Changes:}
\begin{itemize}
	\item Line 5 of Algorithm~\ref{alg:freqd} is replaced by: Choose a bag $A(\mathbf{v}) \in \mathcal{A}$ uniformly at random. Compute $R(A(\mathbf{v}))\rep{k-1} A(\mathbf{v})$ using Lemma~\ref{lem:oraclerepset}. Uniformly at random select a vertex $v$ from it.
	\item In $\mathsf{PruneInstance}$ ($\mathcal{I},y$): the matroid $\cM'$ passed to the new instance is obtained by contracting $\cM$ on $y$, i.e., $\cM' = \cM/y$.
\end{itemize}
\medskip

\noindent\textbf{Correctness:} We establish the correctness of the algorithm by reasoning that, with a sufficiently high probability at each step, we either choose a vertex from the optimal solution (\textsf{OPT}) or select a vertex that can be added to the resulting solution $\tilde{S}$ while preserving independence and satisfying the approximate coverage requirements. Notice that while in Algorithm  selecting an arbitrary vertex was sufficient, that may not remain true in the presence of a matroid constraint. Since one may not be able to add such a vertex while keeping the set $(\tilde{S}\cup\{u\})$ independent. Hence we compute $R(A(\mathbf{v}))$ that contains at least one vertex that may be added to the returned $\tilde{S}$ while preserving independence. Note that the probability of selecting such a vertex from $R(A(\mathbf{v}))$ is at least $1/k$, and it worsens the success probability by the same factor. We denote an optimal solution by both $S$ and \textsf{OPT} to maintain consistency with the notations from the previous section.
\begin{itemize}
	\item  Case 1: $R(A(\mathbf{v})) \cap S \neq \emptyset$ \\
	In this case, a vertex is chosen $R(A(v)) \cap \textsf{OPT}$ into the solution with a probability of at least $1/2k$.
	\medskip
	
	\item  Case 2: $R(A(\mathbf{v})) \cap S= \emptyset$\\
	If there is a color $j$ and  a $v\in R(A(\mathbf{v}))$ such that $\sum_{w' \in S} |N^G_j(v) \cap N^G_j(w')| \ge \epsilon \cdot |N^G_j(v)|$, then the probability that some vertex $w$ from the set $S$  in line 10 is at least $(1/k)\frac{\epsilon}{r}$. The rest of the argument follows similar to the arguments in \Cref{lem:random-induction} but with a probability worsened by a factor of $1/k$. Otherwise, for all colors $j$, and all vertices $v\in R(A(v))$ it holds that $\sum_{w' \in S} |N_j(v) \cap N_j(w')| \le \epsilon \cdot |N_j(v)|$. But, in this case, it was shown \Cref{lem:random-induction} that $v$ ``approximately plays the role'' of $w \in A(\mathbf{v}) \cap S$ when there is no matroid-constraint. In the matroid-constraint case, we can show that there exists a vertex $v \in R(A(v))$ that not only satisfies the approximate coverage requirements with $\tilde{S}$ but also forms an independent set. And, the probability that such a vertex $v$ is chosen in the branching step is $1/2k$ (probability worsens by a factor of $1/k$).
\end{itemize}
\medskip

\noindent\textbf{Running time:}
Note that the probability of a "good event" in the modified algorithm deteriorates by a maximum factor of $1/k$ at each branching step. This introduces an additional run time of $k^k$. And, the additional time taken to compute a representative family  being polynomial-time for any bag $A^{\gamma}(\mathbf{v})$ is absorbed into the polynomial factor of the algorithm's run time.

\begin{theorem} \label{thm:d-hs-matroid}
	There exists a randomized algorithm that runs in time $ \lr{ \frac{2kdr}{\epsilon} }^k \cdot \lr{\frac{6\log k}{\epsilon^2}}^{kr} \cdot n^{\Oh(1)}$, and given a yes-instance $\cI = (G, [r], f, t, k, \mathcal{M})$ of \mfmaxcov, where each element appears in at most $d$ sets, with high probability, returns a subset $\tilde{S} \subseteq A$ of size at most $k$ with coverage vector $(t''_1, t''_2, \ldots, t''_r)$, such that $t''_j \ge (1-3\epsilon) t_j$ for all colors $j$; otherwise, if $\cI$ is a no-instance, then the algorithm correctly concludes so.
\end{theorem}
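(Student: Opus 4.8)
The plan is to follow the blueprint of \Cref{thm:d-hs} --- in particular the inductive argument behind \Cref{lem:random-induction} --- and graft onto it the matroid machinery already described for the frequency-$d$ case (guessing a bag, computing a $\rep{k-1}$ representative set of the bag, contracting the matroid on the branched vertex). First I would set up the bookkeeping along a recursive path: after discarding loops and truncating $\cM$ so that its rank is exactly $k^\star$, I maintain the invariant that if $Q\subseteq A$ is the set of vertices added so far, then the matroid $\cM'=\cM/Q$ handed to the current call has rank exactly equal to the current budget $k$, and $S\cup Q$ is independent in $\cM$ for every independent set $S$ of $\cM'$. (Contracting one non-loop element lowers the rank by exactly one, and oracle access to $\cM'$ is simulated by always adding $Q$ to the queried set.) With this, since extending a feasible independent set to a base of the truncated $\cM$ only increases coverage, I may assume without loss of generality that at a call with budget $k$ the hypothetical optimum $S$ is a base of $\cM'$ of size $k$.

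Next I would prove the matroid analogue of \Cref{lem:random-induction}: on a call of the modified algorithm on $(\cI,\cM')$ with current budget $k$ and any base $S$ of $\cM'$ with $|N^G_j(S)|=\ttt_j$ for all $j$, the algorithm returns, with probability at least $\lr{\tfrac{1}{L}\cdot\tfrac{1}{k^\star}\cdot\tfrac{\epsilon}{2rd}}^{k}$, an independent set $S'$ of $\cM'$ with $|S'|\le k$ and $|N^G_j(S')|\ge(1-2\epsilon)\min(\ttt_j,t_j)-\tfrac{\epsilon k}{k^\star}t_j$ for every $j$. The induction on $k$ uses the same case split as \Cref{lem:random-induction}, with ``pick an arbitrary $v$ from the guessed bag $A(\mathbf v)$'' replaced by ``compute $R\coloneqq R(A(\mathbf v))$ with $R\rep{k-1}A(\mathbf v)$ via \Cref{lem:oraclerepset} (so $|R|\le k\le k^\star$, in polynomial time), then pick $v\in R$ uniformly at random'', which costs an extra factor $1/k^\star$ per level. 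All coverage computations --- those of \Cref{cl:residual} and of Cases~A/B/C of \Cref{lem:random-induction} --- transfer verbatim, as they only use that the two relevant vertices share a bag; the new content is the independence check. In Case~1 ($R\cap S\neq\emptyset$) we take $v\in R\cap S$, sample $u=v$ with probability $\ge 1/2$, recurse against the base $S\setminus\{v\}$ of $\cM'/v$, and note that $\tilde S$ independent in $\cM'/v$ makes $\tilde S\cup\{v\}$ independent in $\cM'$. In Case~2.1 (some color $j$ and some $v\in R$ with $\sum_{w'\in S}|N^G_j(v)\cap N^G_j(w')|\ge\epsilon|N^G_j(v)|$) we take that $v$, sample some $u=w\in S$ with probability $\ge\epsilon/(2rd)$, and argue as in Case~1 with $w$ in place of $v$. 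The crux is Case~2.2 (for all colors $j$ and all $v\in R$ the overlap is at most $\epsilon|N^G_j(v)|$): fixing $x\in A(\mathbf v)\cap S$ (so $x\notin R$ since $R\cap S=\emptyset$) and $\OPT'\coloneqq S\setminus\{x\}$, the set $\OPT'\cup\{x\}=S$ is independent of size $k$, so the $(k-1)$-representation property gives some $v\in R$ with $\OPT'\cup\{v\}$ independent, and $v\notin\OPT'$ (else $v\in R\cap S$); we sample $u=v$ with probability $\ge 1/2$ and recurse against $\OPT'$, which is a base of $\cM'/v$ and whose coverage in $G\bbslash v$ is controlled by the small-overlap hypothesis exactly as in Cases~A/B/C, so that $\tilde S\cup\{v\}$ is independent in $\cM'$ and meets the coverage bound.

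Finally I would assemble the theorem. A single recursive path runs in polynomial time (it is a chain of depth $k$ with no branching), so with $k=k^\star$, $\ttt_j\ge t_j$ and $L=\lr{\tfrac{6\log k}{\epsilon^2}}^r$, one run succeeds with probability $q=\lr{\tfrac{\epsilon}{2Lk rd}}^{k}$ and then delivers $|N^G_j(S')|\ge(1-2\epsilon)t_j-\epsilon t_j=(1-3\epsilon)t_j$; repeating $\Theta(q^{-1}\log n)$ independent runs and returning any set found that is independent in $\cM$ and covers $(1-3\epsilon)t_j$ for all $j$ (and \textsf{NO} otherwise) boosts the success probability to a constant and yields the stated no-instance behaviour by the usual one-sided-error argument, with total running time $q^{-1}\cdot n^{\Oh(1)}=\lr{\tfrac{2kdr}{\epsilon}}^{k}\cdot\lr{\tfrac{6\log k}{\epsilon^2}}^{kr}\cdot n^{\Oh(1)}$. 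I expect the main obstacle to be not a single deep step but the synchronization of the two analyses in Case~2.2: checking that one and the same $v\in R$ can simultaneously serve as the matroid-compatible substitute for $x$ (via $\rep{k-1}$ applied to $\OPT'$) and the coverage substitute for $x$ (via sharing the bag $A(\mathbf v)$), while the extra $1/k^\star$ factor threads correctly through the recursion and still fits, together with the $\Theta(q^{-1}\log n)$ repetitions, inside the claimed running time.
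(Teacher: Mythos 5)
Your proof follows essentially the same route as the paper's: guess a bag, replace the arbitrary choice of $v$ with a uniformly random pick from a polynomial-time-computable $\rep{k-1}$ representative set (\Cref{lem:oraclerepset}), contract the matroid on the branched vertex in \textsc{PruneInstance}, and thread the coverage analysis of \Cref{lem:random-induction} through the recursion with an extra $1/k^{\star}$ factor per level to absorb picking the correct element of $R$. Your Case~2.2 argument also resolves the synchronization concern you raise --- the $\rep{k-1}$ property supplies a matroid-compatible $v\in R$, the case assumption guarantees \emph{every} $v\in R$ (including that one) satisfies the small-overlap condition, and $R\subseteq A(\mathbf{v})$ ensures bag membership --- which in fact fills in a step that the paper leaves at the level of a sketch.
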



\paragraph{Extension to Intersection of Multiple Linear Matroids.} The above approach can be extended to the more general problem where the solution is required to be an independent set in each of the given matroids $\cM_1, \cM_2, \ldots, \cM_q$ that are all defined over the common ground set $A$, \emph{if all the given matroids are representable over a field $\mathbf{F}$}. The only change here is that, the representative family $R(A^\gamma(\mathbf{v}))$ needs to be computed for the direct sum of the matroids $\cM'_1 \oplus \cM'_2 \oplus \cdots \oplus \cM'_q$, where $\cM'_i$ denotes the contracted version of the $i$-th matroid as defined above. Then, one can use the linear algebraic tools to compute a representative set of the direct sum matroid in a specific manner. For more details, we refer the reader to Marx \cite{DBLP:conf/acid/Marx06}, Section 5.1. Due to this change, the bound on $R(A^{\gamma}(\mathbf{v}))$ becomes $qk$ (from $k$), and the time required to compute this set is at most $2^{\Oh(qk)} \cdot (m+n)^{\Oh(1)}$. This also gets reflected in the running time of the algorithm.

\section{Conclusion}
In this paper, we designed {\sf FPT}-approximation schemes for \mfmaxsat, which is a generalization of the \satcc problem with fairness and matroid constraints. 
In particular, we designed {\sf FPT-AS} for the classes of formulas where the maximum frequency of a variable in the clause is bounded by $d$, and more generally, for $K_{d,d}$-free formulas. Our algorithm for \fmaxcov on the set systems of frequency bounded by $d$ is substantially faster compared to the recent result of Bandyapadhyay et al. \cite{BandyapadhyayFM23}, even for the special case of $d = 2$. We use a novel combination of the bucketing trick and a carefully designed probability distribution in order to obtain this faster \FPTAS. 

Our work naturally leads to the following intriguing questions. Firstly, our approximation-preserving reduction from \satcc (and variants) to \maxcov (and variants) is inherently randomized. Is it possible to derandomize this reduction? A similar question of derandomization is also interesting for our aforementioned algorithm for \fmaxcov on bounded-frequency set systems. In this case, can we design an \FPTAS for the problem running in time single-exponential in $k$? 


\paragraph{Acknowledgments.} We thank Petr Golovach and an anonymous reviewer for \Cref{lem:oraclerepset}. 

\bibliographystyle{siam}
\bibliography{referencesA}



\end{document}